\newcolumntype{+}{!{\vrule width 2pt}}
\newlength\savedwidth
\newcommand\thickhline{\noalign{\global\savedwidth\arrayrulewidth\global\arrayrulewidth 2pt}%
\hline
\noalign{\global\arrayrulewidth\savedwidth}}
\DeclareMathOperator{\supp}{supp}
\DeclareMathOperator{\diag}{diag}
\DeclareMathOperator{\im}{im}
\newcommand{\R}{\mathbb{R}}
\renewcommand{\k}{\kappa}
\newtheorem{thm}{Theorem}[section] 
\newtheorem{cor}[thm]{Corollary} 
\newtheorem{prop}[thm]{Proposition} 
\newtheorem{lemma}[thm]{Lemma}
\newtheorem{algo}[thm]{Algorithm}
\theoremstyle{definition}
\begin{document}

\title[Connectivity of the parameter region of multistationarity]{Topological descriptors of the parameter region of multistationarity: deciding upon connectivity}

\author{Máté L. Telek}
\address{Department of Mathematical Sciences, University of Copenhagen,
Universitetsparken 5,
2100 Copenhagen, Denmark}
\email{mlt@math.ku.dk}

\author{Elisenda Feliu}
\address{Department of Mathematical Sciences, University of Copenhagen,
Universitetsparken 5,
2100 Copenhagen, Denmark}
\email{efeliu@math.ku.dk}

\begin{abstract}
Switch-like responses arising from bistability have been linked to cell signaling processes and memory. Revealing the shape and properties of the set of parameters that lead to bistability is necessary to understand the underlying biological mechanisms, but is a complex mathematical problem. We present an efficient approach to address a basic topological property of the parameter region of multistationary, namely whether it is connected. The connectivity of this region can be interpreted in terms of the biological mechanisms underlying bistability and the switch-like patterns that the system can create. 

We provide an algorithm to assert that the parameter region of multistationarity is connected, targeting reaction networks with mass-action kinetics. We show that this is the case for numerous relevant cell signaling motifs, previously described to exhibit bistability. The method relies on linear programming and bypasses the expensive computational cost of direct and generic approaches to study parametric polynomial systems. This characteristic makes it suitable for mass-screening of reaction networks. 

Although the algorithm can only be used to certify connectivity, we illustrate that the ideas behind the algorithm can be adapted on a case-by-case basis to also decide that the region is not connected. In particular, we show that for a motif displaying a phosphorylation cycle with allosteric enzyme regulation, the region of multistationarity has two distinct connected components, corresponding to two different, but symmetric, biological mechanisms. 

\medskip
\emph{Keywords: } reaction network, mass-action kinetics, steady states, phosphorylation cycle
\end{abstract}

\maketitle

\section{Introduction}
Bistable switches are frequently observed and studied in living systems, and have been linked to cellular decision making and memory processes   \cite{SwitchLike,CellSignaling}. These switches arise in different forms; one common form in parametric systems is that of \emph{hysteresis} \cite{Tyson}, that is, the system is monostable for small or large values of a parameter, and has two or more stable steady states for intermediate values. When the parameter  changes slowly enough to allow the system to remain approximately at steady state, the resulting steady states depend on whether the parameter is increased or decreased. This is illustrated in Fig.~\ref{fig:intro}(a), which displays a hypothetical system with three steady states. When the parameter   increases from  low to high, the steady state goes through a bifurcation at a critical parameter value $\tau_{\rm max}$, after which the system discontinuously settles to another region of the output space. If  the parameter value is decreased again, the system remains at the high steady state value, that is, it does not return  to the low steady state value immediately. This will first happen if the parameter is decreased   beyond another critical value $\tau_{\rm min}<\tau_{\rm max}$. 
This behavior confers the switch with \emph{robustness}: after a change of level of steady state value takes place, small fluctuations in the parameter will not reverse the change. The larger the  interval $[\tau_{\rm min},\tau_{\rm max}]$ where the system has several steady states, the higher the degree of robustness of the change.  
More complicate switches can arise if, for example, the system has more than one steady state (is \emph{multistationary}) in  two intervals of the parameter, as illustrated in Fig.~\ref{fig:intro}(b,c). Irreversible switches are obtained if the relevant interval  is of the form $(0,\tau_{\rm max}]$. Fig.~\ref{fig:intro}(a) and  Fig.~\ref{fig:intro}(b,c) are qualitatively different in one important aspect: the parameter region where the system has multistationarity is connected in Fig.~\ref{fig:intro}(a) and  disconnected (has two disjoint pieces) in Fig.~\ref{fig:intro}(b,c). 

This phenomenon appears also in higher dimensions, where instead of a curve of steady states, there is a steady state manifold with ``bends'', and instead of having one parameter being varied, a vector of parameters is changed along a curve. 
The shape of the \textit{parameter region of  multistationarity} (or \textit{multistationarity region} for short), and specifically the number of path-connected components, modulates the type of switches that can arise. If the  multistationarity region is path connected, as in Fig.~\ref{fig:intro}(d), then any two parameter values in the region  can be joined by a continuous path completely included in the   region, typically  giving rise to simple hysteresis switches (if the system has three steady states for parameters in the region). 
If, on the contrary, the region has two path-connected components,  as in Fig.~\ref{fig:intro}(e), then any path joining two parameter points in different regions, necessarily goes through parameter points where the system does not have several steady states, allowing for  complex switches to arise.

\begin{figure}[t]
\includegraphics{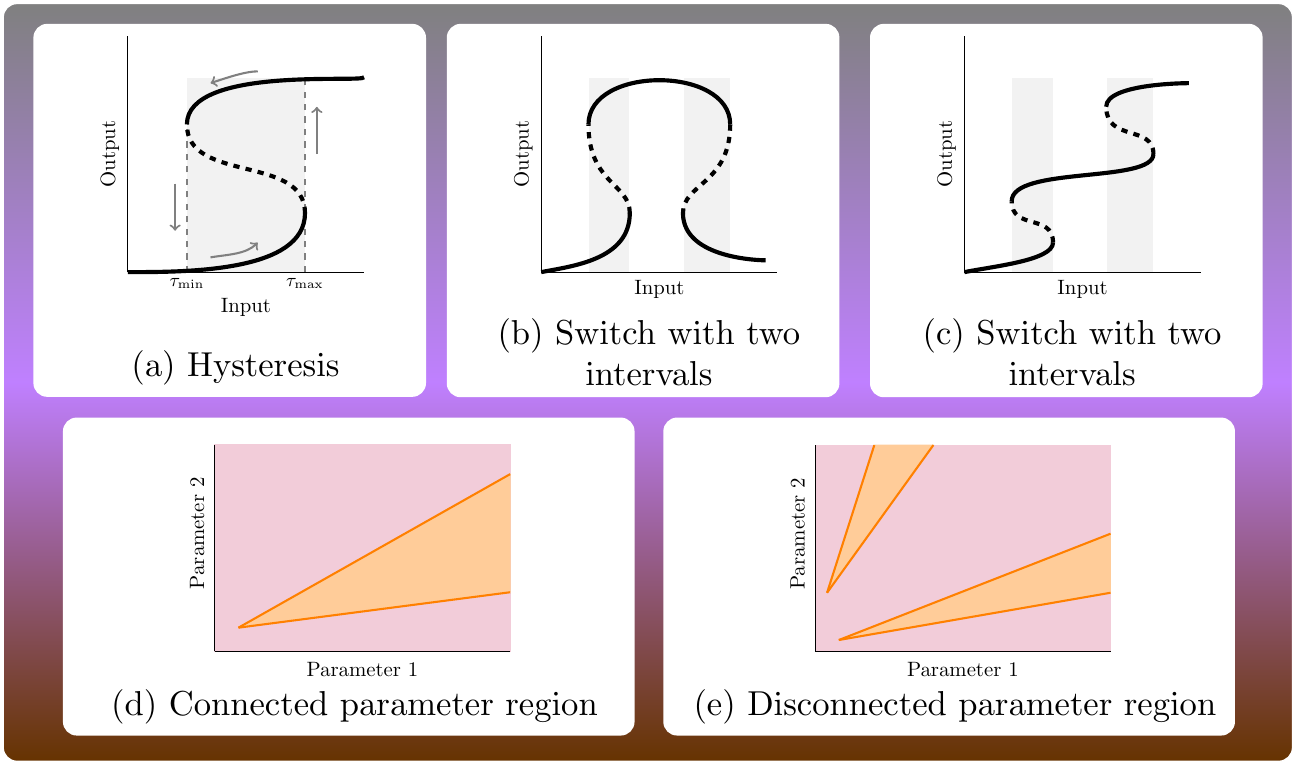}

\caption{\small (a-c) Input-output curves for hypothetical systems. Input is thought to be a parameter of the system that is varied, and the output is the concentration of a species at steady state. Dashed lines correspond to unstable steady states, and solid lines to stable steady states. (a) displays a simple hysteresis switch; (b-c) show input-output curves for systems where bistability arises in two disjoint intervals of input. (d-e) For a system with two parameters, the system has more than one positive steady state in the orange regions, and one in the purple regions. In panel (d) the multistationarity region is connected, while in (e) it has two connected components. }\label{fig:intro}
\end{figure}

Mathematically, understanding the connectivity of a region is a basic topological property of a set, and the number of connected components is called the $0$th Betti number of the set. Higher order Betti numbers describe the shape of the set in more detail, for instance, the first Betti number is the number of ``holes'' of the set. Tools from topological data analysis can  infer the Betti numbers of a set from sample data points. By generating points in the multistationarity region,  properties of the shape of the region have been explored for a specific dual phosphorylation system (Fig.~\ref{fig:net}(h)) in \cite{ParamGeo}, where it has also been suggested that lack of connectivity may indicate that different biological mechanisms underlie multistationarity.

In this work, we address connectivity of the  multistationarity  region for polynomial systems describing the steady states of biochemical reaction networks. We achieve this by using exact symbolic tools and theoretical results relating the   multistationarity  region with the region where a polynomial attains negative values.
Specifically, we work in the framework of chemical reaction network theory \cite{feinberg-balance,hornjackson}, where extensive work has been done to decide whether a network exhibits multistationarity, i.e. whether the multistationarity region  is non-empty  \cite{shiu:survey}. More recent work focuses on understanding  and finding the  multistationarity region, but here progress is  scarce and often restricted to special systems,  e.g. \cite{Sadeghimanesh:KacRice,PLOS_IdParaRegions, bihan:regions,conradi:cones,OteroMuras:2012fn,BRADFORD202084,conradi-mincheva}. 

Finding and studying the region where a polynomial system has more than one positive solution is a mathematical problem that   belongs to the realm of semi-algebraic geometry and quantifier elimination  \cite{Basu_book,MR1090179}. Although there are generic methods to address this question, these have high complexity, and fail for realistic networks, even of moderate size. To overcome these difficulties, methods targeting the specificities of reaction networks systems have been developed, and some partial results, for example describing the projection of the multistationarity region  onto a subset of the parameters, have been developed \cite{PLOS_IdParaRegions,bihan:regions,conradi:cones}. 

\medskip
Here we 
present a new algorithm  to assert that the multistationarity region is connected  without  explicitly finding it, which bypasses  the use of computationally expensive algorithms from semi-algebraic geometry and quantifier elimination. In fact, we  reduce the problem to computing the determinant of a symbolic matrix, and finding a point in the feasible region of a system of linear inequalities. This makes the algorithm successful for networks of moderate size.

 We apply our algorithm to  numerous motifs in cell signaling, known to exhibit multistationarity, and conclude that these have connected multistationarity regions. These systems are shown in Fig.~\ref{fig:net}, where for all subfigures but (c) and (h), we  confirm that the region is connected. For the system in Fig.~\ref{fig:net}(h), previously suggested to have a connected multistationarity region \cite{ParamGeo}, our method is inconclusive.  For system (c), modeling the enzymatic phosphorylation of a substrate S with allosteric regulation of the enzymes \cite{STRAUBE-Conradi}, the algorithm is also inconclusive. In this case, a detailed inspection of the system employing ideas  similar to those behind our algorithm, allows us to assert that the region has exactly two path-connected components. These components are contained  in the subset of parameters where $\k_3>\k_6$ or 
$\k_6>\k_3$ respectively. These two parameters are the catalytic constants of the phosphorylation and dephosphorylation processes, respectively. Therefore, if for example $\k_6$ increases from a small value to a value larger than $\k_3$, then the multistationarity region will be crossed twice, and a non-simple hysteresis switch arises.

The fact that the remaining networks in Figure~\ref{fig:net} have a connected multistationarity region, does not forbid complex switches, as a parameter path  could still enter and exit the multistationarity region several times. However, 
in this scenario we can assert that for any pair of parameter values of multistationarity, there is a path connecting them and completely included in the multistationarity region. This implies that the conditions yielding to multistationarity vary continuously, and we cannot separate the multistationarity region into two disjoint sets, each corresponding to a distinct biological mechanism. 

\medskip
Our algorithm builds on two previous results. First, in \cite{PLOS_IdParaRegions}, 
  the authors associated with each reaction network a function, whose signs are closely related to the multistationarity region. Second, in \cite{DescartesHypPlane}, we developed a new criterion to determine the connectivity of a set described as the preimage of the negative real half-line by a polynomial map.   We connect these two key ingredients in Theorem~\ref{Thm_ParamConn} to give a criterion for connectivity of the multistationarity region. Our results 
 establish a stronger property, namely path-connectivity of the region, which in turn imply connectivity.

  The criterion  relies on computing the determinant of a matrix with symbolic entries, and this task might become unfeasible for large matrices with many variables. To bypass this, we  show that the network can be reduced by removing some reactions and connectivity of the multistationarity region for the reduced network 
  can be translated into the original network, see Theorem \ref{Thm_reduction}.

\medskip
This paper is organized as follows. We first establish the framework and background material, and present our  algorithm for connectivity of the multistationarity region. We proceed to apply our algorithm to the networks in Fig.~\ref{fig:net}, and while doing so, we illustrate the strengths and  limitations of the algorithm. To keep the exposition concise, we compile the proofs of the main theorems in Section~\ref{sec:proofs} at the end.

\begin{figure}[!t]
\includegraphics[scale=0.8]{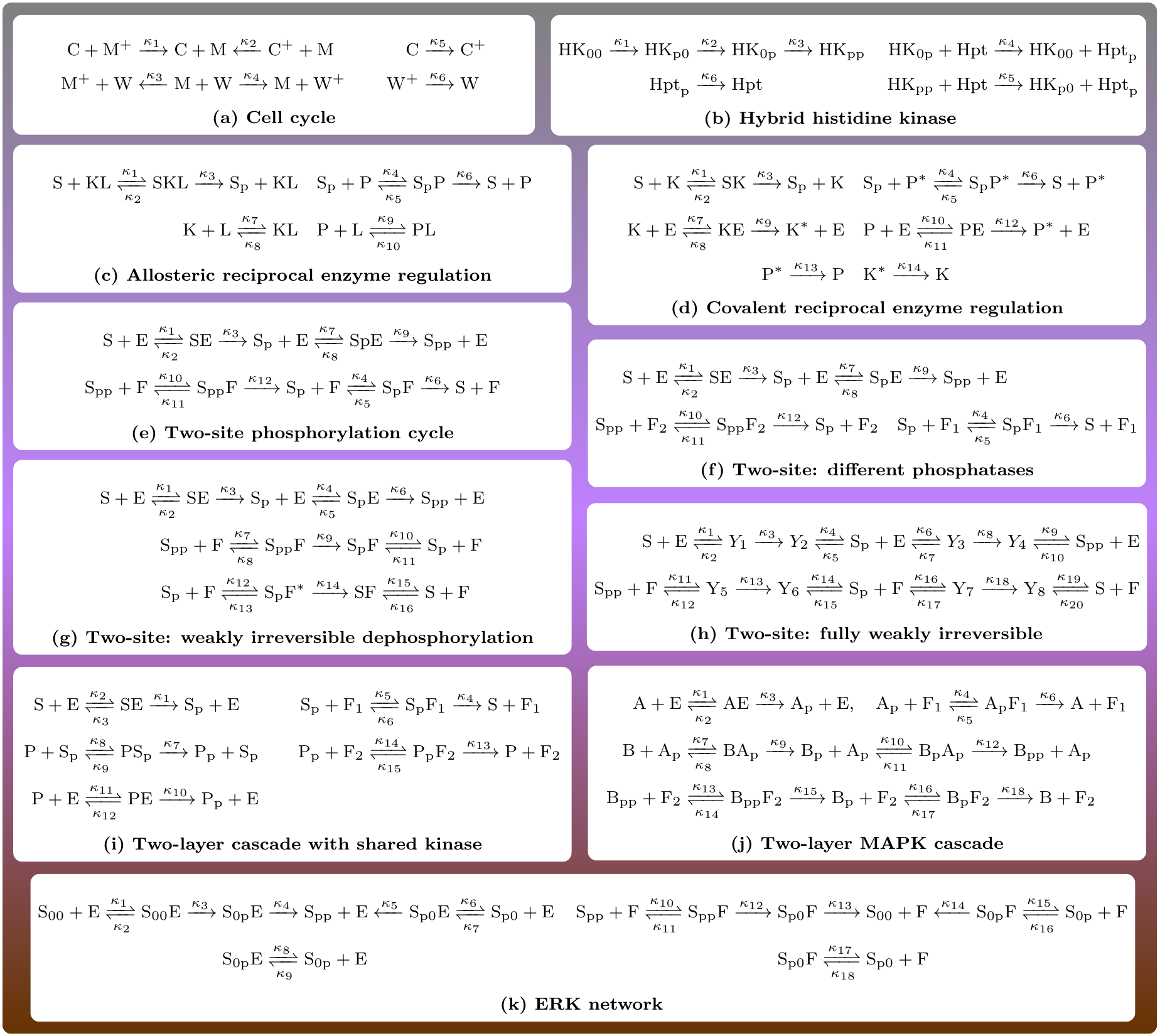}

\smallskip
\caption{{\small Reaction networks arising in cell signaling. The subindex `p' indicates a phosphorylated site. 
When writing `p' and `pp' it is assumed the substrate has two phosphorylation sites, and phosphorylation/dephosphorylation is ordered. When writing `0p' for example, it means the substrate also has two sites numbered 1 and 2, and the second one is phosphorylated. All networks are known to be multistationary. For all networks but (c) and (h), the  multistationarity region is path connected. For network (c), the multistationarity region has two path-connected components, while for network (h) our approach is inconclusive. }
 } \label{fig:net}
\end{figure}

\section{Results}
\subsection{Theory}
The results of this work are framed in the context of   chemical reaction network theory, a formalism to study reaction networks that goes back to the $70$s with the works of Horn, Jackson and Feinberg \cite{feinberg-balance,hornjackson}. To help the unfamiliar reader, and to fix the notation, we start with a brief introduction. 
This part ends with the main theoretical result to decide whether the set of parameters where multistationarity arises is path connected (and hence connected). 
To keep the exposition simple for non-experts, the proofs of the statements are given in Section~\ref{sec:proofs} at the end.

\medskip
\noindent

\paragraph{\bf Reaction networks. }
A \textbf{reaction network} $(\mathcal{S},\mathcal{R})$ is a collection of \textit{reactions} $\mathcal{R} = \{R_{1}, \dots ,R_{r} \}$ between  \textit{species} in a set $\mathcal{S} = \{ X_{1}, \dots , X_{n} \}$.  
In our applications, species will be proteins, such as kinases and  substrates. The reactions will encode events such as complex formation or posttranslational modifications.

Formally, each reaction connects to linear combinations of species, that is, it has  the form:
\[ R_{j}\colon \quad a_{1j} X_1 + \dots + a_{nj} X_n \ \longrightarrow \ b_{1j} X_1 + \dots + b_{nj} X_n ,\qquad j = 1, \dots ,r,\]
where the coefficients $a_{ij},b_{ij}$ are non-negative integer numbers. 
The net production of each species when the reactions takes place is encoded in the \textbf{stoichiometric matrix}
\[ N = \big[ b_{ij} - a_{ij}\big]_{\substack{i=1,\dots,n \\j=1,\dots ,r}} \in \mathbb{R}^{n \times r}.\] 
We do not consider reactions where the reactant and product are equal, hence $N$ has no zero columns.

For  illustration purposes, we consider a reaction network that is small enough to get a good feeling about the formal concepts but large enough to display the relevant features. To construct such a reaction network, the article \cite{SmallNetworks} was particularly helpful. Realistic networks will be considered in Section~\ref{sec:app} of this work. We refer to the following reaction network as the \textit{running example}:
\begin{align}
\label{Eq_Running}
\mathrm{X}_{1} \rightarrow   \mathrm{X}_{2} \text{,} \qquad	\mathrm{X}_{2} \rightarrow   \mathrm{X}_{1}  \text{,} \qquad  2\mathrm{X}_{1} + \mathrm{X}_{2} \rightarrow    3\mathrm{X}_{1}.
\end{align}
Here, two species are related by three reactions, so the corresponding stoichiometric matrix has two rows and three columns, see Fig.~\ref{FIG_running_1}. 

\smallskip
Mathematical modeling offers us tools to get insights into the dynamics of the network and  understand  the temporal changes in the concentrations of the species of the network. 
By encoding the concentrations of  $X_{1}, \dots ,X_{n}$ into the vector $x = (x_{1},\dots,x_{n}) \in \mathbb{R}^{n}_{\geq0}$, and  under the assumption of \textit{mass-action kinetics}, the evolution of the concentrations of the species over time is modeled by the ODE system:
\begin{align}
\label{Align_ODEsys}
\dot{x} = f_{\kappa}(x), \qquad x \in \mathbb{R}^{n}_{\geq0},
\end{align}
where $f_{\kappa}(x) := N v_{\kappa}(x)$ with the \textit{rate function} $v_{\kappa}(x)$ given for $x\in \R^n_{\geq 0}$ by
\begin{align}
\label{Align_ReactRateFunc}
 v_{\kappa}(x) = (\kappa_{1}\, x_{1}^{a_{1 1}} \cdots x_{n}^{a_{n 1}}, \dots, \kappa_{n}\, x_{1}^{a_{1 n}} \cdots x_{n}^{a_{n n}})^\top \qquad \in \R^{r}_{\geq 0}.
\end{align}
Here $\kappa = (\kappa_{1},\dots ,\kappa_{r}) \in \mathbb{R}_{>0}^{r}$ is the vector of   \textit{reaction rate constants}, which are parameters of the system. Observe that each component of $v_\kappa(x)$ corresponds to one reaction, and it is obtained by considering the coefficients of the reactant of the reaction as exponents. 
The function $v_\kappa(x)$ and the associated ODE system of the running example are shown in Fig.~\ref{FIG_running_1}. 

The ODE system \eqref{Align_ODEsys} is forward invariant on \textbf{stoichiometric compatibility classes} \cite{ForInv_Volpert}, that is, for any initial condition $x_0$ the dynamics takes place in the stoichiometric compatibility class of $x_0$, which is the set
 $(x_0+S) \cap \R^n_{\geq 0}$, where $S$ denotes the vector space spanned by the columns of $N$. To work with stoichiometric compatibility classes it is more convenient to have equations for them. These are obtained by 
considering a full rank matrix $W \in \mathbb{R}^{(n-s)\times n}$ such that $W\, N = 0$, where $s$ is the rank of the stoichiometric matrix $N$.  Then for each parameter vector $c \in \mathbb{R}^{n-s}$, the associated stoichiometric compatibility class is the set \begin{align}
\label{Eq_StoichClass}
\mathcal{P}_{c} := \{ x \in \mathbb{R}^{n}_{\geq0} \mid Wx = c \}.
\end{align}
This class is the set  $(x_0+S) \cap \R^n_{\geq 0}$ for any initial condition $x_0$ satisfying $Wx_0=c$. 
Such a matrix $W$ is called a \textbf{matrix of conservation relations}, any equation defining a class is a conservation relation, and the parameter vector $c$ is called a \textit{vector of total concentrations}.

For the running example, we have $n=2, s=1$, and hence $W$ has one row, as given in Fig.~\ref{FIG_running_1}. The figure depicts also the stoichiometric compatibility classes for $c=2,3,4$, which are compact. 
In general, a reaction network is called \textbf{conservative} if each stoichiometric compatibility class is a compact subset of $\mathbb{R}_{\geq 0}^{n}$, and this is the same as asking that there is a vector with all entries positive in the left-kernel of $N$ \cite{BENISRAEL1964303}. Under this assumption, all trajectories of the ODE system \eqref{Align_ODEsys} are completely contained in a compact subset, and hence no component can go to infinity. For the purposes of this work, it will be enough to require a milder condition, namely that there is a compact set that all trajectories enter in finite time and do  not leave again. In this case, the reaction network is said to be \textbf{dissipative}. See \cite{PLOS_IdParaRegions} for ways to verify that a network is dissipative. 
For simplicity, our applications are conservative networks.

We denote the set of non-negative steady states of the ODE system \eqref{Align_ODEsys} by
\begin{align}
\label{Align_DefSteadyStVar}
V_{\kappa} := \{ x \in \mathbb{R}^{n}_{\geq0} \mid Nv_{\kappa}(x) = 0 \},
\end{align}
and call it the \textbf{steady state variety}. 
The \textbf{positive steady state variety} consists then of the points in $V_\k$ with all entries positive and is denoted by $V_{\k,>0}$. 
A steady state $x \in V_{\kappa}$ is a \textbf{boundary steady state}, if one of its coordinates equals zero. 
For our results, we will need that stoichiometric compatibility classes intersecting $\R^n_{>0}$ do not contain boundary steady states. 
We call therefore a boundary steady state \textbf{relevant} if it belongs to a class $\mathcal{P}_{c} $ with $\mathcal{P}_{c} \cap \mathbb{R}^{n}_{>0} \neq \emptyset$.

\medskip
We say that a pair of parameters $(\kappa,c)$ \textbf{enables multistationarity} if the intersection of the positive steady state variety $V_{\k,>0}$  with the stoichiometric compatibility class  $\mathcal{P}_{c}$ contains more than one point. The set of parameters that enable multistationarity form the \textbf{multistationarity region}.
We show in the right panel of  Fig.~\ref{FIG_running_1} some stoichiometric compatibility classes, steady state varieties and their intersection, showing that multistationarity arises. 

\begin{figure}[!h]

\hspace{-1cm}
\includegraphics[scale=1]{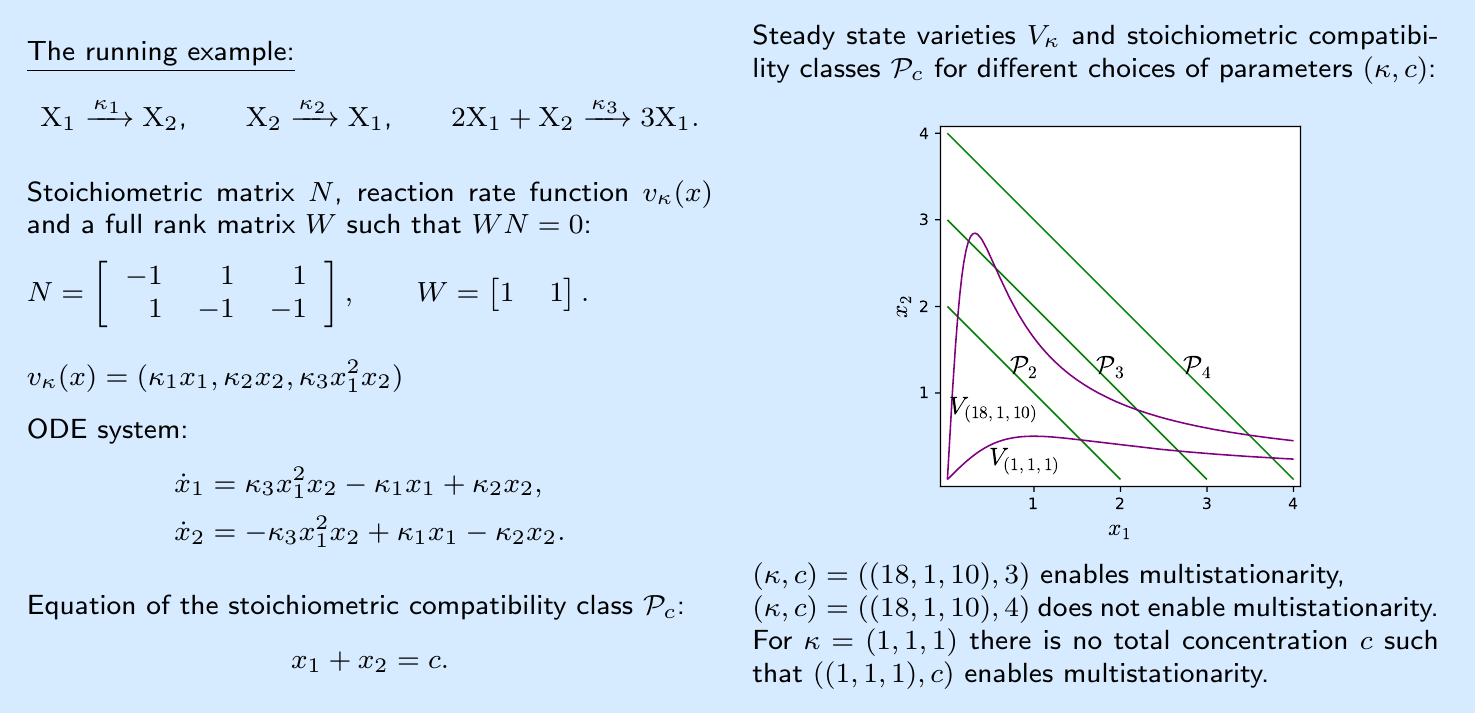}

\smallskip
\caption{\small Illustration of the relevant objects for the running example given in \eqref{Eq_Running}.  }\label{FIG_running_1}
\end{figure} 

We  now recall the main theorem from \cite{PLOS_IdParaRegions} that is key to describe the multistationarity region. 
The theorem requires choosing a matrix of conservation relations  $W$ that is row reduced. Then, if $i_{1} < \dots < i_{n-s}$ are the indices of the first non-zero coordinates of each row of $W$, we construct the matrix $M_{\kappa}(x)$ 
from the Jacobian  of $f_{\kappa}(x)$ by replacing 
 the $i_{j}$th row by the $j$th row of $W$ (for $j = 1, \dots , n-s$).

\begin{thm}[Multistationarity]
\label{Thm_Plos}
\cite[Theorem 1]{PLOS_IdParaRegions} Consider a reaction network that is dissipative and does not have relevant boundary steady states. 
For each vector of reaction rate constants $\kappa \in \mathbb{R}_{>0}^{r}$ and vector of total concentrations $c \in \mathbb{R}^{n-s}$, it holds:
\begin{itemize}
\item[(A)]  If  $(-1)^{s} \det(M_{\kappa}(x)) > 0 $ for all $x \in V_{\kappa,>0} \cap \mathcal{P}_{c}$, then the parameter pair $(\kappa , c)$ does not enable multistationarity.
\item[(B)]  If  $(-1)^{s} \det(M_{\kappa}(x)) < 0$ for some $x \in V_{\kappa,>0} \cap \mathcal{P}_{c}$, then the parameter pair $(\kappa , c)$ enables multistationarity.
\end{itemize}
\end{thm}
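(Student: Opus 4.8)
The plan is to read off the number of positive steady states inside a stoichiometric compatibility class from a Brouwer degree computation, with the sign of $\det M_\kappa(x)$ playing the role of the local index at a steady state; this is the classical route for statements of this type, and it rests on the good boundary behaviour of mass-action vector fields.

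\emph{Reformulation.} Let $I=\{i_1,\dots,i_{n-s}\}$ be the indices of the first nonzero entries of the rows of the reduced matrix $W$. Because $W$ is row reduced with pivots in the columns indexed by $I$, its rows together with the standard basis vectors $e_j$, $j\notin I$, span $\mathbb{R}^n$; equivalently $S\cap\operatorname{span}\{e_i:i\in I\}=\{0\}$, where $S=\operatorname{col}(N)$ is the orthogonal complement of the rows of $W$. Hence, for $x$ with $Wx=c$, one has $x\in V_\kappa$ if and only if the $s$ components of $f_\kappa(x)$ indexed by $\{1,\dots,n\}\setminus I$ vanish. Define $F_{\kappa,c}\colon\mathbb{R}^n_{\geq0}\to\mathbb{R}^n$ by placing $(Wx-c)_j$ in coordinate $i_j$ and the matching component of $f_\kappa(x)$ in every coordinate not in $I$; then $F_{\kappa,c}^{-1}(0)\cap\mathbb{R}^n_{>0}=V_{\kappa,>0}\cap\mathcal{P}_c$, and by construction the Jacobian of $F_{\kappa,c}$ is exactly $M_\kappa(x)$, so $\det M_\kappa(x)$ does not depend on $c$.

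\emph{Reduction to dimension $s$ and the sign identity.} Next I would pass to an $s$-dimensional problem on the convex polyhedron $\mathcal{P}_c$. Fix a particular solution $p$ of $Wx=c$ and a basis $B\in\mathbb{R}^{n\times s}$ of $\ker W=S$, normalised so that the submatrix of $B$ on the rows not in $I$ is the identity; writing $x=p+Bt$ and using $f_\kappa(x)\in S$, one gets $f_\kappa(p+Bt)=B\,g_{\kappa,c}(t)$ for a unique smooth map $g_{\kappa,c}$, whose zeros in the admissible range are again $V_{\kappa,>0}\cap\mathcal{P}_c$. A short determinantal computation — factoring the block matrix $\begin{bmatrix} W\\ (Jf_\kappa)_{I^c}\end{bmatrix}$ through $[\,B'\mid B\,]$, with $B'$ the coordinate inclusion associated with $I$, and using that $W$ is in reduced form so the relevant $(n-s)\times(n-s)$ minor of $W$ is the identity — yields $\det M_\kappa(x)=\det Jg_{\kappa,c}(t)$ at corresponding points; the two row reorderings involved induce mutually inverse permutations, so their signs cancel. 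Keeping this sign identity honest (in particular that the constant relating the two determinants is $+1$, not $-1$) is the first place where care is needed.

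\emph{Degree on a class.} Since the network has no relevant boundary steady states and we may assume $\mathcal{P}_c\cap\mathbb{R}^n_{>0}\neq\emptyset$, every steady state in $\mathcal{P}_c$ is positive, and by dissipativity $V_{\kappa,>0}\cap\mathcal{P}_c$ lies in a compact subset of $\mathbb{R}^n_{>0}$, hence is bounded away from $\partial\mathbb{R}^n_{\geq0}$. I would then choose a bounded, contractible open set $\Omega\subseteq\mathcal{P}_c\cap\mathbb{R}^n_{>0}$ containing all positive steady states, such that $f_\kappa$ has no zero on $\partial\Omega$ and is never a nonnegative multiple of $x-x_0$ there, for some $x_0\in\Omega$. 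For a conservative network $\Omega=\relint\mathcal{P}_c$ already works: mass-action vector fields point (non-strictly) inward on $\partial\mathbb{R}^n_{\geq0}$ — if $x_i=0$ then $\dot x_i\geq0$ — and there are no boundary steady states. For a merely dissipative network one intersects with a compact forward-invariant region furnished by dissipativity and smooths the corners; making this rigorous is the second delicate point. For such $\Omega$, a straight-line homotopy to the contraction $x\mapsto x_0-x$ shows that the Brouwer degree $d:=\deg(g_{\kappa,c},\Omega,0)$ equals $(-1)^s$, and when the positive steady states are nondegenerate, $d=\sum_{x\in V_{\kappa,>0}\cap\mathcal{P}_c}\operatorname{ind}_x$ with $\operatorname{ind}_x=\operatorname{sign}\det Jg_{\kappa,c}(x)=\operatorname{sign}\det M_\kappa(x)$. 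In other words, $(-1)^s\det M_\kappa(x)>0$ says precisely that $\operatorname{ind}_x=(-1)^s=d$.

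\emph{Conclusion.} For (A): if $(-1)^s\det M_\kappa(x)>0$ for all $x\in V_{\kappa,>0}\cap\mathcal{P}_c$, every such steady state is nondegenerate with $\operatorname{ind}_x=(-1)^s$, so $(-1)^s=d=\sum_x\operatorname{ind}_x$ is a sum of terms all equal to $(-1)^s$, forcing at most one steady state; hence $(\kappa,c)$ does not enable multistationarity. For (B): if $(-1)^s\det M_\kappa(x^\ast)<0$ for some $x^\ast$, then $x^\ast$ is nondegenerate, hence isolated, and on a small ball $B\ni x^\ast$ with $\overline B\subset\Omega$ one has $\deg(g_{\kappa,c},B,0)=(-1)^{s+1}$; by additivity $\deg(g_{\kappa,c},\Omega\setminus\overline B,0)=(-1)^s-(-1)^{s+1}=2(-1)^s\neq0$, so $\Omega\setminus\overline B$ contains a further steady state and $(\kappa,c)$ enables multistationarity. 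I expect the hard part to be not this counting but the two items flagged above: pinning down the global sign in the identity relating $\det M_\kappa(x)$ to the index of the reduced vector field, and, for networks that are only dissipative, building $\Omega$ with the boundary behaviour that makes the relevant degree genuinely $(-1)^s$.
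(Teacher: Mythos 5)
This paper does not prove Theorem~\ref{Thm_Plos} at all: it is imported verbatim, with citation, as Theorem~1 of \cite{PLOS_IdParaRegions}, so there is no in-paper argument to compare against. Your Brouwer-degree reconstruction --- building $M_\kappa(x)$ as the Jacobian of the function obtained by swapping in the conservation relations, showing the degree of that map on a suitable bounded subset of $\mathcal{P}_c\cap\mathbb{R}^n_{>0}$ is $(-1)^s$ via a homotopy (using dissipativity and the absence of relevant boundary steady states to control the boundary), and then reading off (A) and (B) from the sum-of-indices and excision --- is essentially the proof given in that cited source, and the two points you flag as delicate (the global sign in the determinant identity and the construction of the degree-admissible domain in the merely dissipative case) are indeed where the work lies there.
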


Our running example is dissipative, as it is conservative, and it has no relevant boundary steady states. We also find that 
\begin{align*}
(-1)^{s} \det(M_{\kappa}(x)) = \kappa_{3}x_{1}^{2} - 2\kappa_{3}x_{1}x_{2} + \kappa_{1} + \kappa_{2}.
\end{align*}
This expression is negative for $\kappa = (18,1,10)$ and $x^{*} 	\approx (0.2448, 2.7552) \in V_{\kappa}$. Since $Wx^{*} = 3$, we can conclude using Theorem \ref{Thm_Plos}, that the intersection of $V_{(18,1,10)}$ and $\mathcal{P}_{3}$ contains more than one point. This is exactly what Fig.~\ref{FIG_running_1} indicates.

\medskip
\noindent

\paragraph{\bf Parametrizations. }
In Theorem \ref{Thm_Plos}, it is crucial to evaluate the determinant of $M_{\kappa}(x)$  at points in the incidence set:
\begin{align}
\label{Eq_DefV}
 \mathcal{V} := \{ (x,\kappa) \in \mathbb{R}_{>0}^{n} \times \mathbb{R}_{>0}^{r} \mid x \in V_{\kappa} \}.
\end{align}
Therefore, we need to be able to describe the points in $\mathcal{V}$ in a useful way. This is done by considering \textit{parametrizations} of $\mathcal{V}$. Loosely speaking, a parametrization is a function whose image set consists precisely  of the points in $\mathbb{R}^{n}_{>0} \times \mathbb{R}^{r}_{>0}$ that  belong to $\mathcal{V}$. Formally, we define a \textbf{parametrization of $\mathcal{V}$} as a surjective analytic map 
\begin{align*}
\Phi \colon \mathcal{D} \to \mathcal{V}.
\end{align*}
In practice, $\mathcal{D}$ is the positive orthant of some $\mathbb{R}^{k}$ and $\Phi$ is described by polynomials or quotients of polynomials such that their denominators do not vanish on $\mathcal{D}$. Below, we discuss how to choose a parametrization and show that there is always at least one.

Using Theorem \ref{Thm_Plos}, one can show (see Lemma~\ref{Lemma_KeyLemma1}) that the multistationarity region  is closely related to the preimage of the negative real half-line under  the polynomial map
\begin{align}
\label{Align_DetFunc}
g\colon \mathcal{V} \to \mathbb{R}, \quad (x,\kappa) \mapsto (-1)^{s}\det(M_{\kappa}(x)),
\end{align}
which can be described using a parametrization 
\begin{align}
\label{Align_DetFunc2}
g \circ \Phi \colon\  \mathcal{D} \to \mathcal{V} \to \mathbb{R}, \qquad \xi \mapsto g( \Phi(\xi)). 
\end{align}

Following \cite{MultStrucRN}, we call the function $g \circ \Phi$ a \textbf{critical function}, and observe that it depends on the choice of the parametrization. We are ready to present the main theoretical result of this work, namely a criterion for connectivity of the multistationarity region. The statement tells us that we can look at the number of path-connected components of $(g \circ \Phi)^{-1}(\mathbb{R}_{< 0})$, that is, of the set of values $\xi$ where $g\circ \Phi$ is negative. 
The proof of the following theorem can be found in Section~\ref{sec:proofs}.

\begin{thm}[Deciding connectivity]
\label{Thm_ParamConn} 
Consider a reaction network that is dissipative and does not have relevant boundary steady states. Let $g \circ \Phi$ be a critical function as in \eqref{Align_DetFunc2} such that  the closure of $(g \circ \Phi)^{-1}(\mathbb{R}_{<0})$ equals  $(g \circ \Phi)^{-1}(\mathbb{R}_{\leq 0})$. 

Then the number of path-connected components of  the multistationarity region  is at most the number of path-connected components of $(g \circ \Phi)^{-1}(\mathbb{R}_{< 0})$. 

In particular, if $(g \circ \Phi)^{-1}(\mathbb{R}_{<0})$ is path connected, then the multistationarity region  is path connected.
\end{thm}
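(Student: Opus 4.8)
The plan is to exhibit the multistationarity region $U\subseteq\mathbb{R}^{r}_{>0}\times\mathbb{R}^{n-s}$ as a set squeezed between a semialgebraic set $B$ and its closure, where $B$ is the image of $(g\circ\Phi)^{-1}(\mathbb{R}_{<0})$ under a suitable continuous map, and then to run a soft point-set-topology argument. I would first introduce the projection $\pi\colon\mathcal{V}\to\mathbb{R}^{r}_{>0}\times\mathbb{R}^{n-s}$, $(x,\kappa)\mapsto(\kappa,Wx)$, set $\psi:=\pi\circ\Phi$ and $A:=(g\circ\Phi)^{-1}(\mathbb{R}_{<0})\subseteq\mathcal{D}$. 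Since $\Phi$ is surjective onto $\mathcal{V}$ and $g$ is defined on $\mathcal{V}$, one has $\Phi(A)=g^{-1}(\mathbb{R}_{<0})$ and $\Phi\bigl((g\circ\Phi)^{-1}(\mathbb{R}_{\leq 0})\bigr)=g^{-1}(\mathbb{R}_{\leq 0})$, hence $\psi(A)=\pi\bigl(g^{-1}(\mathbb{R}_{<0})\bigr)$ and $\psi\bigl((g\circ\Phi)^{-1}(\mathbb{R}_{\leq 0})\bigr)=\pi\bigl(g^{-1}(\mathbb{R}_{\leq 0})\bigr)$. Theorem~\ref{Thm_Plos}(B) gives $\pi\bigl(g^{-1}(\mathbb{R}_{<0})\bigr)\subseteq U$, and the contrapositive of Theorem~\ref{Thm_Plos}(A) gives $U\subseteq\pi\bigl(g^{-1}(\mathbb{R}_{\leq 0})\bigr)$; this pair of inclusions is essentially the content of Lemma~\ref{Lemma_KeyLemma1}. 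The hypothesis says that $(g\circ\Phi)^{-1}(\mathbb{R}_{\leq 0})$ is the closure of $A$ in $\mathcal{D}$, so continuity of $\psi$ gives $\psi\bigl((g\circ\Phi)^{-1}(\mathbb{R}_{\leq 0})\bigr)\subseteq\overline{\psi(A)}$, and altogether
\[
 \psi(A)\ \subseteq\ U\ \subseteq\ \overline{\psi(A)}.
\]

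Next I would observe that $U$ and $B:=\psi(A)=\pi\bigl(g^{-1}(\mathbb{R}_{<0})\bigr)$ are semialgebraic: $B$ because $g^{-1}(\mathbb{R}_{<0})\subseteq\mathcal{V}$ is cut out by polynomial equalities and inequalities in $(x,\kappa)$ while $\pi$ is a polynomial map — so semialgebraicity of $B$ is insensitive to the choice of the analytic map $\Phi$ — and $U$ because it admits a first-order description over $\mathbb{R}$ in terms of the polynomial data $N$, $v_{\kappa}$, $W$, so Tarski--Seidenberg applies. Consequently $U$ and $B$ each have only finitely many connected components, which coincide with their path-connected components, and each connected component of $B$ is relatively open and closed in $B$. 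This reduces the statement to a bound on the numbers of connected components.

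Then the finishing step: write $B=C_{1}\sqcup\dots\sqcup C_{k}$ for its connected components. From the preceding remarks $\overline{C_{i}}\cap B=C_{i}$ and $\overline{B}=\bigcup_{i=1}^{k}\overline{C_{i}}$, so using $U\subseteq\overline{B}$ we get $U=\bigcup_{i=1}^{k}\bigl(U\cap\overline{C_{i}}\bigr)$. For each $i$ we have $C_{i}\subseteq U\cap\overline{C_{i}}\subseteq\overline{C_{i}}$, and any set squeezed between a connected set and its closure is connected, so $U\cap\overline{C_{i}}$ is connected; hence $U$ is a union of $k$ connected sets, so it has at most $k$ connected components and, being semialgebraic, at most $k$ path-connected components. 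On the other hand $\psi$ restricts to a continuous surjection $A\to B$, so $k$ is at most the number of path-connected components of $A=(g\circ\Phi)^{-1}(\mathbb{R}_{<0})$. Chaining these inequalities yields the theorem, the ``in particular'' clause being the case $k\leq 1$.

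I expect the main obstacle to be not the overall logic but the care needed in two places: first, making the squeeze $\psi(A)\subseteq U\subseteq\overline{\psi(A)}$ fully rigorous — checking that $V_{\kappa,>0}\cap\mathcal{P}_{c}\neq\emptyset$ whenever $(\kappa,c)\in U$ so that the contrapositive of Theorem~\ref{Thm_Plos}(A) genuinely produces an $x$ with $(-1)^{s}\det(M_{\kappa}(x))\leq 0$, and that the image/preimage identities for the non-injective surjection $\Phi$ hold as stated; and second, justifying that $U$ — which need be neither open nor a manifold — is locally path-connected, so that counting its connected components is the same as counting its path-connected components. The semialgebraicity of $U$ is exactly what resolves the second point.
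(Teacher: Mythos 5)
Your proof is correct, but it takes a genuinely different route from the paper's. The paper introduces the intermediate set $\Theta := g^{-1}(\mathbb{R}_{\leq 0}) \cap \pi^{-1}(\Omega)$ \emph{upstairs} in $\mathcal{V}$, proves $\pi(\Theta)=\Omega$ (Lemma~\ref{Lemma_KeyLemma1}), and then bounds $b_0(\Theta)$ by $b_0\big((g\circ\Phi)^{-1}(\mathbb{R}_{<0})\big)$ by pulling back to $\mathcal{D}$ and invoking the Curve Selection Lemma to produce, for each point of $\Phi^{-1}(\Theta)$, an explicit path into $(g\circ\Phi)^{-1}(\mathbb{R}_{<0})$ (Lemma~\ref{Lemma_KeyLemma}); every comparison of counts is then a ``continuous image of path connected is path connected'' step. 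You instead push everything \emph{downstairs}: you sandwich $\Omega$ between $B=\pi\big(g^{-1}(\mathbb{R}_{<0})\big)$ and $\overline{B}$, decompose $\overline{B}$ into closures of the finitely many components of $B$, and use the elementary fact that a set squeezed between a connected set and its closure is connected. The price is that you must know $\Omega$ and $B$ are semialgebraic (Tarski--Seidenberg applied to the first-order description of multistationarity, and to the polynomial image $\pi\big(g^{-1}(\mathbb{R}_{<0})\big)$) in order to get finiteness of components and, crucially, to upgrade connected components to \emph{path}-connected components --- the theorem is stated for the latter, and your squeeze argument alone only controls the former; you correctly identified this and resolved it. What your approach buys is independence from the analyticity of $\Phi$: you use only that $\Phi$ is continuous and surjective, whereas the paper's curve-selection step leans on $g\circ\Phi$ being analytic so that $(g\circ\Phi)^{-1}(\mathbb{R}_{\leq 0})$ is semi-analytic. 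What the paper's approach buys is that it constructs paths directly and never needs quantifier elimination or the structure theory of semialgebraic sets. Both hinge on the same two uses of Theorem~\ref{Thm_Plos}, including the check (which you flagged) that multistationarity forces $V_{\kappa,>0}\cap\mathcal{P}_c\neq\emptyset$ so the contrapositive of part (A) genuinely yields a witness with $g\leq 0$.
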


\subsection{Algorithm for checking path connectivity}
Theorem~\ref{Thm_ParamConn} gives a theoretical criterion to decide upon connectivity, from which one can establish an algorithm for connectivity with the following steps: 
\begin{itemize}
\item[\textbf{(Step 1)}] Check that the reaction network is dissipative and does not 
have relevant boundary steady states.
\item[\textbf{(Step 2)}] Find a parametrization $\Phi$ of $\mathcal{V}$ and compute the critical function $g \circ \Phi$.
\item[\textbf{(Step 3)}] Check that  $(g \circ \Phi)^{-1}(\mathbb{R}_{<0})$  is  path connected and its closure equals  $(g \circ \Phi)^{-1}(\mathbb{R}_{\leq 0})$.
\end{itemize}
The important point is that each of these steps can be addressed computationally, and hence the algorithm 
can be carried through without manual intervention, at least for networks of moderate size. 
We proceed to describe each of these steps in detail.

\medskip
\textbf{(Step 1)}  has  already been described in detail in \cite{PLOS_IdParaRegions}, as it consists of verifying that the conditions to apply Theorem~\ref{Thm_Plos} hold. 
Computable criteria that are sufficient  to ensure that the properties hold are presented in \cite{PLOS_IdParaRegions}.  These are, however, not necessary and hence it might not always be possible to decide upon this step.

To verify dissipativity, the first attempt is to show that the reaction network is conservative by finding a row vector $w \in \mathbb{R}^{n}_{>0}$ such that $w  \, N = 0$ . This can be checked by solving the  system of linear equalities:
\begin{align}\label{Eq_cons}
w\, N_i= 0,   \quad \text{for all } i = 1,\dots,r \quad \text{and} \quad w_j > 0 \quad \text{for all } j = 1,\dots,n,
\end{align}
where $N_i$ denotes the $i$th column of $N$.
We already noticed that the running example is conservative, by choosing 
\begin{align}
\label{Eq_W}
w =(1,1) \ \in \R^2_{>0}.
\end{align}

A sufficient criterion to preclude the existence of relevant boundary steady states arises by 
using siphons, that is, subsets of species such that for all species in the set and all reactions producing them, there is a species in the reactant also in the set, see \cite[Theorem 2]{SiphonCrit}, \cite[Proposition 2]{CatalystInterm}, \cite{Shiu-siphons}. In a nutshell, the criterion requires that for each minimal \textit{siphon} it is possible to choose $w \in \mathbb{R}^{n}_{\geq 0}$ with $wN = 0$ and such that the positive entries of $w$ correspond exactly to the species in the siphon. For more details, we refer to \cite{PLOS_IdParaRegions, SiphonCrit, CatalystInterm}. Note that this criterion also relies on solving linear inequalities. Our running example has only one siphon, namely $\{X_1,X_2\}$. As the two entries of $w$ in \eqref{Eq_W} are positive, the criterion holds and the network does not have relevant boundary steady states.

\medskip
\textbf{(Step 2)}  asks for the choice of a parametrization of $\mathcal{V}$ and the computation of the critical function. 
To find a parametrization systematically, we consider so-called \textit{convex parameters} introduced by Clarke in   \cite{SNA-Book}. Since then, they have been   applied to study reaction networks, for example to detect Hopf bifurcations  and study bistability \cite{Conradi_HopfBifur,OtherHopfBif,domijan_bistability,weber_hopf}.

The idea behind convex parameters is the simple observation that the  rate function $v_{\k}(x)$ has to be in the \textit{flux cone}:
\[\mathcal{F} := \{ v \in \mathbb{R}^{r}_{\geq 0} \mid Nv = 0\} = \ker(N) \cap \mathbb{R}^{r}_{\geq0}\] 
for each $(x,\kappa)\in \mathcal{V}$. It is easy to see that $\mathcal{F}$ is a  convex polyhedral cone containing no lines. Using  software with packages for polyhedral sets (see Methods),  one can compute a minimal collection of generators $E_{1}, \dots , E_{\ell} \in \mathbb{R}^{r}$ of $\mathcal{F}$.  

These generators are often called \textit{extreme vectors} of the cone. Their choice is unique up to multiplication by a positive number. Since the flux cone $\mathcal{F}$ does not contain lines, each of its elements can be written as a non-negative linear combination of \textit{extreme vectors} \cite[Corollary 18.5.2]{Rockafellar}, that is for each $v \in \mathcal{F}$ there exists some $\lambda = (\lambda_{1}, \dots , \lambda_{\ell}) \in \mathbb{R}_{\geq 0}^{\ell}$ such that 
\[ v = \sum\limits_{i=1}^{\ell} \lambda_{i} E_{i} = E \lambda \]
where $E \in \mathbb{R}^{r \times \ell}$ denotes the matrix with columns $E_{1}, \dots , E_{\ell}$.
We call $E$ a \textbf{matrix of extreme vectors}. 
This gives rise to the following  \textit{convex parametrization}: 
\begin{align}
\label{Eq_Psi}
\Psi \colon  \mathbb{R}^{n}_{>0} \times \mathbb{R}^{\ell}_{>0} \to \mathcal{V} , \quad (h,\lambda) \mapsto  \ (\tfrac{1}{h}, \diag((h^{A_{1}}, \dots , h^{A_{r}}))E\lambda),
\end{align}
where $A_{1}, \dots, A_{r}$ denote the columns of the matrix $A := \big[ a_{ij}\big]  \in \mathbb{R}^{n \times r}$ of the coefficients of the reactants of the reactions, $h^{A_{j}}$ is  short notation for $h_{1}^{a_{1j}}\cdots h_{n}^{a_{nj}}$,  $\diag(v)$ is the diagonal matrix with diagonal entries given by  $v$, and $1/h$ is taken component-wise.

In Corollary \ref{Cor_ConvParamSurj}(a) in Section~\ref{sec:proofs}, we show that $\Psi$ is surjective if $E$ does not have a row where all the entries are equal to zero, and hence $\Psi$ is a parametrization of  $\mathcal{V}$. This restriction is not relevant for our purposes: a  zero row of $E$ is equivalent to $\ker(N)$ not having any positive vector, and hence there is no positive steady state of the ODE system \eqref{Align_ODEsys}, see Corollary \ref{Cor_ConvParamSurj}(b). In particular, the reaction network cannot be multistationary. In the rest of the work, we assume that $E$ does not have a zero row and when this holds, we say that the network is \textbf{consistent}\cite{SiphonCrit} (consistent networks are called  \emph{dynamically nontrivial} in other works, e.g. \cite{banaji-nauty}).

For the convex parametrization, the critical function $g \circ \Psi$ can be  represented in a direct way using the following observation. The Jacobian of $f_\k(x)$ evaluated at $ \Psi(h,\lambda)$ equals 
\begin{align}
\label{Eq_Mtilde}
 \tilde{J}(h,\lambda) :=N \diag(E\lambda)A^{\top}\diag(h),
\end{align}
for each $(h,\lambda) \in \mathbb{R}^{n}_{>0} \times \mathbb{R}^{\ell}_{>0}$,  
see \cite{Conradi_HopfBifur}. We construct the matrix $\tilde{M}(h,\lambda)$ from $\tilde{J}(h,\lambda)$  as above: if $W$ is row reduced and $i_{1} < \dots < i_{n-s}$ are the indices of the first non-zero coordinates of each row, 
replace the $i_{j}$throw of  $\tilde{J}(h,\lambda)$  by the $j$th row of $W$.
Then, it holds
\begin{align}
\label{Eq_q}
(g \circ \Psi)(h,\lambda) =  (-1)^{s} \det \tilde{M}(h,\lambda).
\end{align}
From this equality, one computes $g \circ \Psi$  directly using symbolic software. Since the entries of $ \tilde{M}(h,\lambda)$ are polynomials in $(h,\lambda)$, so is $g \circ \Psi$. In the following, we call this polynomial the \textbf{critical polynomial}.

\medskip
Let us find  the critical polynomial $g \circ \Psi$ for the running example. The flux cone $\mathcal{F}$ and its extreme vectors are displayed in Fig.~\ref{FIG_running_flux}. Now, all we have to do is to compute the matrix product in \eqref{Eq_Mtilde}
{\small \begin{align*}
\left[\begin{array}{rrr}
-1 & 1 & 1 \\
1 & -1 & -1
\end{array}\right]\hspace{-0.15cm}
\begin{bmatrix}
\lambda_1 + \lambda_2  & 0 & 0 \\
0 & \lambda_2 & 0\\
0 & 0 &\lambda_1
\end{bmatrix}\hspace{-0.15cm}
\begin{bmatrix}
1 & 0  \\
0 & 1\\
2 &1
\end{bmatrix}\hspace{-0.15cm}
 \begin{bmatrix}
h_1 & 0  \\
0 & h_2\\
\end{bmatrix} =\left[\begin{array}{rrr}
 (\lambda_1 - \lambda_2)h_1 &  (\lambda_1 + \lambda_2)h_2 \\
-(\lambda_1 - \lambda_2)h_1 & -(\lambda_1 + \lambda_2)h_2 \\
\end{array}\right],
\end{align*}}%
 and replace the first row by $(1,1)$. After taking the determinant and multiplying by $(-1)^s=-1$, we obtain the critical polynomial:
\begin{align}
\label{Eq_running_convex}
(g \circ \Psi)(h_{1},h_{2},\lambda_{1},\lambda_{2}) =  h_{1}\lambda_{2} - h_{1}\lambda_{1}  + h_{2}\lambda_{1}  +h_{2} \lambda_{2}.
\end{align}

\begin{figure}[!t]
\centering
\includegraphics{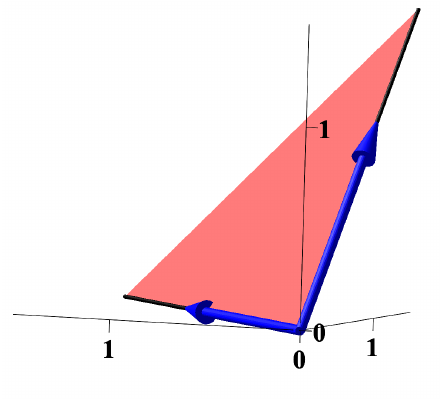}
\caption{{\small The flux cone of the running example in $\R^3$. The extreme vectors $E_{1} = (1,0,1)$, $E_{2} = (1,1,0)$ are shown in blue.}}\label{FIG_running_flux}
\end{figure}

\medskip
The above discussion shows that we can always find a suitable parametrization and compute the critical polynomial. 
In some cases, other types of parametrizations arise by parametrizing each $V_{\k,>0}$ separately. 
This is done by first trying to express some variables among $x_{1}, \dots , x_{n}$, $\kappa_{1}, \dots , \kappa_{r}$ in terms of the others using the equations in \eqref{Align_DefSteadyStVar}. For finding these expressions, a computer algebra system such as \texttt{SageMath} \cite{sagemath} or \texttt{Maple} \cite{maple} can be useful. Once such an expression is found, one should check whether it gives a well-defined surjective analytic map, that is a parametrization.
If all the components of the parametrization $\Phi$ are quotients of polynomials with positive denominators, then  $g \circ \Phi$ is a quotient of polynomials too, and its denominator is positive.

Let us see how this works in practice for the running example. 
We see from the ODE system in Fig.~\ref{FIG_running_1}
that positive steady states are characterized by
\[ x_{2} = \tfrac{\kappa_{1}x_{1}}{\kappa_{3} x_{1}^{2} + \kappa_{2}}.\]
This expression gives the parametrization 
\[ \Phi\colon\  \mathbb{R}_{>0}^{4} \to \mathcal{V}\subseteq \mathbb{R}_{>0}^{2}  \times \mathbb{R}_{>0}^{3}, \qquad (x_{1},\kappa_{1},\kappa_{2}, \kappa_{3}) \mapsto (x_{1}, \tfrac{\kappa_{1}x_{1}}{\kappa_{3} x_{1}^{2} + \kappa_{2}},\kappa_{1},\kappa_{2},\kappa_{3}).
\]
Combining $\Phi$ with $g$ from \eqref{Align_DetFunc}, we get the critical function:
\begin{align} 
\label{Eq_running_x2} 
(g \circ \Phi) (x_{1},\kappa_{1},\kappa_{2},\kappa_{3}) =  \tfrac{\kappa_{3}^{2}x_{1}^{4} - \kappa_{1}\kappa_{3}x_{1}^{2} + 2\kappa_{2}\kappa_{3}x_{1}^{2} + \kappa_{1}\kappa_{2} + \kappa_{2}^{2}  }{\kappa_{3} x_{1}^{2} + \kappa_{2}}.
\end{align}

In general, there is no guarantee that such a parametrizations for each $\k$ can be found. However, there are broad classes of reaction networks allowing such a parametrization, for example networks with toric steady states \cite{ToricSteadyStates} and post-translational modification systems \cite{MPTsystems} to name a few.
As we always can find a critical function using the convex parametrization, one might wonder what the value of these other type of parametrizations is. The point is that with this type,  the reaction rate constants are still present in the parametrization, and this is useful to get information about what parameter values yield to multistationarity. This was the theme of \cite{PLOS_IdParaRegions}, and we will explore this advantage later  in the application of our algorithm to the network with allosteric reciprocal enzyme regulation  in Fig.~\ref{fig:net}(c).

 \medskip
Finally, we discuss how to address \textbf{(Step 3)}, now that we know how to compute the critical polynomial/function.
To check whether the preimage of the negative real half-line under a critical function is path connected is  in general hard and depends strongly on the parametrization. As we discussed in (Step 2), critical functions are in practice polynomials or rational functions with positive denominator.  In the latter case, we can restrict to the numerator of the rational function.  To verify the conditions in Theorem~\ref{Thm_ParamConn}, it is then enough to study  the preimage of the negative real half-line under a polynomial function restricted to the positive orthant.

Recall that a polynomial function can be written as
\begin{align*}
f\colon \ \mathbb{R}^{k}_{>0} \to \mathbb{R},  \qquad f(x)= \sum\limits_{\mu \in \sigma(f)}c_{\mu}x_{1}^{\mu_{1}}  \dots x_{k}^{\mu_{k}}, \quad \textrm{with }c_\mu \neq 0,
\end{align*}
and $\sigma(f) \subseteq \mathbb{N}^{k}$ is a finite set, called the \textit{support} of $f$.
To determine whether the \textit{preimage of the negative real half-line}
\[ f^{-1}(\mathbb{R}_{<0}) = \{ x \in \mathbb{R}_{> 0}^{k} \mid f(x) < 0 \}\]
is path connected, one can use methods from real algebraic geometry \cite[Remark 11.19]{Basu_book}, \cite[Section 3]{Basu_Survey}. These methods work well for polynomials in few variables, but they scale poorly. If the polynomial has many variables, the computation is unfeasible.

In \cite{DescartesHypPlane}, the authors of the present work gave a sufficient criterion for 
deciding that $f^{-1}(\mathbb{R}_{<0})$ is path connected, 
based on the geometry of the support and the sign of the coefficients. We call an exponent $\mu \in \sigma(f)$ positive (resp. negative) if the corresponding coefficient $c_{\mu}$ is positive (resp. negative). We write $\sigma_+(f)$ (resp. $\sigma_-(f))$ for the set of positive (resp. negative) exponents of a polynomial $f$.  For example, the polynomial $f(x_{1},x_{2}) = x_{1}^{2} - x_{1}^{2}x_{2}  + 2x_{1}x_{2} +  x_{1}^{3}  + x_{2}^{2} $ has four positive exponents $(2,0),(1,1),(3,0),(0,2)$ and one negative exponent $(2,1)$. These exponents are depicted in  Fig.~\ref{FIG_SepHyp}.

\begin{figure}[!t]
\centering
\includegraphics[scale=1]{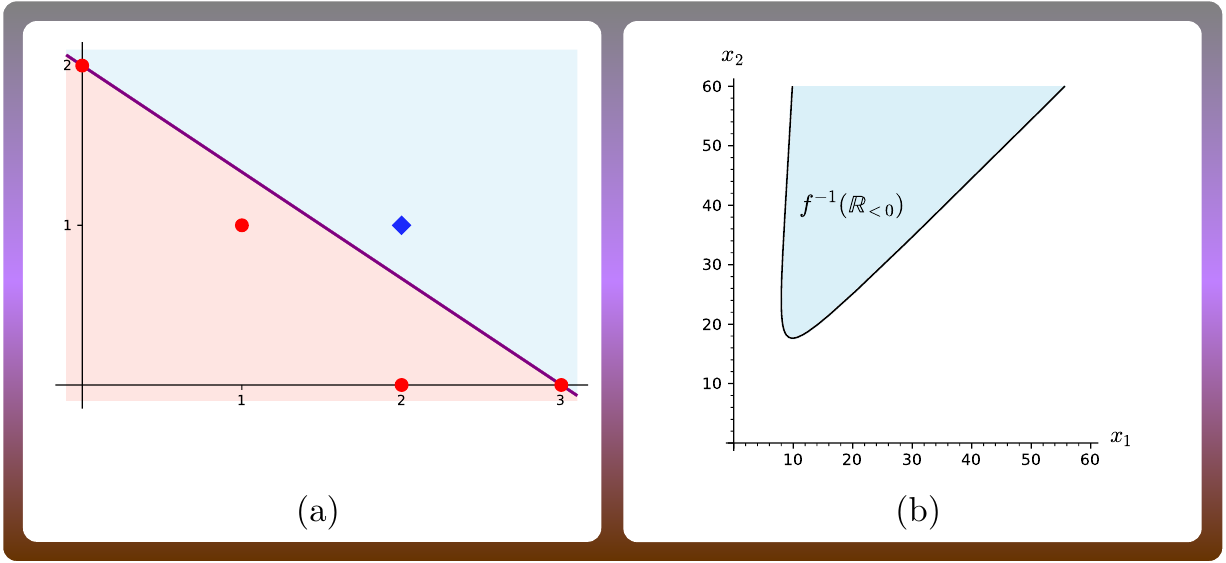}
\caption{{\small (a) A strict separating hyperplane (in purple) of the support of $f(x_{1},x_{2}) = x_{1}^{2} - x_{1}^{2}x_{2}  + 2x_{1}x_{2} +  x_{1}^{3}  + x_{2}^{2} $. Red dots correspond to positive exponents, the blue square corresponds to the only negative exponent.  (b) The preimage of the negative real half-line under $f$.}}\label{FIG_SepHyp}
\end{figure}

A hyperplane in $\mathbb{R}^{k}$ is the set of solutions $\mu \in \mathbb{R}^{k}$ of a linear equation
\begin{align*}
v \cdot \mu = a,
\end{align*}
where $v \in \mathbb{R}^{k} \setminus \{ 0\}$, $a \in \mathbb{R}$ and $v \cdot \mu$ denotes the Euclidean scalar product of two vectors. Each hyperplane has two sides, which are described by the linear inequalities
\begin{align*}
v \cdot \mu \leq a, \qquad \text{and} \qquad  v \cdot \mu \geq a.
\end{align*}
A hyperplane is called \textit{strictly separating} if the positive and negative exponents of $f$ are on different sides of the hyperplane and not all the negative exponents are on this hyperplane. For a geometric interpretation, we refer to Fig.~\ref{FIG_SepHyp}.

Strict separating hyperplanes can be used to decide upon path connectivity of the multistationarity region  using Theorem~\ref{Thm_ParamConn}   via the following theorem.

\begin{thm}[Preimage of negative real half-line]
\label{Thm_SepHyp}
\cite[Theorem 3.9]{DescartesHypPlane}
Let $f\colon \mathbb{R}^{k}_{>0} \to \mathbb{R}$ be a polynomial function.
If there exists a strict separating hyperplane of the support of $f$, then $f^{-1}(\mathbb{R}_{<0})$ is path connected and its closure equals $f^{-1}(\mathbb{R}_{\leq 0})$.
\end{thm}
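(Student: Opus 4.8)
The plan is to turn $f^{-1}(\R_{<0})$ into the open region strictly above the graph of a continuous function by means of a real-analytic change of coordinates on $\R^k_{>0}$ built from the separating hyperplane. Such a region is obviously path connected (indeed contractible), and its closure is the region on or above the graph, which will turn out to be $f^{-1}(\R_{\leq 0})$.

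First I would fix coordinates. Let $v\cdot\mu=a$ be a strict separating hyperplane; after possibly replacing $v$ by $-v$ we may assume $v\cdot\mu\leq a$ for all $\mu\in\sigma_+(f)$ and $v\cdot\mu\geq a$ for all $\mu\in\sigma_-(f)$, with $v\cdot\mu>a$ for at least one $\mu\in\sigma_-(f)$. Writing $x=\exp(u)$ and splitting $u=\tau v+w$ with $w$ in the orthogonal complement $v^{\perp}$, the map $x\mapsto (s,w):=(e^{\tau},w)$ is a diffeomorphism of $\R^k_{>0}$ onto $\R_{>0}\times v^{\perp}$ under which $x^{\mu}=s^{\,v\cdot\mu}\,e^{\,w\cdot\mu}$. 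Grouping the monomials of $f$ by the value of $v\cdot\mu$ and dividing by $s^{a}>0$ gives $s^{-a}f=P(s,w)+C(w)+Q(s,w)$, where $P$, $C$, $Q$ collect the terms with $v\cdot\mu<a$, $v\cdot\mu=a$, $v\cdot\mu>a$ respectively. The separation forces every exponent in $P$ to lie in $\sigma_+(f)$ and every exponent in $Q$ to lie in $\sigma_-(f)$, and the strictness forces $Q$ to be nonzero; the middle layer $C$ may contain exponents of either sign but does not depend on $s$.

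The key point is a monotonicity statement. Every monomial of $P$ is a positive multiple of $e^{w\cdot\mu}s^{\,v\cdot\mu-a}$ with $v\cdot\mu-a<0$, so $P\geq 0$ and $s\mapsto P(s,w)$ is non-increasing; every monomial of $-Q$ is a positive multiple of $e^{w\cdot\mu}s^{\,v\cdot\mu-a}$ with $v\cdot\mu-a>0$, so $-Q>0$ and $s\mapsto -Q(s,w)$ is strictly increasing. Hence $\Psi:=-s^{-a}f=(-Q)+(-P)+(-C)$ is, for each $w$, continuous and strictly increasing in $s\in(0,\infty)$ with $\Psi(s,w)\to+\infty$ as $s\to\infty$; so $\{s>0:\Psi(s,w)>0\}$ is a single open ray $(\rho(w),\infty)$ for a function $\rho\colon v^{\perp}\to[0,\infty)$ (the value $0$ occurring when $\Psi(\cdot,w)>0$ on all of $(0,\infty)$). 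As $f<0\iff\Psi>0$, the set $f^{-1}(\R_{<0})$ is in these coordinates exactly the open epigraph $\{(s,w):s>\rho(w)\}$. Next I would check that $\rho$ is continuous, which follows from the monotone structure since $\{w:\rho(w)>c\}=\{w:\Psi(c,w)<0\}$ and $\{w:\rho(w)<c\}=\bigcup_{0<c'<c}\{w:\Psi(c',w)>0\}$ are open for every $c>0$. Continuity of $\rho$ makes the epigraph path connected and makes its closure the closed epigraph $\{s\geq\rho(w)\}$, which by the strict monotonicity of $\Psi$ in $s$ equals $\{\Psi\geq 0\}=f^{-1}(\R_{\leq 0})$. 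Transporting these two statements back through the diffeomorphism proves the theorem.

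The one delicate point is the behaviour of $\Psi(\cdot,w)$ as $s\to 0^{+}$: it is what makes $\rho$ finite and continuous. That $\rho$ is finite at all uses the hypothesis, through the fact that the extreme layer of $f$ in the direction $v$ consists only of negative exponents, so $-Q$ grows without bound. Finiteness and continuity are clear when $P\not\equiv 0$ (then $\Psi(s,w)\to-\infty$), but when $P\equiv 0$ one has $\Psi(s,w)\to-C(w)$, so the middle layer $C$ determines whether $\rho(w)$ equals $0$ or is positive, and one must confirm that $\rho$ stays continuous across $\{C=0\}$. All of this is elementary one-variable analysis of the strictly increasing functions $\Psi(\cdot,w)$, so while it requires some care I expect no real obstruction.
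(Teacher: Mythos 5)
Your argument is correct and is essentially the argument behind \cite[Theorem 3.9]{DescartesHypPlane} (the present paper only cites that result; it even re-deploys the same technique in Section~\ref{S4_Appendix} via the paths $y\circ t^v$): after passing to exponential coordinates and rescaling by $s^{a}$, the separating hyperplane forces the function to be strictly monotone in the normal direction $v$ and eventually negative, so the negative set is the open epigraph of a continuous function, hence path connected with the right closure. The one delicate point you flag --- continuity of $\rho$ where $P\equiv 0$ and the middle layer changes sign --- is indeed handled by your sublevel/superlevel-set argument, so there is no gap.
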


For the running example, the supports of the two critical functions \eqref{Eq_running_convex} and \eqref{Eq_running_x2}
form a quadrilateral. In both cases, there is only one negative exponent, which is at a corner (vertex) of the quadrilateral. 
Hence, one can easily find strict separating hyperplanes for the supports of each polynomial. To get a geometric intuition, we investigate the numerator of \eqref{Eq_running_x2}. Its support lives in a $2$ dimensional subspace of $\mathbb{R}^{4}$, so we can project it onto the plane $\mathbb{R}^{2}$ and find a strict separating hyperplane there. The projected support is precisely that depicted in Fig.~\ref{FIG_SepHyp}(a).

Therefore, Theorem~\ref{Thm_SepHyp} holds for the running example and any of the two critical functions, and then, by Theorem~\ref{Thm_ParamConn} we conclude that the multistationarity region of the running example is path connected.

Note that  a strict separating hyperplane of the support of $f$ exists if the following system of linear inequalities has a solution $(v,a) \in \mathbb{R}^{k+1}$: 
\begin{align}
\label{Eq_Ineq1}
& v \cdot \alpha \leq a, \quad \text{for all } \alpha \in \sigma_+(f) \\
\label{Eq_Ineq2}
& v \cdot \beta \geq a, \quad \text{for all } \beta \in \sigma_-(f) \\
\label{Eq_Ineq3}
& \sum\nolimits_{\beta \in \sigma_-(f)} (v \cdot \beta - a) > 0.
\end{align}
For the polynomial in Fig.~\ref{FIG_SepHyp}, $v =(2,3)$, and $ a = 6$ form a solution to the system.
 
In practice, we determine whether the system of linear inequalities \eqref{Eq_Ineq1}-\eqref{Eq_Ineq3} has a solution as follows. First, we construct the polyhedral cone $C \subseteq \mathbb{R}^{k+1}$ defined by the inequalities \eqref{Eq_Ineq1}-\eqref{Eq_Ineq2}. Second, we pick a point $(v,a)$ in the relative interior of $C$. If there exists $\beta \in \sigma_-(f)$ such that $v \cdot \beta > a$, then $(v,a)$ satisfies also inequality  \eqref{Eq_Ineq3} and  a strict separating hyperplane exists. If such  $\beta$ does not exist, then a simple argument gives that $\sigma_-(f)$ is contained in the hyperplane defined by any $(w,b)\in C$, i.e.   $w \cdot \beta = b$ for all  $\beta \in \sigma_-(f)$ and all $(w,b)\in C$. Therefore a strictly separating hyperplane does not exist.

\smallskip
Theorem~\ref{Thm_SepHyp} gives a way to assert that the multistationarity region is path connected, but it is not informative if that is not the case. In \cite{DescartesHypPlane} additional results are given to include two path-connected components. One of these results
will be used to show that the multistationarity region of network in Fig.~\ref{fig:net}(c) has two path-connected components.

\medskip
\medskip
\noindent

\paragraph{\bf Model reduction for the simplification of the computations. }
Finding the critical function or the critical polynomial requires the computation of the determinant of a symbolic matrix, which can have a high computational cost  if the   matrix is large or the entries are long expressions in the symbolic variables. The next  theorem shows that it is possible to remove certain reverse reactions from the network, and use a critical function for the reduced network to study the multistationarity region of the original network, thereby reducing (often dramatically) the computational cost. 

\begin{thm}[Reduction and connectivity]
\label{Thm_reduction}
Consider a conservative reaction network $(\mathcal{S},\mathcal{R})$ without relevant boundary steady states. Assume that there exist species $\mathrm{X}_{1}, \dots,\mathrm{X}_{k} \in \mathcal{S}$ such that each $\mathrm{X}_{j}$ participates in exactly $3$ reactions of the form
\[ a_{k+1,j} X_{k+1} +\dots+ a_{n,j} X_{n} \xrightleftharpoons[\kappa_{3j}]{\kappa_{3j-1}} \mathrm{X}_j  \xrightarrow{\kappa_{3j-2}}  b_{k+1,j} X_{k+1} +\dots+ b_{n,j} X_{n}, \quad j=1,\dots,k.\]
Let $\tilde{g} \circ \tilde{\Phi}$ be a critical function of the reduced network obtained by removing the reactions corresponding to $\kappa_{3j}$ for $j = 1,\dots ,k$. Assume that the closure of $(\tilde{g} \circ \tilde{\Phi})^{-1}(\mathbb{R}_{<0})$ equals  $(\tilde{g} \circ \tilde{\Phi})^{-1}(\mathbb{R}_{\leq 0})$.

Then the number of path-connected components of  the multistationarity region for both the reduced and the original reaction network is at most the number of path-connected components of $(\tilde{g} \circ \tilde{\Phi})^{-1}(\mathbb{R}_{<0})$. 

In particular, if  $(\tilde{g} \circ \tilde{\Phi})^{-1}(\mathbb{R}_{<0})$ is path connected, then the multistationarity region of the original network $(\mathcal{S},\mathcal{R})$ is path connected.
\end{thm}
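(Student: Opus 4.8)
The plan is to verify that the reduced network again satisfies the hypotheses of Theorem~\ref{Thm_ParamConn}, and that its multistationarity region has the same number of path-connected components as that of the original network. The statement then follows by applying Theorem~\ref{Thm_ParamConn} to the reduced network and transporting the conclusion back.

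\textbf{Step 1 (the reduction is harmless for the stoichiometry).} First I would record the structural fact driving everything: the reaction $\mathrm{X}_j \to a_{k+1,j}X_{k+1}+\dots+a_{n,j}X_n$ with rate $\kappa_{3j}$ has stoichiometric (column) vector equal to the negative of that of $a_{k+1,j}X_{k+1}+\dots+a_{n,j}X_n \to \mathrm{X}_j$ with rate $\kappa_{3j-1}$. Hence deleting the $\kappa_{3j}$-columns does not change the column span $S$ of the stoichiometric matrix; in particular the rank $s$, a row-reduced matrix of conservation relations $W$, and all stoichiometric compatibility classes $\mathcal{P}_c$ are shared by the two networks. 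Also, any $w\in\R^n_{>0}$ with $wN=0$ — which exists because the original network is conservative — annihilates the reduced stoichiometric matrix, so the reduced network is conservative, hence dissipative. Finally I would check that the reduced network has no relevant boundary steady states: if $\tilde x$ were one, at rates $\tilde\kappa$, then choosing rate constants $\kappa$ for the original network that keep the retained rates, take each $\kappa_{3j}>0$ arbitrary, and replace $\tilde\kappa_{3j-1}$ by $\tilde\kappa_{3j-1}+\kappa_{3j}\tilde x_j/{\tilde x}^{A_j}$ whenever $\tilde x_j>0$ (leaving it unchanged otherwise, and noting ${\tilde x}^{A_j}>0$ in the former case by the steady-state equation for $\mathrm{X}_j$), one obtains $Nv_\kappa(\tilde x)=0$ again using $N_{3j}=-N_{3j-1}$; as $W$ is unchanged, $\tilde x$ would be a relevant boundary steady state of the original network, contradicting the hypothesis.

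\textbf{Step 2 (matching positive steady states through a reparametrization).} For $\kappa\in\R^r_{>0}$ I would set $\hat\kappa\in\R^{r-k}_{>0}$ by $\hat\kappa_{3j-1}=\kappa_{3j-1}\kappa_{3j-2}/(\kappa_{3j-2}+\kappa_{3j})$, $\hat\kappa_{3j-2}=\kappa_{3j-2}$ for $j=1,\dots,k$, and $\hat\kappa_\ell=\kappa_\ell$ on the retained reactions. Writing $x=(x_1,\dots,x_k,y)$ and $x^{A_j}$ for the reactant monomial of the $\kappa_{3j-1}$-reaction (a monomial in $y$ only, since $X_1,\dots,X_k$ occur only in the three distinguished reactions), the steady-state equation for $\mathrm{X}_j$ in the original network forces $x_j=\tfrac{\kappa_{3j-1}}{\kappa_{3j-2}+\kappa_{3j}}\,x^{A_j}$. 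Substituting these values into the remaining equations, a direct computation shows the $j$-th block contributes $\tfrac{\kappa_{3j-1}\kappa_{3j-2}}{\kappa_{3j-2}+\kappa_{3j}}\,x^{A_j}(b_{ij}-a_{ij})$ to $\dot x_i$; on the other hand, for the reduced network with rates $\hat\kappa$ the equation for $\mathrm{X}_j$ forces $x_j=\tfrac{\hat\kappa_{3j-1}}{\hat\kappa_{3j-2}}\,x^{A_j}=\tfrac{\kappa_{3j-1}}{\kappa_{3j-2}+\kappa_{3j}}\,x^{A_j}$ — the same value — and the same substitution yields precisely the same contributions. Hence the identity on the $x$-coordinates is a bijection between the positive steady state variety $V_{\kappa,>0}$ of the original network and $V_{\hat\kappa,>0}$ of the reduced one, preserving $\mathcal{P}_c$. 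Therefore $(\kappa,c)$ enables multistationarity for the original network if and only if $(\hat\kappa(\kappa),c)$ does for the reduced one.

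\textbf{Step 3 (transporting path-connectivity and concluding).} The map $\kappa\mapsto\bigl(\hat\kappa(\kappa),(\kappa_{3j})_{j=1}^{k}\bigr)$ is a diffeomorphism $\R^r_{>0}\to\R^{r-k}_{>0}\times\R^k_{>0}$, with explicit inverse $\kappa_{3j}=t_j$, $\kappa_{3j-2}=\tilde\kappa_{3j-2}$, $\kappa_{3j-1}=\tilde\kappa_{3j-1}(t_j+\tilde\kappa_{3j-2})/\tilde\kappa_{3j-2}$, and $\kappa_\ell=\tilde\kappa_\ell$ otherwise. Combined with Step 2, this exhibits the multistationarity region of the original network as homeomorphic — by a homeomorphism fixing the vector of total concentrations — to (the multistationarity region of the reduced network) $\times\,\R^k_{>0}$. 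Since $\R^k_{>0}$ is path connected, this product has the same number of path-connected components as the multistationarity region of the reduced network, so the two regions have equally many. By Step 1 the reduced network is dissipative and free of relevant boundary steady states, and by hypothesis the closure of $(\tilde g\circ\tilde\Phi)^{-1}(\R_{<0})$ equals $(\tilde g\circ\tilde\Phi)^{-1}(\R_{\le 0})$, so Theorem~\ref{Thm_ParamConn} bounds the number of path-connected components of the multistationarity region of the reduced network by that of $(\tilde g\circ\tilde\Phi)^{-1}(\R_{<0})$. Chaining the two gives the theorem, the last assertion being the case of a single component. The genuine work sits in Step 2 — eliminating the species $X_1,\dots,X_k$ cleanly enough to see the precise reparametrization $\hat\kappa$ appear and to recognize that the two reduced systems in $y$ coincide after it — together with the (shorter) verification in Step 1 that the reduced network inherits the absence of relevant boundary steady states; both rest on the identity $N_{3j}=-N_{3j-1}$ produced by removing a reverse reaction.
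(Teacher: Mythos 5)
Your proof is correct, but it takes a genuinely different route from the paper's. The paper never compares the two multistationarity regions directly; it works on the incidence variety $\mathcal{V}$, introduces the auxiliary set $\Theta = g^{-1}(\mathbb{R}_{\leq 0})\cap\pi^{-1}(\Omega)$, and proves a one-reaction-at-a-time inequality $b_0(\Theta)\le b_0(\tilde{\Theta})$ (Proposition~\ref{Lemma_reduction}) by an explicit path-lifting construction: a path in $\tilde{F}(\tilde{\Theta})$ is lifted to $\Theta$ by reconstructing the eliminated coordinate $x_n$ and the rate constant $\kappa_{r-1}$ along the path, and the fibre over the endpoint is shown to be path connected. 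The theorem then follows from the chain $b_0(\Omega)\le b_0(\Theta)\le b_0(\tilde{\Theta}_1)\le\dots\le b_0(\tilde{\Theta}_k)\le b_0((\tilde{g}\circ\tilde{\Phi})^{-1}(\mathbb{R}_{<0}))$ via Lemmas~\ref{Lemma_KeyLemma1} and~\ref{Lemma_KeyLemma}. You instead stay entirely in parameter space: the reparametrization $\kappa\mapsto(\hat{\kappa}(\kappa),(\kappa_{3j})_{j})$ --- whose first block is exactly the paper's compatibility condition $\eta(\kappa)=\tilde{\eta}(\tilde{\kappa})$, since $\hat{\kappa}_{3j-1}=\kappa_{3j-1}\kappa_{3j-2}/(\kappa_{3j-2}+\kappa_{3j})$ --- exhibits $\Omega$ as homeomorphic to $\tilde{\Omega}\times\mathbb{R}^{k}_{>0}$, and you then invoke Theorem~\ref{Thm_ParamConn} for the reduced network as a black box. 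The algebraic core (eliminating $x_j=\kappa_{3j-1}x^{A_j}/(\kappa_{3j-2}+\kappa_{3j})$ and matching the residual systems, which is the content of the paper's identity $B(\kappa)h_{\kappa,c}(x)=\tilde{B}(\tilde{\kappa})\tilde{h}_{\tilde{\kappa},c}(x)$) is the same in both arguments, and your simultaneous elimination of all $k$ species is legitimate because each $x^{A_j}$ involves only $x_{k+1},\dots,x_n$. What your packaging buys is the stronger conclusion $b_0(\Omega)=b_0(\tilde{\Omega})$ --- an equality rather than an inequality --- while avoiding the curve-selection and path-lifting machinery; the price is that you must verify separately (as you do in Step 1) that the reduced network inherits conservativity and the absence of relevant boundary steady states so that Theorem~\ref{Thm_ParamConn} applies to it, a point the paper obtains as a by-product of its steady-state identity. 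Both verifications in your Step 1 are sound, resting as you note on $N_{3j}=-N_{3j-1}$.
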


The theorem might look a bit technical, but it is simply saying that it is enough to apply the algorithm to a smaller network obtained by removing the reverse reactions $\k_{3j}$, and the conclusions can be translated to the original network. 
Removal of reverse reactions contribute to the reduction of the computational cost as each of them gives an extreme vector to the flux cone (see Lemma~\ref{Lemma_reductionFlux}(a) in Section~\ref{sec:proofs}). Making reversible reactions irreversible removes this extreme vector and thereby  the matrix $\tilde{M}(h,\lambda)$ depends on one less variable.

\medskip
To illustrate Theorem~\ref{Thm_reduction}, we consider the reaction network representing a signaling \textbf{cascade with shared kinase} in Fig.~\ref{fig:net}(i).
This reaction network describes the phosphorylation of two substrates $S$ and $P$ with one phosphorylation site. The phosphorylation of $S$ is catalyzed by a kinase $E$, while the phosphorylation of $P$ is catalyzed both by $E$ and  by the phosphorylated form of $S$. The dephosphorylation processes are governed by two different phosphatases $F_{1}$ and $F_{2}$ \cite{Examples}.

One checks using the above criteria that this network   is conservative and  has no relevant boundary steady states. Then, Theorem \ref{Thm_ParamConn} for connectivity of the multistationarity region can be applied. 
A matrix of extreme vectors, formed by a minimal collection of extreme vectors generating the flux cone  is

{\small \begin{align*}
E =
\left[\begin{array}{c>{\columncolor{gray!20}}c>{\columncolor{gray!20}}c>{\columncolor{gray!20}}cc>{\columncolor{gray!20}}c>{\columncolor{gray!20}}cc}
0 & 0 & 0 & 0 & 0 & 0 & 0 & 1 \\
0 & 0 & 0 & 0 & 0 & 0 & 1 & 1 \\
0 & 0 & 0 & 0 & 0 & 0 & 1 & 0 \\
0 & 0 & 0 & 0 & 0 & 0 & 0 & 1 \\
0 & 0 & 0 & 0 & 0 & 1 & 0 & 1 \\
0 & 0 & 0 & 0 & 0 & 1 & 0 & 0 \\
1 & 0 & 0 & 0 & 0 & 0 & 0 & 0 \\
1 & 0 & 0 & 1 & 0 & 0 & 0 & 0 \\
0 & 0 & 0 & 1 & 0 & 0 & 0 & 0 \\
0 & 0 & 0 & 0 & 1 & 0 & 0 & 0 \\
0 & 0 & 1 & 0 & 1 & 0 & 0 & 0 \\
0 & 0 & 1 & 0 & 0 & 0 & 0 & 0 \\
1 & 0 & 0 & 0 & 1 & 0 & 0 & 0 \\
1 & 1 & 0 & 0 & 1 & 0 & 0 & 0 \\
0 & 1 & 0 & 0 & 0 & 0 & 0 & 0
\end{array}\right].
\end{align*}}%
The highlighted column extreme vectors correspond to the $5$ reversible reactions of the type in Theorem~\ref{Thm_reduction} (the $\k_{3j}$ in the theorem). Computing the determinant of \eqref{Eq_Mtilde} takes approximately $1.5$ minutes. 
The critical polynomial has $20$ variables and $5312$ terms.

Following Theorem \ref{Thm_reduction}, we remove the reactions corresponding to $\kappa_{3},\kappa_6,\kappa_9,\kappa_{12},\kappa_{15}$. The reduced network has the form

\begin{equation*}
\begin{aligned}
\mathrm{S} + \mathrm{E} & \xrightarrow{\kappa_2} \mathrm{S} \mathrm{E} \xrightarrow{\kappa_{1}} \mathrm{S}_{\rm p} + \mathrm{E}  \qquad &\mathrm{S}_{\rm p}+ \mathrm{F}_{1} & \xrightarrow{\kappa_5} \mathrm{S}_{\rm p}\mathrm{F}_{1} \xrightarrow{\kappa_{4}} \mathrm{S} + \mathrm{F}_{1}\\
\mathrm{P} + \mathrm{S}_{\rm p} & \xrightarrow{\k_8} \mathrm{P} \mathrm{S}_{\rm p} \xrightarrow{\kappa_{7}} \mathrm{P}_{\rm p} + \mathrm{S}_{\rm p}   & \mathrm{P} + \mathrm{E} &  \xrightarrow{\kappa_{11}} \mathrm{P}\mathrm{E} \xrightarrow{\kappa_{10}} \mathrm{P}_{\rm p} + \mathrm{E}\\
\mathrm{P}_{\rm p} + \mathrm{F}_{2} &  \xrightarrow{\kappa_{14}} \mathrm{P}_{\rm p}\mathrm{F}_{2} \xrightarrow{\kappa_{13}} \mathrm{P} + \mathrm{F}_{2}.
\end{aligned}
\end{equation*}
A matrix of extreme vectors is now
{\small \begin{align*}
\begin{bmatrix}
0 & 0 & 1 \\
0 & 0 & 1 \\
0 & 0 & 1 \\
0 & 0 & 1 \\
1 & 0 & 0 \\
1 & 0 & 0 \\
0 & 1 & 0 \\
0 & 1 & 0 \\
1 & 1 & 0 \\
1 & 1 & 0
\end{bmatrix}.
\end{align*}}%
The extreme vectors are the non-highlighted vectors in the matrix $E$ above, with the entries corresponding to the reverse reactions removed (every third row of $E$). 
Computing the critical polynomial for the reduced network takes only $4$ seconds. This critical polynomial is much simpler than the one for the full network. It has $15$ variables and $204$  terms.

This example illustrates that the reduction in Theorem~\ref{Thm_reduction} might reduce the computational cost substantially. On one hand, the computation of the critical polynomial is faster and, on the other, the critical polynomial itself has less variables and terms, and therefore checking \textbf{(Step 3)} becomes faster as well.  In the next section, we investigate  networks where the benefit of applying network  reduction and Theorem~\ref{Thm_reduction}  is more dramatical. For example, 
 for two of the networks, computing the critical polynomial for the full network turned out to be infeasible, but the computation became possible for the reduced network. By means of Theorem~\ref{Thm_reduction}, we could assert connectivity of the multistationarity region for the full network (see Table~\ref{Table} for more detail).  This illustrates that Theorem~\ref{Thm_reduction} allows us to apply our approach to networks that were originally too large.

An important observation is that the existence of a strict separating hyperplane for a network or for a reduced version of it like in Theorem~\ref{Thm_reduction} are independent. That is, if we cannot find a strict separating hyperplane for the reduced network, it could still be that it exists for the original network. Also, the existence of this hyperplane depends  on the choice of critical function, that is, of the parametrization.

\medskip
\noindent

\paragraph{\bf Algorithm for path connectivity.}
We conclude this section by giving a procedure that checks a sufficient criterion for connectivity of the multistationarity region with no user intervention. Since most of the steps rely on solving linear inequalities, we implemented the algorithm using the computer algebra system \texttt{SageMath} \cite{sagemath}. 
The code is given in the \textbf{Supporting Information}. We would like to emphasize that the multistationarity region could still be path connected, even if our algorithm terminates inconclusively. 

\begin{algo}
\label{Algo_Conn}
\textbf{Input:} a reaction network
\begin{itemize}
\item[\textbf{(Step 1)}] Check that the reaction network is conservative and that it does not have relevant boundary steady states using siphons.
\item[\textbf{(Step 2)}] Compute the convex parametrization map $\Psi$ if the network is consistent,
and the critical polynomial $g \circ \Psi$ from \eqref{Eq_q}.
\item[\textbf{(Step 3)}] Decide whether a strict separating hyperplane of the support of  $g \circ \Psi$ exists.
\item[\textbf{(Step 4)}] Eventually repeat Steps 1-3 with a reduced network as in Theorem~\ref{Thm_reduction}. 
\end{itemize}
\textbf{Output:} `The parameter region of multistationarity is path connected' or `The algorithm is inconclusive'.
\end{algo}

\section{Investigating connectivity in relevant biochemical networks}\label{sec:app}
We now demonstrate that Algorithm~\ref{Algo_Conn} is useful for realistic networks and that the number of connected components of the multistationarity region can be understood for several relevant networks in cell signaling of moderate size.

We start by going through the algorithm with two small networks: first, with the module regulating the cell cycle shown in Fig.~\ref{fig:net}(a), and then,  with the simplified hybrid histidine kinase network in Fig.~\ref{fig:net}(b). 
The corresponding matrices and the critical polynomials become more complicated than for the running example, but still, are small enough to be displayed here. 

Afterwards we analyze the rest of the networks   in Fig.~\ref{fig:net}. Additionally, we consider the extensions of Fig.~\ref{fig:net}(e-f), with two phosphorylation sites, to several phosphorylation sites, and explore the strengths and weaknesses of the algorithm. When increasing the network size, the computation of the critical polynomial becomes unfeasible and we apply the reduction from Theorem~\ref{Thm_reduction}.

We summarize the main properties of all the applications of the algorithm discussed in this work in Table \ref{Table}.
Table \ref{Table} shows  the number of species,  reactions, and  extreme vectors of the reaction network. If the critical polynomial can be computed, it shows the number of positive and negative exponents of the critical polynomial and whether a strict separating hyperplane of the support exists. The same computations are repeated with the reduced network  of  Theorem~\ref{Thm_reduction}, and we report the same data except the number of species and reactions.

\subsection{Small networks}

\medskip
\noindent

\paragraph{\bf Cell cycle regulating module. } 
 We consider the model proposed in \cite{Slepchenko} for the second module that regulates the cell’s transition from G2 phase to M-phase \cite{tyson_cellcycle}, which is shown in Fig.~\ref{fig:net}(a). This model has been analyzed for bistability in \cite{domijan_bistability}. With the order of species $\mathrm{C},\mathrm{C}^+,\mathrm{M},\mathrm{M}^+,\mathrm{W},\mathrm{W}^+$, the stoichiometric matrix, a matrix of conservation relations and a matrix of extreme vectors are:
{\small 

\begin{align*}
N  & =\left[ \begin{array}{rrrrrr}
0 & 1 & 0 & 0 & -1 & 0 \\
0 & -1 & 0 & 0 & 1 & 0 \\
1 & 0 & -1 & 0 & 0 & 0 \\
 \hspace{-0.1cm}-1 & 0 & 1 & 0 & 0 & 0 \\
0 & 0 & 0 & -1 & 0 & 1 \\
0 & 0 & 0 & 1 & 0 & -1
\end{array}\right], \quad
W = \begin{bmatrix}
1 & 1 & 0 & 0 & 0 & 0 \\
0 & 0 & 1 & 1 & 0 & 0 \\
0 & 0 & 0 & 0 & 1 & 1
\end{bmatrix}, \quad E = \begin{bmatrix}
0 & 0 & 1 \\
0 & 1 & 0 \\
0 & 0 & 1 \\
1 & 0 & 0 \\
0 & 1 & 0 \\
1 & 0 & 0
\end{bmatrix}.
\end{align*}}%
The sum of the three rows of $W$ gives a positive vector and hence the network is conservative. We further verified that the network has no relevant boundary steady states. Furthermore, $E$ has no zero row, so the network is consistent and the critical polynomial can be found using \eqref{Eq_q}. 
The matrix $\tilde{M}(h,\lambda)$ is found  by replacing the first, third and fifth rows of 
$N\diag(E\lambda)A^{\top}\diag(h)$ by the rows of $W$:
{\small  \begin{align*}
\begin{bmatrix}
1 & 1 & 0 & 0 & 0 & 0 \\
{\lambda}_{2} {h}_{1} & -{\lambda}_{2} {h}_{2} & -{\lambda}_{2} {h}_{3} & 0 & 0 & 0 \\
0 & 0 & 1 & 1 & 0 & 0 \\
-{\lambda}_{3} {h}_{1} & 0 & {\lambda}_{3} {h}_{3} & -{\lambda}_{3} {h}_{4} & {\lambda}_{3} {h}_{5} & 0 \\
0 & 0 & 0 & 0 & 1 & 1 \\
0 & 0 & {\lambda}_{1} {h}_{3} & 0 & {\lambda}_{1} {h}_{5} & -{\lambda}_{1} {h}_{6}
\end{bmatrix}.
\end{align*}}%
Since $s=3$, the negative of the determinant of $\tilde{M}(h,\lambda)$ gives the critical polynomial: 
\begin{align*}
(g\circ \Psi)(h,\lambda) &=  ( -{h}_{1} {h}_{3} {h}_{5} + {h}_{1} {h}_{4} {h}_{5}  + {h}_{2} {h}_{4} {h}_5  + {h}_{2} {h}_{3} {h}_{6}   + {h}_{1} {h}_{4} {h}_{6}   + {h}_{2} {h}_{4} {h}_{6} ){\lambda}_{1} {\lambda}_{2} {\lambda}_{3}.
\end{align*} 
A strict separating hyperplane exists, for example $v \cdot (h_1,\dots,h_6,\lambda_1,\lambda_2,\lambda_3)= 2$ with
\[ v = (1,0,1,0,1,0,0,0,0).  \]
Indeed, $(g\circ \Psi)(h,\lambda) $ has $6$ monomials, all with exponent $(1,1,1)$ for $\lambda$, and for $h$ they have exponents $(1,0,1,0,1,0)$, $(1,0,0,1,1,0)$, $(0,1,0,1,1,0)$, $(0,1,1,0,0,1)$, $(1,0,0,1,0,1)$, $(0,1,0,1,0,1)$. The first exponent is negative, and the scalar product  with $v$ returns the value $3$, which is strictly larger than $2$.  
The other exponents correspond to positive coefficients, and their scalar product with $v$ give the values $2,1,1,1,0$ respectively. All of them are smaller or equal to $2$. Therefore, the condition for being a strict separating hyperplane holds, and we conclude that the multistationarity region of this network is path connected. 

\medskip
\noindent

\paragraph{\bf Hybrid histidine kinase. } The hybrid histidine kinase network in Fig.~\ref{fig:net}(b) comprises a hybrid histidine kinase HK with the domain REC embedded, and separate histidine phospho-transfer domain Hpt.  
This reaction network has been studied in \cite{HHK_paper}, where it was shown that the network displays multistationarity and a (complicate) description of the set of parameters with $3$ steady states is given. It is  further known that there is a choice of   total concentrations such that the network is multistationary if and only if $\k_3>\k_1$, see \cite{PLOS_IdParaRegions} (this  set is the projection of the multistationarity region onto the space of  reaction rate constants). It was not known whether the full multistationarity region in $\k$ and $c$ is path connected.

With the order of species $\mathrm{HK}_{00},\mathrm{HK}_{{\rm p}0},\mathrm{HK}_{0{\rm p}},\mathrm{HK}_{{\rm p}{\rm p}},\mathrm{Hpt},\mathrm{Hpt}_{{\rm p}}$, 
 the stoichiometric matrix $N$, a matrix of conservation relations  $W$, and a matrix $E$ whose columns are a minimal set of extreme vectors are
{\small  \begin{align*}
 N = \left[\begin{array}{rrrrrr}
-1 & 0 & 0 & 1 & 0 & 0 \\
1 & -1 & 0 & 0 & 1 & 0 \\
0 & 1 & -1 & -1 & 0 & 0 \\
0 & 0 & 1 & 0 & -1 & 0 \\
0 & 0 & 0 & -1 & -1 & 1 \\
0 & 0 & 0 & 1 & 1 & -1
\end{array}\right], \quad
 W = \begin{bmatrix}
1 & 1 & 1 & 1 & 0 & 0 \\
0 & 0 & 0 & 0 & 1 & 1
\end{bmatrix}, \quad E =  \begin{bmatrix}
0 & 1 \\
1 & 1 \\
1 & 0 \\
0 & 1 \\
1 & 0 \\
1 & 1
\end{bmatrix}.
\end{align*}}%

The network is conservative, consistent, and has no relevant boundary steady states. By replacing the first and fifth rows of  $N\diag(E\lambda)A^{\top}\diag(h)$ by the rows of $W$ we find the matrix $\tilde{M}(h,\lambda)$ and its determinant  gives the critical polynomial: 
\begin{align*}
(g\circ \Psi)(h,\lambda) & = {h}_{1} {h}_{2} {h}_{3} {h}_{5} {\lambda}_{1}^{3} {\lambda}_{2} + 3 \,  {h}_{1} {h}_{2} {h}_{3} {h}_{5} {\lambda}_{1}^{2} {\lambda}_{2}^{2}+ 2 \,  {h}_{1} {h}_{2} {h}_{3} {h}_{5}{\lambda}_{1} {\lambda}_{2}^{3} +  {h}_{1} {h}_{2} {h}_{4} {h}_{5} {\lambda}_{1}^{2} {\lambda}_{2}^{2} \\
 & + {h}_{1} {h}_{2} {h}_{4} {h}_{5}{\lambda}_{1}  {\lambda}_{2}^{3}  -  {h}_{2} {h}_{3} {h}_{4} {h}_{5}{\lambda}_{1}^{3} {\lambda}_{2} -{h}_{2} {h}_{3} {h}_{4} {h}_{5}  {\lambda}_{1}^{2} {\lambda}_{2}^{2} + {h}_{1} {h}_{2} {h}_{3} {h}_{6} {\lambda}_{1}^{3} {\lambda}_{2}  \\ 
 & +  2 \,  {h}_{1} {h}_{2} {h}_{3} {h}_{6} {\lambda}_{1}^{2} {\lambda}_{2}^{2}+
 {h}_{1} {h}_{2} {h}_{3} {h}_{6}  {\lambda}_{1} {\lambda}_{2}^{3} + {h}_{1} {h}_{2} {h}_{4} {h}_{6} {\lambda}_{1}^{3} {\lambda}_{2} + 2 \, {h}_{1} {h}_{2} {h}_{4} {h}_{6}{\lambda}_{1}^{2} {\lambda}_{2}^{2}  \\ &+  {h}_{1} {h}_{2} {h}_{4} {h}_{6}{\lambda}_{1} {\lambda}_{2}^{3} + {h}_{1} {h}_{3} {h}_{4} {h}_{6} {\lambda}_{1}^{3} {\lambda}_{2} 
  + 2 \,  {h}_{1} {h}_{3} {h}_{4} {h}_{6} {\lambda}_{1}^{2} {\lambda}_{2}^{2}+ {h}_{1} {h}_{3} {h}_{4} {h}_{6} {\lambda}_{1} {\lambda}_{2}^{3} \\ & +  {h}_{2} {h}_{3} {h}_{4} {h}_{6}{\lambda}_{1}^{3} {\lambda}_{2} + 2 \, {h}_{2} {h}_{3} {h}_{4} {h}_{6} {\lambda}_{1}^{2} {\lambda}_{2}^{2} +  {h}_{2} {h}_{3} {h}_{4} {h}_{6}{\lambda}_{1} {\lambda}_{2}^{3}.
\end{align*}
The polynomial has $8$ variables and $17$ positive and $2$ negative coefficients. A strict separating hyperplane of its support is given by the equation
\[ (-5, -5, -1, 0, 5, 0,0,0) \cdot \mu = -3.\]
Using Theorem~\ref{Thm_ParamConn}, it follows that  the multistationarity region for the hybrid histidine kinase network is path connected.

\subsection{Phosphorylation cycles}
We investigate models for phosphorylation and dephosphorylation of a substrate $S$ with $m$ binding sites with processes catalyzed by a kinase $E$ and one or more phosphatases $F$.

\medskip
\noindent

\paragraph{\bf Sequential and distributive phosphorylation cycles.} 
We first assume that phosphorylation and dephosphorylation occurs sequentially and distributively \cite{SH09}:  
 the kinase $E$ catalyzes the phosphorylation one site at a time  in a given order, while the phosphatase $F$ dephosphorylates in the reverse order, also one site at a time. Under these assumptions, 
 the network  for $m = 2$ sites is shown in Fig.~\ref{fig:net}(e).

The dynamics of phosphorylation cycles have  been  intensively studied, e.g. \cite{MultiSite_Phosph_Gunaw, MultiSite_Phosph_Sontag, MultiSite_Phosph_Dicken,fein-024,rendall_feliu,FHC14,Conradi_HopfBifur,MultDualPhos,conradi-mincheva,G-distributivity,G-PNAS,SH09,bihan-signs}. 
In particular, it is known that they are multistationary for $m\geq 2$, and there are choices of parameter values where they have  $m+1$ steady states for $m$ even, and $m$ for $m$ odd \cite{MultiSite_Phosph_Sontag}, with half of them plus one being asymptotically stable \cite{rendall_feliu}. It has been conjectured that these networks can have up to $2m-1$ steady states, but this has only been established for small $m$ \cite{FHC14}. 
These networks are in the class of post-translational modification networks, which are conservative and consistent, and by the results in \cite{CatalystInterm}, since the underlying substrate network is strongly path connected, they do not have  relevant boundary steady states. 

The phosphorylation cycle with $m=2$ binding sites has $n = 9$ species and the flux cone has $\ell = 6$ extreme vectors. 
Then, the corresponding critical polynomial $g \circ \Psi$ has $15$ variables. It is a big polynomial with $288$ positive and $112$ negative exponents. Despite the large number of exponents, a strict separating hyperplane of its support can be found in less than a second. Our algorithm (and Theorem~\ref{Thm_ParamConn}) can be applied to conclude that the multistationarity region for the sequential and distributive phosphorylation cycle with two binding sites is path connected. 

By increasing the number  $m$ of binding sites, the reaction network size increases systematically. For example, for $m = 3$, the reaction network becomes
\begin{align*}
\mathrm{S} +\mathrm{E} \xrightleftharpoons[\kappa_2]{\kappa_1} \mathrm{ES} \xrightarrow{\kappa_{3}} \mathrm{S}_{\rm p} + \mathrm{E}  &   \xrightleftharpoons[\kappa_8]{\kappa_7} \mathrm{ES}_{{\rm p}} \xrightarrow{\kappa_{9}} \mathrm{S}_{2} + \mathrm{E} \xrightleftharpoons[\kappa_{14}]{\kappa_{13}}  \mathrm{ES}_{{\rm p}{\rm p}} \xrightarrow{\kappa_{15}}  \mathrm{S}_{{\rm p}{\rm p}{\rm p}} +  \mathrm{E} \\
 \mathrm{S}_{{\rm p}{\rm p}{\rm p}} +  \mathrm{F}\xrightleftharpoons[\kappa_{17}]{\kappa_{16}}  \mathrm{FS}_{{\rm p}{\rm p}{\rm p}} \xrightarrow{\kappa_{18}}
\mathrm{S}_{{\rm p}{\rm p}} + \mathrm{F} & \xrightleftharpoons[\kappa_{11}]{\kappa_{10}} \mathrm{FS}_{{\rm p}{\rm p}} \xrightarrow{\kappa_{12}}  \mathrm{S}_{{\rm p}} + \mathrm{F}  \xrightleftharpoons[\kappa_5]{\kappa_4} \mathrm{FS}_{{\rm p}} \xrightarrow{\kappa_{6}} \mathrm{S} + \mathrm{F}.
\end{align*}
The critical polynomial $g \circ \Psi$ has $2560$ positive and $1536$ negative exponents. The algorithm confirms that the multistationarity region is path connected in $96$ seconds. 

For $m=4$, we could not compute the critical polynomial for the original network due to computer memory constraints. This shows that the computation of the critical polynomial is the bottleneck of the algorithm.  After removing all reverse reactions as in Theorem~\ref{Thm_reduction}, we could compute the critical polynomial, but it does not have a strict separating hyperplane. Therefore, 
for this family of networks we know that the multistationarity region is path connected for $m=2,3$, but it is unknown for $m\geq 4$. 
Previous work has shown that the projection of the multistationarity region onto the reaction rate constants $\k$ is path connected for all $m\geq 2$, see \cite{Multnsite}, so we conjecture that the full multistationarity region is path connected for all $m\geq 2$.

\medskip
\noindent

\paragraph{\bf Phosphorylation cycles with different phosphatases.} 
Different mechanisms for multisite phosphorylation have been observed, and in particular, phosphorylation and dephosphorylation of the different sites of a phosphate might not be catalyzed by the same kinase or phosphatase e.g. \cite{pmid11116185,pmid20186153}. If all steps are carried out by different enzymes, then multistationarity does not arise (see \cite{Examples} for $m=2$). Therefore, we 
 consider the scenario where the phosphorylation occurs sequentially and the kinase acts in a distributive way, but we assume that each dephosphorylation step is governed by \textbf{different phosphatases} $F_{1},\dots,F_{m}$ (see Fig.~\ref{fig:net}(f) for $m=2$). These  networks are also conservative and do not have relevant boundary steady states for any $m$. 

For $m=2$ and $m=3$, the algorithm finds a strict separating hyperplane, and hence the multistationarity region is path connected. For $m=4$, the computation of the critical polynomial via the symbolic determinant in (Step 2) of the algorithm was too demanding, and the computer used for the tests ran out of memory. We employed the reduction approach given in Theorem~\ref{Thm_reduction} and removed eight reactions. The critical polynomial of the reduced network is significantly simpler: it has $22$ variables and $178$ monomials. Its support has a strict separating hyperplane. Thus, the multistationarity region of the original network is path connected by Theorem~\ref{Thm_reduction}  and Theorem~\ref{Thm_SepHyp}.

\medskip
\noindent

\paragraph{\bf Weakly irreversible phosphorylation cycles.} 
The two-site sequential and distributive phosphorylation network given in Fig.~\ref{fig:net}(e) assumes that each phosphorylation step proceeds via a Michaelis-Menten mechanism. This is referred to as 
\emph{strong irreversibility} in  \cite{MultiSite_Phosph_Gunaw}.
More plausible mechanisms have been argued to include the complex formation of the product with the enzyme, as for example a mechanism of this form would allow:
\[   \mathrm{S}  +\mathrm{E}  \rightleftharpoons \mathrm{SE}  \rightarrow \mathrm{S}_{\rm p} \mathrm{E}    \rightleftharpoons   \mathrm{S}_{\rm p} + \mathrm{E}.\] 

    A model incorporating this \textbf{weak irreversibility} at the dephosphorylation stage was proposed and analyzed for bistability in \cite[Scheme 2]{Markevich-mapk}. In the model,  dephosphorylation of ERK  by the phosphatase MKP3
proceeds as shown in Fig.~\ref{fig:net}(g) \cite{pmid11432864}. 
For this network, our algorithm concludes that the multistationarity region is path connected. 

\smallskip
A model with \textbf{full weak irreversibility}, that is, for both  the phosphorylation and dephosphorylation processes, is shown in Fig.~\ref{fig:net}(h). The shape of the multistationarity region for 
this type of models was analyzed in \cite{MultiSite_Phosph_Gunaw}, where it was concluded by means of a numerical approach that the multistationarity region  in some aggregated steady state parameters is connected. 
Neither for the original network nor for the reduced network, a strict separating hyperplane exists, and hence our algorithm is inconclusive. It remains thus open to be confirmed whether the multistationarity region is path connected.

\medskip
\noindent
\paragraph{\bf Extracellular signal-regulated kinase (ERK) network.} 
Dual-site phosphorylation and dephosphorylation of extracellular signal-regulated kinase   has an important role in the regulation of many cellular activities \cite{SHAUL}, and a better knowledge of the dynamical properties of the ERK network might facilitate the prediction of this network's response to environmental changes or drug treatments \cite{FUTRAN}.  
This network, analyzed in \cite{RUBINSTEIN, OBATAKE, CONRADI} and shown in Fig.~\ref{fig:net}(k),  comprises as well phosphorylation of a substrate in two sites, but not in a distributive and sequential way.

Using Algorithm \ref{Algo_Conn}, we conclude that the multistationarity region for the ERK network is path connected.

\begin{table}[!t]
\begin{adjustwidth}{-0.6in}{0in} 
\centering
\caption{
{\bf Summary  of the algorithm on selected systems}}
\hspace{-0.2cm}\begin{tabular}{|l+l|c|c|c|c|c|c|c|c|c|c|}
  \small{Reaction network}& \small{$n$} & \small{$r$} & \small{$\ell$} & \footnotesize{$\#\sigma_{+}(g \circ \Psi)$} &  \footnotesize{$\#\sigma_{-}(g \circ \Psi)$}  &  \small{sep. hyp.} & \small{$\tilde{\ell}$}  &  \footnotesize{$\#\sigma_{+}(\tilde{g} \circ \tilde{\Psi})$}  &  \footnotesize{$\#\sigma_{-}(\tilde{g} \circ \tilde{\Psi})$}  & \small{sep. hyp.}\\ \thickhline
\hline
\small{(a) Cell cycle} & \small{6} & \small{6} & \small{3} & \small{5} & \small{1} & \small{YES} & \small{3} & \small{5} & \small{1} &\small{YES}\\ \hline
\small{(b) Hybrid histidine kinase}  & \small{6} & \small{6} & \small{2} & \small{17} & \small{2}& \small{YES} & \small{2 }& \small{17} & \small{2} & \small{YES} \\ \hline
 \small{(c) Allosteric regulation}  & \small{9} & \small{10} & \small{5} & \small{168} & \small{8}& \small{NO} & \small{3}& \small{42} & \small{2} & \small{NO} \\ \hline
\small{(d) Covalent  regulation} & \small{12} & \small{14} & \small{7} & \small{1856} & \small{32} & \small{YES} & \small{3} & \small{116} & \small{2} &\small{YES} \\ \hline
\small{(e) 2-site phosph. cycle} & \small{9} & \small{12} & \small{6}  & \small{288} & \small{112} & \small{YES} & \small{2} & \small{18} & \small{7} & \small{YES} \\ \hline
 \small{3-site phosph. cycle}  & \small{12} & \small{18} & \small{9} & \small{2560} & \small{1536} & \small{YES}  & \small{3} & \small{40} & \small{24} & \small{YES} \\ \hline
 \small{4-site phosph. cycle}  &  \small{15} &  \small{24} &  \small{12} &  \small{??} &   \small{??} &   \small{??} &  \small{4} &  \small{75} &  \small{54} &  \small{NO} \\ \hline
\small{(f) 2-site: different phosphat.} & 10 & 12 & 6  & 304 & 48 & YES & 2 & 19 & 3 & YES \\ \hline
 \small{3-site: diff. phosphat.}  & \small{14} &  \small{18} &  \small{9} &  \small{3264} &  \small{960} &   \small{YES}  &  \small{3} &  \small{51} &  \small{15} &  \small{YES} \\ \hline
 \small{4-site: diff. phosphat.}  &  \small{18} &  \small{24} & \small{ 12} &  \small{??} &   \small{??} &   \small{??} &  \small{4} &  \small{127} &  \small{51} &  \small{YES} \\ \hline
\small{(g) 2-site: weak. irrev. dephos.}  & \small{11} & \small{16} & \small{8} & \small{2176} &  \small{640} & \small{YES} & \small{4} & \small{136} & \small{40} &\small{YES} \\ \hline
 \small{(h) 2-site: fully weak. irrev.}  &  \small{13} &  \small{20} &  \small{10} &  \small{16320} &   \small{3648} &  \small{NO} &  \small{6} &  \small{1020} &  \small{228} &  \small{NO} \\ \hline
 \small{(i) Two-layer, shared kinase }  &  \small{12} &  \small{15} &  \small{8} &  \small{5088} &   \small{224} &  \small{YES} &  \small{3} &  \small{195} &  \small{9} & \small{YES} \\ \hline
\small{(j) Two layer MAPK cascade}  & \small{14} & \small{18} & \small{9} & \small{5120} &  \small{1408} & \small{YES} & \small{3} & \small{80} & \small{22} &\small{YES} \\ \hline
 \small{(k) ERK network}  &  \small{12} &  \small{18} &  \small{9} &  \small{15040} &   \small{3432} &  \small{YES} &  \small{5} &  \small{1374} &  \small{340} &  \small{YES} \\ \hline
\end{tabular}

\smallskip
\begin{flushleft}\small 
The columns of the table indicate: network; $n = $ number of species; $r =$ number of reactions; $\ell = $ number of extreme vectors of the flux cone; $\#\sigma_{\pm}(g \circ \Psi)=$ number of positive/negative exponents of the critical polynomial; existence of a strict separating hyperplane; $\tilde{\ell}= $ number of extreme vectors of the flux cone of the reduced network of Theorem~\ref{Thm_reduction}; $\#\sigma_{\pm}(\tilde{g} \circ \tilde{\Psi}) =$ number of positive/negative exponents of the critical polynomial of the reduced network; existence of strict separating hyperplane for the reduced network.
The number of variables of the critical polynomial is $n+\ell$ for the original network and $n+\tilde{\ell}$ for the reduced network. 
?? means that the computation could not be performed due to computer memory loss. The labels (a)-(k) refer to the networks in Fig.~\ref{fig:net}.
\end{flushleft}
\label{Table}
\end{adjustwidth}
\end{table}

\subsection{Signaling cascades}
We next investigate two types of signaling cascades comprising phosphorylation cycles, which are known to be multistationary. 

\medskip
\noindent

\paragraph{\bf Shared kinase. }
We consider first a two-layer signaling cascade with two single phosphorylation at each stage. The phosphorylated substrate of the first layer acts as the kinase of the second layer. 
We consider additionally that the kinase of the first layer also can act as kinase for the second layer, so the kinase is \emph{shared} for the two layers, as shown in Fig.~\ref{fig:net}(i). Without this shared kinase, the cascade would not display multistationarity. 
 
Algorithm \ref{Algo_Conn} finds a strict separating hyperplane and  we conclude that the multistationarity region for the cascade is path connected.

 \medskip
\noindent

\paragraph{\bf Two-layer MAPK-cascade. }
 Huang and Ferrell proposed a model for the MAPK cascade consisting of three layers, the first one being a single phosphorylation cycle, while the last two are dual phosphorylation cycles with phosphorylation and dephosphorylation proceeding sequentially and distributive  \cite{Huang-Ferrell}. This network has bistability and also oscillations, and in fact for both properties only the first two layers of the cascade are required. 

The network with two layers is shown in Fig.~\ref{fig:net}(j), and Algorithm \ref{Algo_Conn} can be employed to conclude that the multistationarity region is path connected.

The full network with the three layers is large, with $22$ species, $30$ reactions, the rank of the stoichiometric matrix is $15$, and the matrix $E$ has $15$ extreme vectors. 
Due to the computational cost, the computation of the critical polynomial was not possible  for the original network, but was possible for the reduced network  using the \texttt{Julia} package \texttt{SymbolicCRN.jl} \cite{SymbolicCRN.jl}. However, a strict separating hyperplane does not exist, so our algorithm is inconclusive. 

\subsection{\bf Reciprocal enzyme regulation}

Finally, we consider two multistationary networks comprising single phosphorylation cycles, where the kinase and the phosphatase are subject to reciprocal regulation, both proposed and studied in \cite{STRAUBE-Conradi}.

\medskip
\noindent

\paragraph{\bf Covalent regulation. }
We first consider the case where reciprocal regulation is via covalent modification catalyzed by the same enzyme, see Fig.~\ref{fig:net}(d). By means of Algorithm \ref{Algo_Conn} we find that the multistationarity region for the the cascade is path connected.

\medskip
\noindent

\paragraph{\bf Allosteric regulation: An example with two path-connected components. }
The other mechanism of reciprocal regulation considered in  \cite{STRAUBE-Conradi} is via allosteric regulation: it is assumed that  there is an allosteric effector $\mathrm{L}$ that binds both the phosphatase and the kinase, see 
Fig.~\ref{fig:net}(c). After performing a quasi-steady-state approximation, the authors of \cite{STRAUBE-Conradi}  show that   a necessary condition for multistationarity is that $\k_3>\k_6$. 
This network is conservative, consistent, and has no relevant boundary steady states.

For all the applications we have seen so far, we could  either conclude that the multistationarity region is path connected, or our approach was inconclusive. 
For this network our approach was inconclusive as well, however,  
by employing other theoretical results from \cite{DescartesHypPlane} in conjunction with Theorem~\ref{Thm_reduction} and the approaches in \cite{PLOS_IdParaRegions}, we were able to conclude that the multistationarity region has exactly two path-connected components, revealing two mechanisms underlying multistationarity. 

The approach is as follows. 
On one hand, using the method to find parameter regions in $\k$ from \cite{PLOS_IdParaRegions} relying on Theorem~\ref{Thm_Plos}, we show in Section~\ref{sec:proofs} that the two sets of parameters
$\{ (\k,c) \in \R^{10}_{>0} \times \R^{5}_{\geq 0} \, | \, \k_3>\k_6 \}$ and  $\{ (\k,c) \in \R^{10}_{>0} \times \R^{5}_{\geq 0}\, |\, \k_3<\k_6 \}$ both contain parameters that yield multistationarity, that is, these two regions both intersect the multistationarity region. 
To be precise, the condition $\k_3>\k_6$ yields multistationarity if additionally the Michaelis-Menten constant for phosphorylation,  $K_1=\tfrac{\k_2+\k_3}{\k_1}$, is large  enough relative to that for the dephosphorylation, $K_2=\tfrac{\k_5+\k_6}{\k_4}$. 
Symmetrically, the condition $\k_6>\k_3$ requires $K_2 \gg K_1$ for multistationarity to arise. 

\smallskip
We also show that  if $\kappa_3 = \kappa_6$, then the network cannot display multistationarity, no matter what the other parameters are.
Therefore, 
any two points in each of the two sets above cannot be joined by a continuous path inside the multistationarity region, as any such path should cross at least one point where $\k_3=\k_6$. Hence, 
 the full multistationarity region cannot be path connected: it has \emph{at least} two path-connected components. 

On the other hand, we show also in Section~\ref{sec:proofs} that the multistationarity region of the reduced network has \emph{at most} two path-connected components using  \cite{DescartesHypPlane}. Hence, by Theorem~\ref{Thm_reduction}, the original network in  Fig.~\ref{fig:net}(c) has at most two path-connected components as well. 

 \begin{figure}[t]
 \includegraphics[scale=0.9]{./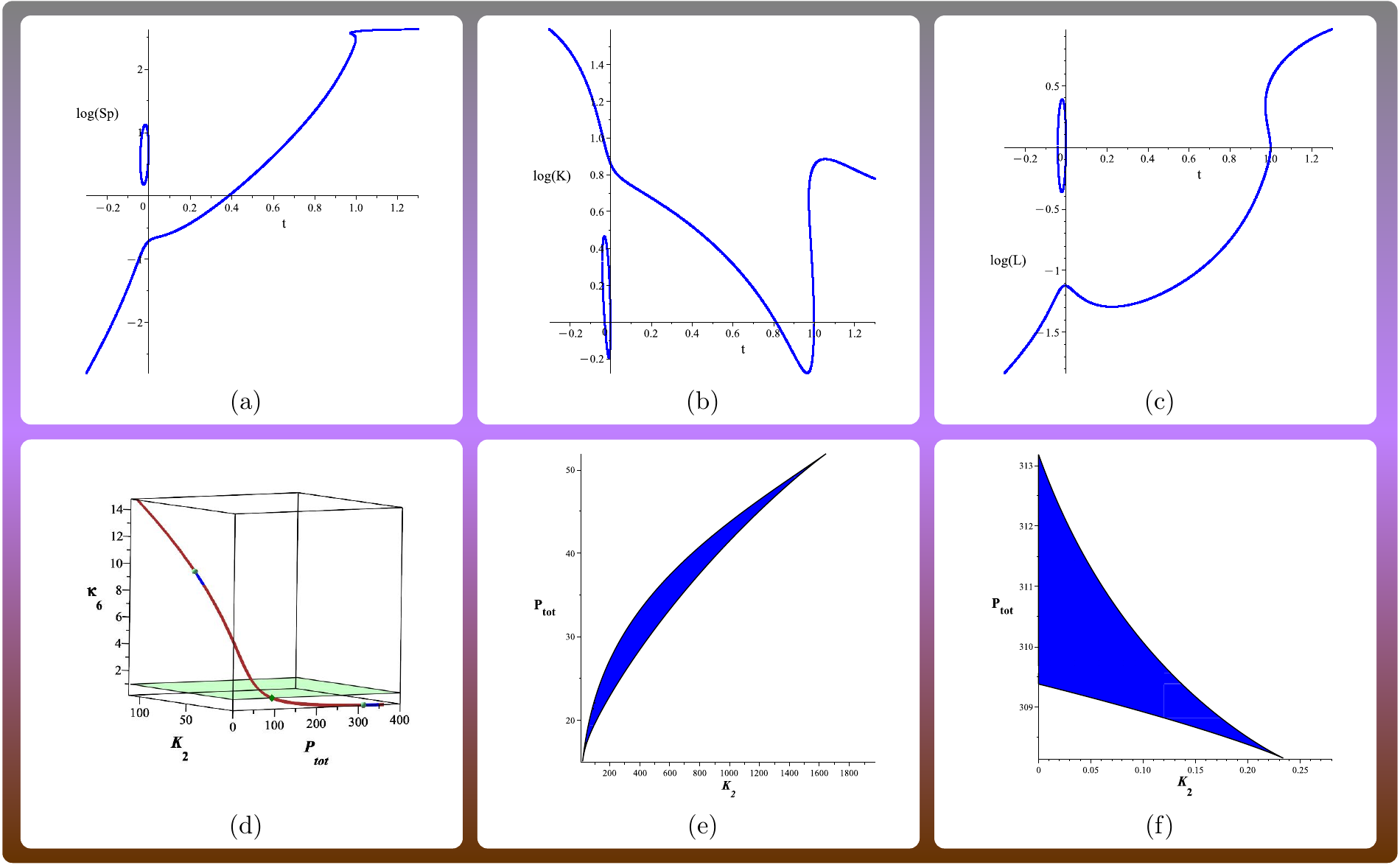}
 \caption{\small
 Input-output curves (bifurcation diagrams) for the reduced  reciprocal allosteric regulation system in Fig.~\ref{fig:net}(c) (that is, with $\k_2=\k_5=0$ for simplicity). 
 The following parameters are fixed:
 $\k_1=1, \k_3=1, \k_7=1, \k_8=1,\k_9=1, \k_{10}=0.1$, L$_{\rm tot} = 72$,  K$_{\rm tot} = 62$, S$_{\rm tot} = 426$. 
In (a)-(d),  the bifurcation parameter $t$, which can be negative, describes a path for $(\k_4,\k_6,{\rm P}_{\rm tot})$:
 $\beta(t)= (\left(\tfrac{1}{5}\right)^t \left(\tfrac{510}{41}\right)^{1 - t}, 10^t \left(\tfrac{51}{256}\right)^{1 - t}, 17^t \left(\tfrac{5307}{17}\right)^{1 - t})$.
 Subfigures (a)-(c) show the bifurcation diagrams for the concentration of S$_{\rm p}$, K and L at steady state. In the intervals with three steady states, the one in the middle is unstable. Subfigure (d) shows the path in the three-dimensional space $(K_2,{\rm P}_{\rm tot},\k_6)$. The blue regions indicate the region of the path that belongs to the multistationarity region, and the displayed plane $\k_6=1$ separates the two regions. 
 Subfigures (e)-(f) show the multistationarity regions when we fix $\k_6=9.5$ and $\k_6=0.195$ respectively. These are two slices of the two path-connected components, obtained by keeping only two free parameters. 
   }\label{fig:bif}
 \end{figure}

Putting the two pieces together, we conclude that the multistationarity region has precisely two connected components: one in which the catalytic rate of the phosphorylation step is larger than the catalytic rate of dephosphorylation, that is $\k_3>\k_6$, and the other where the inequality is reversed $\k_6>\k_3$. 
The second region was missed in \cite{STRAUBE-Conradi} as it is outside the regime where the quasi-steady-state approximation employed there is valid.

\smallskip
To illustrate the type of switches that may arise from this system, we have considered the reduced model (for which there are also two connected components), and selected two parameter values $\alpha_1,\alpha_2$ in different path-connected components of the multistationarity region. The two parameter values are identical except for the three parameters governing the dephosphorylation event: $\k_4,\k_6$ and the total amount of P, see Fig.~\ref{fig:bif}.
We choose the path through  the two points,  which component-wise is given as $\beta(t)_i=\alpha_{1,i}^t \alpha_{2,i}^{1-t}$. At $t=0$ the path is $\alpha_1$, while at $t=1$ it is $\alpha_2$. Fig.~\ref{fig:bif}(a-c) shows bifurcation diagrams, where $t$ is the perturbed parameter, which perturbs simultaneously $K_2,\k_6$ and the total amount of P, and we display the logarithm of the concentration of the phosphorylated substrate S$_{\rm p}$, kinase K and ligand L at steady state respectively. Note that by the choice of path we have made,  $t$ can take any real value, also negative. 
For the three concentrations, we obtain saddle-node bifurcations: a usual switch for larger values of $t$, while for smaller values of $t$, no hysteresis effect arises and the response curve has two components. In panel (d) of Fig.~\ref{fig:bif} a path  in the three dimensional parameter space joining the two points is given. We see that the path enters and exists the multistationarity region twice, corresponding to the two path-connected components. The shape of these two components is  displayed in Fig.~\ref{fig:bif}(e-f) after slicing the three-dimensional space by fixing $\k_6$ for illustration purposes.

\section{Discussion}
Determining topological properties of semi-algebraic sets, that is, sets described by polynomial equalities and inequalities, is a highly complex problem that requires, for general sets, computationally expensive algorithms that scale poorly with the number of variables \cite{Basu_book}. For reaction networks with mass-action kinetics, the multistationarity region is a semi-algebraic set, and hence its description might not be straightforward. 

Here we  presented an approach to determine a basic topological property of a set, namely its connectivity. Non-connectivity of the multistationarity region may indicate different biological mechanisms underlying the existence of multiple steady states, and additionally, may give the cell the possibility to operate on complex switches as shown in Fig.~\ref{fig:intro}. 
Our algorithm is to our knowledge the first to address the problem of connectivity in an effective and conclusive way. This is done by 
relying on linear programming and polyhedral geometry algorithms, rather than on semi-algebraic approaches, which
reduces dramatically the computational cost. Additionally, our approach provides a symbolic proof of connectivity, and does not require numerical approaches, which unavoidably cannot explore the whole parameter space.

Although our algorithm might terminate inconclusively, even if the multistationarity region is path connected, we have shown  that it is often applicable: for many motifs, the multistationarity region is connected because   a strict separating hyperplane of the support of the critical polynomial exists. This came as a surprise to us, and might indicate a hidden feature present in realistic systems and brings up the question: What are the characteristics of the reaction networks from cell signaling that ensure that the support of the corresponding critical polynomial has a strict separating hyperplane?

It would certainly be relevant to understand the answer to this question, to bypass finding the critical polynomial, a step that  is  prohibitive for larger networks. This was illustrated with the networks of phosphorylation cycles  with several phosphorylation sites, which revealed the computational boundaries of the algorithm. In  several cases, the computation of the critical polynomial was not possible on a common computer. However, it was possible to compute the critical polynomial of the reduced network and still we were able to conclude that the multistationarity region is path connected. 

Despite covering many networks, the algorithm remains inconclusive for some relevant networks, where we cannot decide whether the multistationarity region is connected, meaning that further investigations are required. For the $m$-site sequential and distributive systems, the projection of the multistationarity region onto the set of reaction rate constants is known to be path connected for all $m$ \cite{Multnsite}, and this makes us believe that the same holds for the full region. In fact, based on the evidence gathered from the  tested networks, we conjecture   that if the projection onto the set of reaction rate constants $\k$ is path connected, so is the multistationarity region. This would provide an additional strategy to study connectivity, which in particular might give a way to show that the fully weakly irreversible phosphorylation cycle studied in \cite{ParamGeo}, see Fig.~\ref{fig:net}(h), is indeed path connected. 

For the network in Fig.~\ref{fig:net}(c), where the algorithm was inconclusive,  the network had two path-connected components. The strategy to show this was to combine knowledge about the projection of the multistationarity region onto the set of reaction rate constants, and a bound on the number of connected components of the reduced network found using ideas similar to those in  the proof of  \cite[Theorem 3.9]{DescartesHypPlane}.  This example opens up for new directions for counting path-connected components and understanding underlying features of reaction networks where the multistationarity region is disconnected. On one hand, 
it would be interesting to devise algorithms that can assert that the multistationarity region is disconnected, and ideally, count or give bounds on the number of path-connected components. On the other hand, one might wonder what network characteristics might give rise to  disconnected multistationarity regions. For the reduced network of Fig.~\ref{fig:net}(c), the multistationarity region is no longer disconnected after deleting a reaction or a species, so this network can be viewed as a minimal motif with this property. A proper investigation of minimal networks with disconnected multistationarity region would require a better understanding on how the connectivity of the multistationarity region changes upon modifications on the network, in the spirit of Theorem~\ref{Thm_reduction}.

We would like to point out that the proof of the key theorem, namely Theorem~\ref{Thm_ParamConn}, is based on relating the multistationarity region to the preimage of the negative real half-line by the critical polynomial. When 
a strict separating hyperplane of the support of the critical polynomial exists, it is not only  known that this preimage is path connected, but also that it is contractible \cite[Theorem 3.9]{DescartesHypPlane}. This implies that all Betti numbers of the preimage set are zero. We conjecture that when this is the case, the multistationarity region is contractible, and hence topologically very simple (for example, it has no holes). However, this cannot be directly deduced by our arguments in the proof of Theorem~\ref{Thm_ParamConn}. 
 
 To conclude, we propose the application of our work in the design of synthetic circuits displaying predefined switches. 
 Indeed, by combining algorithms to determine multistationarity with the study of the connectivity of the multistationarity region, one can systematically study small networks and search for a desired input-output curve shape. This approach would adhere to related work already done in this direction, e.g. \cite{pmid27927198,otero:global}.

\section{Methods}
We implemented Algorithm \ref{Algo_Conn} in \texttt{SageMath} 9.2 \cite{sagemath}. A Jupyter notebook containing the code can be found in the Supporting Information. The computations for the networks in Table \ref{Table} were run on a Windows 10 computer with Intel Core i5-10310U CPU @ 1.70GHz 2.21 GHz processor and 8GB RAM.
 
In our implementation, each species is represented by a \textit{symbolic variable}, and each reaction is represented by a list containing two \textit{symbolic expressions} in the variables. For example, to run the code for the running example \eqref{Eq_Running}, one has to type:
\begin{quote}
\texttt{X1,X2 = var('X1,X2') \\
species = [X1,X2] \\
reactions = [[X1,X2],[X2,X1],[2*X1+X2,3*X1]] \\
F = CheckConnectivity(species,reactions)}
\end{quote}
The output is given in the following format:
\begin{quote}
\texttt{n = 2 \\
r = 3 \\
The reaction network is conservative.\\
There are no relevant boundary steady states.\\
l = 2\\
Number of positive coefficients: 3\\
Number of negative coefficients: 1\\
The support set has a strict separating hyperplane.\\
All  the conditions are satisfied.\\
We  conclude that the parameter region of multistationarity\\
is path connected.}
\end{quote}

Although we
 used \texttt{SageMath} to compute the extreme vectors,  the same computation can also be done with \texttt{Polymake} \cite{polymake:2000} (as a standalone \cite{polymake} or as part of the \texttt{Oscar} project in \texttt{Julia} \cite{oscar}). These programs compute extreme vectors of arbitrary pointed cones. For open cones of the form $\ker(N)\cap \R^n_{>0}$, there exist specific algorithms designed in the context of stoichiometric network analysis and metabolic network analysis. See for example \cite{kappler_metabolic}.

As discussed above, the bottleneck of Algorithm \ref{Algo_Conn} is to compute the determinant of the symbolic matrix $\tilde{M}(h,\lambda)$ in \eqref{Eq_q}. In the tested examples, the \texttt{Julia} package \texttt{SymbolicCRN.jl} \cite{SymbolicCRN.jl} is more efficient in computing the symbolic determinant than our implementation using \texttt{SageMath}. Using \texttt{SymbolicCRN.jl}, we were able to compute the critical polynomial for the reduced MAPK cascade with three layers, which was not possible with \texttt{SageMath}. For the $4$-site phosphorylation networks in Table~\ref{Table}, we could not compute the critical polynomial using neither \texttt{Julia} nor \texttt{SageMath}.

The parameter regions of multistationarity displayed in Fig.~\ref{fig:bif}(e-f) are found using cylindrical algebraic decomposition in \texttt{Maple}, using the command \texttt{CellDecomposition} from the packages \texttt{RootFinding[Parametric]} and \texttt{RegularChains}.

\section{Proof of the results}\label{sec:proofs}


\subsection{Convex parameters}\label{S1_Appendix}
In this subsection we show basic results on the flux cone, and specially that positive combinations of extreme vectors of  $\ker(N) \cap \mathbb{R}^{r}_{\geq0}$ parametrize the positive part of the cone. 

\begin{prop}
\label{Prop_FluxCone}
Let $E \in \mathbb{R}^{r \times \ell}$ be a matrix  of extreme vectors for the flux cone $\ker(N) \cap \mathbb{R}^{r}_{\geq0}$.
\begin{itemize}
\item[(a)] The relative interior of $\ker(N) \cap \mathbb{R}^{r}_{\geq0}$ equals $\{E\lambda \mid \lambda \in \mathbb{R}^{\ell}_{>0} \}$ and contains $\ker(N) \cap \mathbb{R}^{r}_{>0} $.
\item[(b)] $\ker(N) \cap \mathbb{R}^{r}_{>0} \neq \emptyset$ if and only if $E$ does not have any zero row.
\item[(c)] If $E$ does not have any zero row, then $\{ E\lambda \mid \lambda \in \mathbb{R}^{\ell}_{>0} \}= \ker(N) \cap \mathbb{R}^{r}_{>0}$.
\end{itemize}
\end{prop}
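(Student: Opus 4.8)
The plan is to prove (a) first and then deduce (b) and (c) with little extra work. Write $\mathcal{F}=\ker(N)\cap\R^{r}_{\geq0}$ for the flux cone. It is a closed convex cone, and it contains no line since $v,-v\in\R^{r}_{\geq0}$ forces $v=0$; hence by \cite[Corollary 18.5.2]{Rockafellar} it is generated by its extreme vectors, so $\mathcal{F}=\{E\lambda\mid\lambda\in\R^{\ell}_{\geq0}\}$. Throughout I would use the standard prolongation characterization of the relative interior of a convex set: $v\in\relint(C)$ if and only if for every $w\in C$ there is some $t>0$ with $v+t(v-w)\in C$ (see, e.g., \cite[Section~6]{Rockafellar}).

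For (a) I would verify three inclusions. First, $\{E\lambda\mid\lambda>0\}\subseteq\relint(\mathcal{F})$: given $v=E\lambda$ with $\lambda>0$ and any $w=E\mu\in\mathcal{F}$ with $\mu\geq0$, the point $v+t(v-w)=E((1+t)\lambda-t\mu)$ lies in $\mathcal{F}$ for all sufficiently small $t>0$ because $\lambda>0$, so the characterization applies. Second, $\relint(\mathcal{F})\subseteq\{E\lambda\mid\lambda>0\}$: given $v\in\relint(\mathcal{F})$, for each index $k\in\{1,\dots,\ell\}$ the characterization applied with $w=E_{k}$ gives $t_{k}>0$ with $v+t_{k}(v-E_{k})\in\mathcal{F}$; writing this point as a nonnegative combination of the $E_{i}$ and rearranging to isolate $v$ produces a representation of $v$ as a nonnegative combination of the $E_{i}$ whose coefficient on $E_{k}$ is strictly positive, and averaging these $\ell$ representations over $k$ yields $v=E\lambda$ with all entries of $\lambda$ strictly positive. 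Third, $\ker(N)\cap\R^{r}_{>0}\subseteq\relint(\mathcal{F})$: if $v\in\ker(N)$ has all coordinates strictly positive then, for any $w\in\mathcal{F}$, the vector $v+t(v-w)=(1+t)v-tw$ lies in $\ker(N)$ and has strictly positive coordinates for small $t>0$, hence lies in $\mathcal{F}$; the characterization again gives $v\in\relint(\mathcal{F})$.

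Parts (b) and (c) then follow from (a) by a short sign bookkeeping that uses only that the columns of $E$ are nonnegative vectors. For (c): if $E$ has no zero row then every coordinate $(E\lambda)_{j}=\sum_{i}E_{ji}\lambda_{i}$ of $E\lambda$ with $\lambda>0$ is a sum of nonnegative terms at least one of which is strictly positive, so $\{E\lambda\mid\lambda>0\}\subseteq\ker(N)\cap\R^{r}_{>0}$; together with the equality $\{E\lambda\mid\lambda>0\}=\relint(\mathcal{F})$ and the inclusion $\relint(\mathcal{F})\supseteq\ker(N)\cap\R^{r}_{>0}$ from (a), this gives the claimed equality. For (b): if $E$ has no zero row then $E\mathbf{1}=\sum_{i=1}^{\ell}E_{i}$ lies in $\mathcal{F}\cap\R^{r}_{>0}$ by the same coordinatewise argument, so $\ker(N)\cap\R^{r}_{>0}\neq\emptyset$; conversely, if $v\in\ker(N)\cap\R^{r}_{>0}$ then $v=E\lambda$ with $\lambda>0$ by (a), and a zero row $j$ of $E$ would force $v_{j}=0$, a contradiction.

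I do not expect a genuine obstacle here, as the argument is standard convex geometry; the only points requiring some care are that one must work with the \emph{relative} interior, since $\mathcal{F}$ need not be full-dimensional in $\R^{r}$ (this is why the prolongation characterization, rather than an open-ball argument, is the convenient tool), and that minimality of the chosen set of extreme vectors is used only to ensure $\mathcal{F}=\{E\lambda\mid\lambda\geq0\}$ --- the averaging step in (a) does not otherwise rely on it.
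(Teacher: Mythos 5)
Your proof is correct, and it reaches the same three statements by a partly different route. For the equality $\relint\big(\ker(N)\cap\R^{r}_{\geq0}\big)=\{E\lambda\mid\lambda\in\R^{\ell}_{>0}\}$ the paper simply cites Gerstenhaber's theorem on convex polyhedral cones, whereas you reprove it from scratch via the prolongation characterization of the relative interior (Rockafellar, Theorem 6.4), with a clean averaging trick to get a strictly positive coefficient vector in the reverse inclusion; this makes your argument self-contained at the cost of length. For the containment $\ker(N)\cap\R^{r}_{>0}\subseteq\{E\lambda\mid\lambda>0\}$ the two arguments genuinely differ: the paper works in coefficient space, writing $v=E\lambda$ with $\lambda\geq0$, perturbing to $E(\lambda-\epsilon\mathbf{1})\in\mathcal{F}$, re-expressing that point as $E\mu$ with $\mu\geq0$, and concluding $v=E(\mu+\epsilon\mathbf{1})$ with $\mu+\epsilon\mathbf{1}>0$; you instead show directly that a strictly positive kernel vector satisfies the prolongation condition and then invoke the already-established description of the relative interior. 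Your deductions of (b) and (c) match the paper's (the paper derives (b) from $\mathcal{F}=\{E\lambda\mid\lambda\geq0\}$ and (c) from the no-zero-row coordinate count plus (a), exactly as you do). I see no gap; the only point worth flagging is that the prolongation characterization requires a $t$ that may depend on $w$, which your argument correctly allows.
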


\begin{proof}
(a) The first equality is proven in \cite[Section XVIII, Theorem 1]{InteriorOfCone}. To show that 
$\ker(N) \cap \mathbb{R}^{r}_{>0}$ is included in $\{E\lambda \mid \lambda \in \mathbb{R}^{\ell}_{>0} \}$, let 
 $v \in \ker(N) \cap \mathbb{R}^{r}_{>0}$ and $\mathbf{1} \in \mathbb{R}^{\ell}$ the vector with coordinates  equal to $1$. By \cite[Corollary 18.5.2]{Rockafellar}, there exists $\lambda \in \mathbb{R}^{\ell}_{\geq0}$ such that $v = E\lambda$. Since all  coordinates of $v$ are positive, it is possible to choose $\epsilon >0$ such that $E\lambda - \epsilon E \mathbf{1}$ has positive coordinates. Thus, $E\big( \lambda - \epsilon \mathbf{1})$ belongs to the flux cone. Using \cite[Corollary 18.5.2]{Rockafellar} again, there exist $\mu \in \mathbb{R}^{\ell}_{\geq0}$ such that $E\big( \lambda - \epsilon \mathbf{1}) = E \mu$. By reordering, we have
$$v = E \lambda = E (\mu + \epsilon \mathbf{1})$$
where $\mu + \epsilon \mathbf{1} \in \mathbb{R}^{\ell}_{>0}$. This shows (a).

\smallskip
(b) Follows easily from the equality $\ker(N) \cap \mathbb{R}^{r}_{\geq0}=\{E\lambda \mid \lambda \in \mathbb{R}^{\ell}_{\geq 0} \}$.

\smallskip
(c) If $E$ does not have a zero row, then $ \{E\lambda \mid \lambda \in \mathbb{R}^{\ell}_{>0} \} \subseteq \ker(N) \cap \mathbb{R}^{r}_{>0}$. Together with (a), we obtain the equality of sets.
\end{proof}

\begin{cor}
\label{Cor_ConvParamSurj}
Let $E \in \mathbb{R}^{r \times \ell}$ be a matrix  of extreme vectors for the flux cone $\ker(N) \cap \mathbb{R}^{r}_{\geq0}$. Recall the set $\mathcal{V}$ of steady states given in \eqref{Eq_DefV}. 
\begin{itemize}
\item[(a)] If $E$ does not have any zero row, then the convex parametrization map $\Psi\colon \mathbb{R}^{n}_{>0} \times \mathbb{R}^{\ell}_{>0} \to \mathcal{V}$ from  \eqref{Eq_Psi}  is surjective.
\item[(b)] If $E$ has a zero row, then $\mathcal{V} = \emptyset$. 
\end{itemize}
\end{cor}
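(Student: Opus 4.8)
The plan is to prove both parts by recognizing, for every $(x,\kappa)\in\mathcal{V}$, that the rate vector $v_{\kappa}(x)$ lies in the positive part of the flux cone, and then feeding this into Proposition~\ref{Prop_FluxCone}. The only computation involved is checking that the diagonal monomial rescaling appearing in the definition \eqref{Eq_Psi} of $\Psi$ exactly inverts the monomial evaluation in $v_{\kappa}$ from \eqref{Align_ReactRateFunc}; this rests on the identity $h^{A_{j}}=\prod_{i}h_{i}^{a_{ij}}=\prod_{i}x_{i}^{-a_{ij}}=1/x^{A_{j}}$ when $h=1/x$ componentwise.

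For part (a), I would first check that $\Psi$ actually lands in $\mathcal{V}$. Given $(h,\lambda)\in\mathbb{R}^{n}_{>0}\times\mathbb{R}^{\ell}_{>0}$, put $x=1/h$ and $\kappa_{j}=h^{A_{j}}(E\lambda)_{j}$. Since the entries of $E$ are nonnegative and no row of $E$ is zero, $(E\lambda)_{j}>0$ for every $j$ (Proposition~\ref{Prop_FluxCone}(c)), so $\kappa\in\mathbb{R}^{r}_{>0}$; and $v_{\kappa}(x)_{j}=\kappa_{j}x^{A_{j}}=h^{A_{j}}(E\lambda)_{j}\,x^{A_{j}}=(E\lambda)_{j}$, hence $v_{\kappa}(x)=E\lambda\in\ker N$ and $(x,\kappa)\in\mathcal{V}$. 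Next, for surjectivity, take an arbitrary $(x,\kappa)\in\mathcal{V}$: since $x$ and $\kappa$ have only positive entries, every coordinate of $v_{\kappa}(x)$ is a product of positive numbers, so $v_{\kappa}(x)\in\ker N\cap\mathbb{R}^{r}_{>0}$. Because $E$ has no zero row, Proposition~\ref{Prop_FluxCone}(c) provides $\lambda\in\mathbb{R}^{\ell}_{>0}$ with $E\lambda=v_{\kappa}(x)$. Taking $h=1/x$, the same computation as above gives $\Psi(h,\lambda)=(x,\kappa)$: the first block is $1/h=x$, and the $j$th entry of the second block is $h^{A_{j}}(E\lambda)_{j}=x^{-A_{j}}\kappa_{j}x^{A_{j}}=\kappa_{j}$. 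I would also remark in passing that $\Psi$ is analytic (polynomial in $\lambda$, Laurent-monomial in $h$, with no denominator vanishing on $\mathbb{R}^{n}_{>0}$), so it is a parametrization of $\mathcal{V}$ in the sense defined before the corollary.

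For part (b), suppose some row of $E$, say row $i$, is zero. Then every element of $\ker N\cap\mathbb{R}^{r}_{\geq0}=\{E\lambda\mid\lambda\in\mathbb{R}^{\ell}_{\geq0}\}$ has vanishing $i$th coordinate, so $\ker N\cap\mathbb{R}^{r}_{>0}=\emptyset$; this is exactly Proposition~\ref{Prop_FluxCone}(b). If $\mathcal{V}$ were nonempty, any $(x,\kappa)\in\mathcal{V}$ would yield $v_{\kappa}(x)\in\ker N$ with all coordinates positive, contradicting this. Hence $\mathcal{V}=\emptyset$.

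I do not expect a genuine obstacle here: the whole argument is a direct translation between $v_{\kappa}(x)$ and $E\lambda$ via Proposition~\ref{Prop_FluxCone}, and the one place to be careful is the exponent bookkeeping $h^{A_{j}}=1/x^{A_{j}}$ together with keeping track of which entries of $E$ are guaranteed positive (all of them, once no row is zero, since the cone sits in the nonnegative orthant). Everything else is routine.
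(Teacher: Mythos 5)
Your proof is correct and follows essentially the same route as the paper: observe that $v_{\kappa}(x)\in\ker(N)\cap\mathbb{R}^{r}_{>0}$ for any $(x,\kappa)\in\mathcal{V}$, invoke Proposition~\ref{Prop_FluxCone} to write it as $E\lambda$ with $\lambda\in\mathbb{R}^{\ell}_{>0}$ (the paper cites part~(a) where you cite part~(c); both suffice), and verify $\Psi(1/x,\lambda)=(x,\kappa)$ via the exponent identity $h^{A_j}=x^{-A_j}$. Your extra check that $\Psi$ actually maps into $\mathcal{V}$ is a harmless addition that the paper leaves implicit.
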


\begin{proof}
First, we observe that for $(x,\kappa) \in \mathcal{V}$ the vector $v_\kappa(x)$ lies in $\ker(N) \cap \mathbb{R}^{r}_{>0}$ by \eqref{Align_ReactRateFunc}. Now, part (b) follows directly from Proposition~\ref{Prop_FluxCone}(b). To show that $\Psi$ is surjective, let $(x,\kappa) \in \mathcal{V}$. By Proposition~\ref{Prop_FluxCone}(a), there exist $\lambda \in \mathbb{R}^{\ell}_{>0}$ such that $v_\kappa(x) = E\lambda$. By letting $h=1/x$, using the definition of $v_\kappa(x)$ in \eqref{Align_ReactRateFunc} and the definition of $\Psi$, one easily sees that 
$\Psi(h,\lambda)= (x,\kappa)$, and hence, $(x,\kappa)$ is in the image of $\Psi$, concluding the proof.
\end{proof}

\begin{lemma}
\label{Lemma_reductionFlux}
Let $(\mathcal{S},\mathcal{R})$ be a reaction network such that the last two reactions $R_{r-1}$ and $R_r$ are reverse to each other. Let $(\tilde{\mathcal{S}},\tilde{\mathcal{R}})$ be the reduced network obtained by removing the reaction $R_r$.
\begin{itemize}
\item[(a)] The vector $(0,\dots,0,1,1) \in \mathbb{R}^r_{\geq 0}$ is an extreme vector of the flux cone of $(\mathcal{S},\mathcal{R})$.
\item[(b)] If $v$ is an extreme vector of the flux cone of $(\tilde{\mathcal{S}},\tilde{\mathcal{R}})$, then
$(v,0)$
is an extreme vector of the flux cone of $(\mathcal{S},\mathcal{R})$.
\end{itemize}
In particular, the flux cone of $(\mathcal{S},\mathcal{R})$ has   more extreme vectors than the flux cone of $(\tilde{\mathcal{S}},\tilde{\mathcal{R}})$.
\end{lemma}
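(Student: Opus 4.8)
The plan is to work directly with the definition of the flux cone $\mathcal{F} = \ker(N)\cap\R^r_{\geq 0}$ and the characterization of extreme vectors (equivalently, extreme rays) of a pointed polyhedral cone: a nonzero $v\in\mathcal{F}$ spans an extreme ray if and only if, whenever $v = v_1 + v_2$ with $v_1,v_2\in\mathcal{F}$, both $v_1$ and $v_2$ are nonnegative multiples of $v$. I would first record the relation between the stoichiometric matrices: since $R_{r-1}$ and $R_r$ are reverse to each other, the last two columns $N_{r-1}$ and $N_r$ of $N$ satisfy $N_r = -N_{r-1}$, and $N$ has the block form $N = [\tilde N \mid N_{r-1} \mid -N_{r-1}]$, where $\tilde N$ is the stoichiometric matrix of the reduced network $(\tilde{\mathcal S},\tilde{\mathcal R})$ (which keeps $R_{r-1}$ but deletes $R_r$).

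For part~(a), note $N(0,\dots,0,1,1)^\top = N_{r-1} + N_r = N_{r-1} - N_{r-1} = 0$, so $v^{(0)} := (0,\dots,0,1,1)\in\mathcal F$. To see it is extreme, suppose $v^{(0)} = v_1 + v_2$ with $v_i\in\mathcal F\subseteq\R^r_{\geq 0}$. Since all entries of $v^{(0)}$ except the last two are zero and the $v_i$ have nonnegative entries, $v_1$ and $v_2$ are supported on the last two coordinates, say $v_1 = (0,\dots,0,a,b)$, $v_2 = (0,\dots,0,1-a,1-b)$ with $a,b\in[0,1]$. The condition $Nv_1 = 0$ reads $aN_{r-1} - bN_{r-1} = (a-b)N_{r-1} = 0$; since $N$ has no zero columns, $N_{r-1}\neq 0$, so $a = b$, giving $v_1 = a\,v^{(0)}$. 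Hence $v^{(0)}$ spans an extreme ray.

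For part~(b), let $v$ be an extreme vector of the flux cone $\tilde{\mathcal F} = \ker(\tilde N)\cap\R^{r-1}_{\geq 0}$ of the reduced network, and set $w := (v,0)\in\R^r$. Then $Nw = \tilde N v = 0$ and $w\geq 0$, so $w\in\mathcal F$. For extremality, suppose $w = w_1 + w_2$ with $w_i\in\mathcal F$, and write $w_i = (u_i, t_i)$ with $u_i\in\R^{r-1}_{\geq 0}$, $t_i\geq 0$. The last coordinate of $w$ is $0$ and $t_1,t_2\geq 0$ force $t_1 = t_2 = 0$; then $w_i = (u_i,0)$ and $0 = Nw_i = \tilde N u_i$, so $u_i\in\tilde{\mathcal F}$, and $v = u_1 + u_2$. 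Extremality of $v$ in $\tilde{\mathcal F}$ gives $u_i = \lambda_i v$ for some $\lambda_i\geq 0$, hence $w_i = \lambda_i w$, so $w$ spans an extreme ray of $\mathcal F$. I would also remark that the map $v\mapsto(v,0)$ is injective on extreme rays, and that $v^{(0)}$ is not of this form (its last coordinate is nonzero), so the final ``in particular'' claim — that $\mathcal F$ has strictly more extreme vectors than $\tilde{\mathcal F}$ — follows immediately by counting.

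I do not anticipate a genuine obstacle here: the only subtlety is being careful that ``extreme vector'' is understood up to positive scaling, and invoking $N$ has no zero columns (stated in the setup) to guarantee $N_{r-1}\neq 0$ in part~(a). One alternative, if one prefers not to argue directly with the extreme-ray definition, is to identify $\mathcal F$ explicitly as the Minkowski sum $\{(v,0) : v\in\tilde{\mathcal F}\} + \R_{\geq 0}\,v^{(0)}$, observe this is a pointed cone, and then apply the standard fact that the extreme rays of such a sum decomposition are exactly the listed generators that are not redundant; but the self-contained splitting argument above is shorter and avoids having to verify irredundancy separately.
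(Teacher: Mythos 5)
Your proof is correct, but it takes a different route from the paper's. The paper does not argue from the sum-decomposition definition of extreme rays; instead it invokes the combinatorial characterization specific to cones of the form $\ker(N)\cap\mathbb{R}^r_{\geq 0}$, namely that a vector of the cone is an extreme vector if and only if it has minimal support (citing M\"uller--Regensburger), and then checks minimality of supports: for (a), a vector supported on a proper subset of $\{r-1,r\}$ would force $N_r=0$; for (b), any $w$ in the cone with $\supp(w)\subseteq\supp((v,0))$ has $w_r=0$, so its truncation lies in the reduced flux cone and minimality of $\supp(v)$ applies. Your argument is structurally parallel --- part (a) again hinges on $N_{r-1}\neq 0$ because $N$ has no zero columns, and part (b) again reduces to the last coordinate vanishing --- but it is more self-contained, since it needs only the definition of an extreme ray of a pointed cone rather than an external support-minimality criterion; the one thing you are implicitly using is that the flux cone is pointed (it sits in the nonnegative orthant), which the paper records earlier. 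You also spell out the final ``in particular'' claim (injectivity of $v\mapsto(v,0)$ on extreme rays together with the fact that $(0,\dots,0,1,1)$ is not of that form), which the paper leaves implicit. Either approach is acceptable here.
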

\begin{proof}
Before starting the proof, we recall some definitions and statements. The support of a vector $u \in \mathbb{R}^{r}$ is  the set of indices where the vector is non-zero: $\supp(u) := \{i \in \{1, \dots , r\} \mid u_i \neq 0 \}$. A vector $u \in \ker(N) \cap \mathbb{R}^{r}_{\geq0}$ is said to have minimal support if for all $w \in \ker(N) \cap \mathbb{R}^{r}_{\geq0}$ with $\supp(w) \subseteq \supp(u)$ we have $\supp(w) = \supp(u)$. A vector $u \in \ker(N) \cap \mathbb{R}^{r}_{\geq0}$ is an extreme vector if and only if it has minimal support   \cite[Proposition 5]{MuellerReg_ExtremeVectors}, see also \cite[Definition 1]{ElementaryModes} and  \cite[Proposition 5]{DoubleDescrMethod}.

\smallskip
(a) Since the reactions $R_{r-1}$ and $R_r$ are reverse to each other, for the last two columns of $N$ it holds that $N_{r-1} = - N_r$, which implies
\[ N (0,\dots,0,1,1)^{\top} = N_{r-1} + N_{r} = 0.\]
Thus, $(0,\dots,0,1,1)$ belongs to the flux cone. If $(0,\dots,0,1,1)$ did not have minimal support, then $(0,\dots,0,1)$ or $(0,\dots,0,1,0)$ would be contained in the flux cone, but that would imply that $N_r = 0$ and $N$ has a zero column, which cannot be the case. 

\smallskip
(b) Let $\tilde{N} \in \mathbb{R}^{n \times (r-1)}$ denote the stoichiometric matrix of the reduced network and hence $N = \big( \tilde{N} \quad N_r\big) \in \mathbb{R}^{n \times r}$. Since
\[ N \begin{pmatrix}v \\ 0\end{pmatrix} = \tilde{N}v = 0,\]
we have that $(v,0)$ is contained in the flux cone of $(\mathcal{S},\mathcal{R})$. To show that $(v,0)$ is an extreme vector, we show that it has minimal support.  Let $w \in \ker(N) \cap \mathbb{R}^{r}_{\geq0}$ such that $\supp(w) \subseteq \supp( (v,0) )$. Then $w_n = 0$, and hence $(w_1 \dots , w_{r-1})  \in \ker(\tilde{N}) \cap \mathbb{R}^{r-1}_{\geq0}$. Since $v$ is an extreme vector of $\ker(\tilde{N}) \cap \mathbb{R}^{r-1}_{\geq0}$ and $\supp((w_1 \dots , w_{r-1}) ) \subseteq \supp(v)$, it follows that
$\supp(w) = \supp((w_1 \dots , w_{r-1})  ) = \supp(v) = \supp((v,0)).$
Hence $(v,0)$  has minimal support and is therefore an extreme vector.
\end{proof}

\subsection{Proof of Theorem \ref{Thm_ParamConn}}\label{S2_Appendix}
Theorem~\ref{Thm_ParamConn} is a direct consequence of   two technical lemmas. To state the lemmas, we use the notation of the main text, and additionally we write $\Omega \subseteq \mathbb{R}^{r}_{>0} \times \mathbb{R}^{d}$ for the parameter region of multistationarity:
\[\Omega := \big\{ (\kappa,c) \in \mathbb{R}^{r}_{>0} \times \mathbb{R}^{d} \mid \# \big( V_{\kappa} \cap \mathcal{P}_{c} \cap \mathbb{R}^{n}_{>0} \big) \geq 2 \big\}.\]
We will write $\Omega$ as the image of a subset of $\mathcal{V}$ under the map
\begin{align}\label{eq:pi}
\pi \colon \mathcal{V} \to \mathbb{R}^{r}_{>0} \times \mathbb{R}^{d}, \qquad (x,\kappa)  \mapsto (\kappa,Wx),
\end{align}
and then compare path-connected components. Recall that $W$ is a matrix of conservation relations. 

 By Theorem \ref{Thm_Plos}, $\Omega$ is closely related to the preimage of the negative real half-line under the critical function given in \eqref{Align_DetFunc2} for a parametrization $\Phi\colon\mathcal{D}\rightarrow \mathcal{V}$: 
 \begin{align*}
g \circ \Phi \colon \mathcal{D} \to \mathbb{R}.
\end{align*}
So we introduce the set:
\begin{align}
\label{Eq_Theta}
\Theta := g^{-1}(\mathbb{R}_{\leq 0}) \cap \pi^{-1}( \Omega) \quad \subseteq \mathcal{V}.
\end{align}

We summarize all the relevant functions that play a role in the proof of Theorem~\ref{Thm_ParamConn} in the following  diagram:

\smallskip
\begin{tikzcd}
(g \circ \Phi )^{-1}(\mathbb{R}_{<0} )
\arrow[d,"\Phi"] \arrow[r,hook] & \Phi^{-1}( \Theta ) \arrow[d,"\Phi"] \arrow[r,hook] & (g \circ \Phi )^{-1}(\mathbb{R}_{\leq 0 } ) \arrow[d,"\Phi"] \arrow[r,hook] & \mathcal{D} \arrow[d,"\Phi"] \arrow[dr,"g \circ \Phi"] \\
g^{-1}(\mathbb{R}_{<0} ) \arrow[r,hook] &  \Theta \arrow[d,"\pi"] \arrow[r,hook] & g^{-1}(\mathbb{R}_{\leq 0 } )  \arrow[r,hook] & \mathcal{V} \arrow[d,"\pi"] \arrow[r,"g"]&\mathbb{R} \\
& \Omega \arrow[rr,hook] & & \mathbb{R}^{r}_{>0} \times \mathbb{R}^{d}.
\end{tikzcd}

We denote by $b_0(X)$ the number of path-connected components of a set $X \subseteq \mathbb{R}^{k}$ \cite[Definition 3.3.7]{singh2019introduction}. Note that $X$ is path connected if and only if $b_0(X) = 1$.

\begin{lemma}
\label{Lemma_KeyLemma1}
For a dissipative reaction network without relevant boundary steady states,   it holds that
$$\pi(\Theta) = \Omega. $$
In particular, $b_0(\Omega) \leq b_0(\Theta)$.
\end{lemma}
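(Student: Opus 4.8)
<br>

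The plan is to prove the two claimed facts in order: first the set equality $\pi(\Theta) = \Omega$, and then deduce the inequality $b_0(\Omega) \leq b_0(\Theta)$ as a formal consequence.

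\textbf{The inclusion $\pi(\Theta) \subseteq \Omega$.} This direction is essentially definitional. A point of $\Theta$ is a pair $(x,\kappa) \in \mathcal{V}$ with $g(x,\kappa) = (-1)^s \det(M_\kappa(x)) \leq 0$ and $\pi(x,\kappa) = (\kappa, Wx) \in \Omega$. Its image under $\pi$ is $(\kappa, Wx)$, which lies in $\Omega$ by hypothesis, so there is nothing to do. (In fact one does not even need the sign condition $g \leq 0$ for this inclusion; the constraint $\pi^{-1}(\Omega)$ already forces the image into $\Omega$.)

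\textbf{The inclusion $\Omega \subseteq \pi(\Theta)$.} This is the substantive direction, and it is where Theorem~\ref{Thm_Plos} enters. Take $(\kappa, c) \in \Omega$, so the intersection $V_{\kappa,>0} \cap \mathcal{P}_c$ contains at least two distinct points. I would apply the contrapositive of part (A) of Theorem~\ref{Thm_Plos}: since $(\kappa,c)$ enables multistationarity, it is \emph{not} the case that $(-1)^s\det(M_\kappa(x)) > 0$ for all $x \in V_{\kappa,>0}\cap\mathcal{P}_c$; hence there exists some $x^* \in V_{\kappa,>0}\cap\mathcal{P}_c$ with $(-1)^s\det(M_\kappa(x^*)) \leq 0$, i.e. $g(x^*,\kappa) \leq 0$. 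Now $x^* \in \mathbb{R}^n_{>0}$ and $x^* \in V_\kappa$, so $(x^*,\kappa) \in \mathcal{V}$; it satisfies $g(x^*,\kappa) \in \mathbb{R}_{\leq 0}$, and $\pi(x^*,\kappa) = (\kappa, Wx^*) = (\kappa,c) \in \Omega$, so $(x^*,\kappa) \in g^{-1}(\mathbb{R}_{\leq 0}) \cap \pi^{-1}(\Omega) = \Theta$. Therefore $(\kappa,c) = \pi(x^*,\kappa) \in \pi(\Theta)$. The hypotheses ``dissipative'' and ``no relevant boundary steady states'' are exactly those needed to invoke Theorem~\ref{Thm_Plos}, and the fact that the two steady states in a multistationary class are positive (not boundary) ensures we stay in $V_{\kappa,>0}$; this is where the ``no relevant boundary steady states'' assumption is genuinely used.

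\textbf{The Betti number bound.} Once $\pi(\Theta) = \Omega$ is established, the inequality $b_0(\Omega) \leq b_0(\Theta)$ follows from a general topological fact: a continuous surjection $q\colon X \twoheadrightarrow Y$ satisfies $b_0(Y) \leq b_0(X)$, because $q$ maps each path-connected component of $X$ into a single path-connected component of $Y$ (the continuous image of a path is a path), and surjectivity forces every component of $Y$ to be hit by at least one component of $X$; hence the induced map on the sets of components is surjective. Here $X = \Theta$, $Y = \Omega$, and $q$ is the restriction of the continuous map $\pi$ of~\eqref{eq:pi} to $\Theta$, which is surjective onto $\Omega$ by the equality just proved.

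\textbf{Main obstacle.} There is no deep obstacle; the only point requiring care is making sure the existence of the witness $x^*$ with $g(x^*,\kappa)\le 0$ is correctly extracted as the contrapositive of Theorem~\ref{Thm_Plos}(A) (not from (B), which would only give strict negativity and is not needed), and confirming that the standing assumptions of that theorem are inherited from the hypotheses of the lemma. The Betti-number step is routine point-set topology.
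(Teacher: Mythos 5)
Your proposal is correct and follows essentially the same route as the paper: the forward inclusion is definitional, the reverse inclusion extracts a witness $x^*$ with $g(x^*,\kappa)\le 0$ from the contrapositive of Theorem~\ref{Thm_Plos}(A), and the bound $b_0(\Omega)\le b_0(\Theta)$ follows because continuous images of path-connected sets are path connected. The only minor quibble is your closing remark about where the no-relevant-boundary-steady-states hypothesis is used: positivity of the steady states is already built into the definition of $\Omega$, and that hypothesis is consumed inside Theorem~\ref{Thm_Plos} itself rather than in this lemma's argument.
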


\begin{proof}
By definition of $\Theta$, we have $\pi(\Theta) \subseteq \Omega$. To show the reverse inclusion, consider $(\kappa,c) \in \Omega$. 
All we need is to find a point $(x^*,\kappa)$ such that  $\pi(x^*,\kappa)=(\kappa,c)$ and $g(x^*,\kappa) \leq 0$. 
From the definition of $V_{\kappa}$ in \eqref{Align_DefSteadyStVar} and of $\mathcal{P}_c$ in \eqref{Eq_StoichClass}, it follows that 
\[ \pi^{-1}(\kappa,c) = \{ (x^*,\kappa) \in \mathcal{V} \mid x^* \in V_\kappa \cap \mathcal{P}_c \cap \mathbb{R}_{>0}^{n} \}.\]
Since $(\kappa,c)$ enables multistationarity as it belongs to $\Omega$, we have $\pi^{-1}(\kappa,c)$ has at least two elements and hence Theorem \ref{Thm_Plos}(A) cannot hold. It follows that $\pi^{-1}(\kappa,c)$  contains at least one point where 
 $(x^*,\kappa)$ with $g(x^*,\kappa) \leq 0$. 
As by construction $\pi(x^*,\kappa)=(\kappa,c)$, we have the inclusion. 

The second part follows from the fact that continuous images of path connected sets are path connected  \cite[Theorem 3.3.5]{singh2019introduction}.
\end{proof}

\begin{lemma}
\label{Lemma_KeyLemma}
Consider a dissipative reaction network without relevant boundary steady states. 
 If  the closure of $(g \circ \Phi)^{-1}(\mathbb{R}_{<0})$ equals  $(g \circ \Phi)^{-1}(\mathbb{R}_{\leq 0})$, then 
 \[b_0(\Theta) \leq b_0((g \circ \Phi)^{-1}(\mathbb{R}_{<0})).\]
\end{lemma}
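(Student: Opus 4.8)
The idea is to carry the whole argument along the top row of the displayed diagram: sandwich $\Phi^{-1}(\Theta)$ between $A:=(g\circ\Phi)^{-1}(\mathbb{R}_{<0})$ and its closure, bound $b_0$ of the sandwiched set by $b_0(A)$ using density together with tameness, and finally transport the count down to $\Theta$ along the surjection $\Phi$. First I would establish the sandwich. The inclusion $\Theta\subseteq g^{-1}(\mathbb{R}_{\leq 0})$ is immediate from \eqref{Eq_Theta}. For the opposite-type inclusion $g^{-1}(\mathbb{R}_{<0})\subseteq\Theta$, take $(x,\kappa)\in\mathcal{V}$ with $g(x,\kappa)<0$; then $x\in V_{\kappa,>0}\cap\mathcal{P}_{Wx}$ and $(-1)^{s}\det M_{\kappa}(x)<0$, so Theorem~\ref{Thm_Plos}(B) gives $(\kappa,Wx)\in\Omega$, i.e.\ $(x,\kappa)\in\pi^{-1}(\Omega)$, hence $(x,\kappa)\in\Theta$. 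Thus $g^{-1}(\mathbb{R}_{<0})\subseteq\Theta\subseteq g^{-1}(\mathbb{R}_{\leq 0})$, and applying $\Phi^{-1}$ together with the hypothesis $\overline{(g\circ\Phi)^{-1}(\mathbb{R}_{<0})}=(g\circ\Phi)^{-1}(\mathbb{R}_{\leq 0})$ yields
\[ A\ \subseteq\ \Phi^{-1}(\Theta)\ \subseteq\ \overline{A},\qquad A:=(g\circ\Phi)^{-1}(\mathbb{R}_{<0})\subseteq\mathcal{D}, \]
with closures taken in $\mathcal{D}$.

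\textbf{Bounding $b_0(\Phi^{-1}(\Theta))$ by $b_0(A)$.} Write $B:=\Phi^{-1}(\Theta)$. Since $g\circ\Phi$ is continuous, $A$ is open in $\mathcal{D}$, hence locally path connected, so its connected components coincide with its path-connected components and are open; and from $A\subseteq B\subseteq\overline{A}$ the set $A$ is dense in $B$. Now $g$ is polynomial, $\Omega$ is semialgebraic because it is cut out by the first-order condition $\#\big(V_\kappa\cap\mathcal{P}_c\cap\mathbb{R}^n_{>0}\big)\geq 2$, and $\pi$ and (in our applications) $\Phi$ are rational, so $\Theta$, $B$ and $A$ are semialgebraic; in particular $B$ has finitely many connected components, each clopen in $B$ and semialgebraically path connected. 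Each such component, being open and nonempty in $B$, meets the dense set $A$, and intersects $A$ in a nonempty clopen subset of $A$, hence contains at least one connected component of $A$; therefore $b_0(B)\leq b_0(A)$.

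\textbf{Transporting down to $\Theta$ and concluding.} As $\Phi$ is surjective onto $\mathcal{V}$ and $\Theta\subseteq\mathcal{V}$, it restricts to a continuous surjection $\Phi^{-1}(\Theta)\to\Theta$; since continuous images of path connected sets are path connected \cite[Theorem 3.3.5]{singh2019introduction}, the path components of $\Phi^{-1}(\Theta)$ surject onto those of $\Theta$, so $b_0(\Theta)\leq b_0(\Phi^{-1}(\Theta))$. Chaining with the previous inequality gives $b_0(\Theta)\leq b_0\big((g\circ\Phi)^{-1}(\mathbb{R}_{<0})\big)$, which is the claim.

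\textbf{Main obstacle.} I expect the genuinely delicate step to be the middle one: the purely topological assertion ``$A$ open and $A\subseteq B\subseteq\overline{A}$ imply $b_0(B)\leq b_0(A)$'' is \emph{false} in general (a thickened topologist's sine curve provides a counterexample, with points of $\overline{A}\setminus A$ not joinable to $A$ by any path in $\overline{A}$). So the proof must genuinely exploit that the closure hypothesis is imposed on an analytic, in fact semialgebraic, critical function; this tameness is what forces connected and path components to coincide for $A$ and $B$ and prevents the creation of inaccessible components on the boundary. If one wants the lemma for a general analytic parametrization rather than a rational one, the same reasoning goes through with ``semialgebraic'' replaced by ``subanalytic'', using that connected subanalytic sets are path connected.
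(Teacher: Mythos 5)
Your proof is correct and shares the paper's skeleton --- the sandwich $g^{-1}(\mathbb{R}_{<0})\subseteq\Theta\subseteq g^{-1}(\mathbb{R}_{\leq 0})$ obtained from Theorem~\ref{Thm_Plos}(B), pulling back along $\Phi$, and transporting the count down via surjectivity of $\Phi$ --- but you handle the key middle step differently. The paper applies the Curve Selection Lemma to the semianalytic set $A:=(g\circ\Phi)^{-1}(\mathbb{R}_{<0})$: each $\eta\in\Phi^{-1}(\Theta)$ lies in the closure of $A$, so there is a path starting at $\eta$ and otherwise contained in $A$, which makes the map induced by the inclusion $A\hookrightarrow \Phi^{-1}(\Theta)$ on path components surjective. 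You instead count components: density of $A$ in $B:=\Phi^{-1}(\Theta)$, together with the fact that the components of a semialgebraic set are finitely many, clopen and path connected, yields an injection from components of $B$ into components of $A$. The trade-off is that the paper's argument only needs tameness of $A$ (automatic, since $g\circ\Phi$ is analytic) and asks nothing of $B$ itself, whereas yours needs $B$ --- hence $\Theta$ and $\Omega$ --- to be semialgebraic, which forces you to invoke Tarski--Seidenberg for $\Omega$ and to assume $\Phi$ is rational, a restriction not present in the lemma (a parametrization is only required to be a surjective analytic map). Your closing remark that the general case follows by replacing ``semialgebraic'' with ``subanalytic'' is too quick: the preimage of a semialgebraic set under an arbitrary analytic map is semianalytic and may have infinitely many connected components, so the finiteness you use to get clopenness fails; one must instead appeal to local connectedness and arcwise connectedness of semianalytic components, at which point one is essentially back to the curve-selection machinery the paper uses. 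That said, your observation that the bare topological claim ``$A$ open and $A\subseteq B\subseteq\overline{A}$ imply $b_0(B)\leq b_0(A)$'' is false in general is exactly the right diagnosis of why some tameness input is unavoidable.
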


\begin{proof}
For all $(x,\kappa) \in \mathcal{V}$ with $g(x,\kappa) < 0$,   Theorem \ref{Thm_Plos}(B) gives that $\pi(x,\kappa) \in \Omega$. Thus, by definition of $\Theta$, it holds that
\begin{align*}
   g^{-1}(\mathbb{R}_{< 0})   \subseteq \Theta  \subseteq g^{-1}(\mathbb{R}_{\leq 0}) .
\end{align*}
By taking preimages under $\Phi$, it follows that
\[(g \circ \Phi )^{-1}(\mathbb{R}_{<0} ) \subseteq \Phi^{-1}( \Theta ) \subseteq (g \circ \Phi )^{-1}(\mathbb{R}_{\leq 0 } ) .\]

Since $(g \circ \Phi )^{-1}(\mathbb{R}_{\leq 0 } )$ is   the closure of $(g \circ \Phi )^{-1}(\mathbb{R}_{< 0 } )$, every point $\eta \in \Phi^{-1}(\Theta)$ is contained in the closure of $(g \circ \Phi )^{-1}(\mathbb{R}_{< 0 } )$. Using the Curve Selecting Lemma \cite{Lojasiewicz1995}, there exists a continuous path 
\[\gamma\colon [0,1] \to \mathbb{R}^{n}\]
such that $\gamma(0) = \eta$ and $\gamma\big((0,1)\big) \subseteq (g \circ \Phi )^{-1}(\mathbb{R}_{<0} )$. Thus, there exists a continuous path between $\eta$ and one of the path-connected components of $ (g \circ \Phi )^{-1}(\mathbb{R}_{<0} )$. Therefore, $b_0( \Phi^{-1}(\Theta) ) \leq b_0((g \circ \Phi )^{-1}(\mathbb{R}_{<0} ) )$. 

Since $\Phi$ is surjective, it follows that $\Phi(\Phi^{-1}(\Theta)) = \Theta$. Since a continuous image of a path connected set is path connected \cite[Theorem 3.3.5]{singh2019introduction}, we conclude that $b_0(\Theta) \leq b_0( \Phi^{-1}(\Theta) ) $.
\end{proof}

\subsection{Proof of Theorem~\ref{Thm_reduction}}
\label{S3_Appendix}

To prove Theorem~\ref{Thm_reduction}, we need to show that under the hypotheses of the theorem, we have 
\begin{equation*}
b_0(\Omega) \leq b_0((\tilde{g} \circ \tilde{\Phi})^{-1}(\mathbb{R}_{<0})). 
\end{equation*}
The proof   is based on inductive application of the following statement.

\begin{prop}
\label{Lemma_reduction}
Let $(\mathcal{S},\mathcal{R})$ be a conservative reaction network without relevant boundary steady states. Assume that the species $\mathrm{X}_n$ participates in exactly 3 reactions of the form
\[ a_1 X_1 +\dots +a_{n-1} X_{n-1}  \xrightleftharpoons[\kappa_{r}]{\kappa_{r-1}} \mathrm{X}_n  \xrightarrow{\kappa_{r-2}}   b_1 X_1 +\dots +b_{n-1} X_{n-1}.\]
Let $(\tilde{\mathcal{S}},\tilde{\mathcal{R}})$ denote the reduced network obtained by removing the reaction corresponding to $\kappa_r$ and $\tilde{\Theta}$ denote the set in \eqref{Eq_Theta} for $(\tilde{\mathcal{S}},\tilde{\mathcal{R}})$. 
It holds that 
\[b_0(\Theta) \leq b_0(\tilde{\Theta}).\]
\end{prop}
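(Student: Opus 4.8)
The plan is to prove the stronger equality $b_0(\Theta)=b_0(\tilde\Theta)$ by producing a homeomorphism $\mathcal V\cong\tilde{\mathcal V}\times\mathbb R_{>0}$ that identifies $\Theta$ with $\tilde\Theta\times\mathbb R_{>0}$; since $\mathbb R_{>0}$ is path connected this gives $b_0(\Theta)=b_0(\tilde\Theta)$, hence the desired inequality. The geometric input is that $\mathrm X_n$ occurs only in the three displayed reactions and that the removed reverse reaction $R_r$ has $N_r=-N_{r-1}$. Writing $p(x)=x_1^{a_1}\cdots x_{n-1}^{a_{n-1}}$, at any positive steady state the $\mathrm X_n$-equation reads $(\kappa_r+\kappa_{r-2})\,x_n=\kappa_{r-1}\,p(x)$, and substituting this into the remaining $n-1$ coordinate equations collapses the $\kappa_r$- and $\kappa_{r-1}$-dependence down to the single effective flux $\kappa_{r-2}x_n$; the same computation for the reduced network yields identical equations once the effective fluxes match. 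This motivates
\[
\Xi\colon \tilde{\mathcal V}\times\mathbb R_{>0}\longrightarrow\mathcal V,\qquad
\big((x,\tilde\kappa),\kappa_r\big)\longmapsto\big(x,\kappa\big),
\]
with $\kappa_i=\tilde\kappa_i$ for $i\le r-2$, $\kappa_{r-1}=\tilde\kappa_{r-1}(\kappa_r+\tilde\kappa_{r-2})/\tilde\kappa_{r-2}$, and last coordinate $\kappa_r$. I would check directly that $\Xi$ leaves $x$ (including $x_n$) unchanged, lands in $\mathcal V$, and is a bijection whose inverse sends $\kappa_{r-1}\mapsto \kappa_{r-1}\kappa_{r-2}/(\kappa_r+\kappa_{r-2})$ and is again rational with positive denominator; hence $\Xi$ is a homeomorphism.

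First I would transport the factor $\pi^{-1}(\Omega)$. Because $N_r=-N_{r-1}$, deleting $R_r$ preserves the column space $S=\im(N)$ and its rank $s$, so a single row-reduced matrix of conservation relations $W$ with the same pivot indices serves both networks; in particular $\pi$ and $\tilde\pi$ have identical $Wx$-component. Since $\Xi$ fixes $x$, the positive solution sets $V_\kappa\cap\mathcal P_c\cap\mathbb R^n_{>0}$ and $\tilde V_{\tilde\kappa}\cap\mathcal P_c\cap\mathbb R^n_{>0}$ coincide, so $(\kappa,c)\in\Omega\iff(\tilde\kappa,c)\in\tilde\Omega$, and $\Xi$ carries $\tilde\pi^{-1}(\tilde\Omega)\times\mathbb R_{>0}$ bijectively onto $\pi^{-1}(\Omega)$. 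The same positive vector in the left kernel of $N$ shows the reduced network is again conservative.

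The heart of the argument is to transport the sign condition $g\le 0$. By Lemma~\ref{Lemma_reductionFlux} the flux cone of the original network has the extra extreme ray $e_{r-1}+e_r$ beyond the lifts of the reduced extreme vectors; writing $\mu$ for its coefficient and using $A_r=e_n$, $A_{r-1}=(a_1,\dots,a_{n-1},0)^\top$ together with $N_r=-N_{r-1}$, one obtains the exact matrix identity
\[
N\diag(E\lambda)A^\top=\tilde N\diag(\tilde E\tilde\lambda)\tilde A^\top-\mu\,N_{r-1}N_{r-1}^\top .
\]
Via \eqref{Eq_Mtilde} this exhibits $\tilde M(h,\lambda)$ for the full network as a rank-one modification of the reduced one along $N_{r-1}$, and I would prove the multiplicative relation
\[
(-1)^s\det M_\kappa(x)=\tfrac{\kappa_r+\kappa_{r-2}}{\kappa_{r-2}}\,(-1)^s\det\tilde M_{\tilde\kappa}(x)\quad\text{for } (x,\kappa)=\Xi\big((x,\tilde\kappa),\kappa_r\big),
\]
that is $g\circ\Xi=\rho\cdot(\tilde g\circ\pr_1)$ with $\rho=(\kappa_r+\kappa_{r-2})/\kappa_{r-2}>0$. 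As $\rho$ is strictly positive, $g(\Xi(\cdot))\le 0$ if and only if $\tilde g\le 0$, which I have verified on two small test networks.

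Granting the last relation, $\Xi^{-1}(\Theta)=\tilde\Theta\times\mathbb R_{>0}$ by combining the two transported conditions with the definition \eqref{Eq_Theta} of $\Theta$ and $\tilde\Theta$, whence $\Theta\cong\tilde\Theta\times\mathbb R_{>0}$ and $b_0(\Theta)=b_0(\tilde\Theta)$. The main obstacle is precisely the displayed determinant identity. The matrix determinant lemma reduces it to showing that the correction scalar $\mu\,(\diag(h)N_{r-1})^\top\tilde M^{-1}\hat N_{r-1}$, with $\hat N_{r-1}$ obtained from $N_{r-1}$ by zeroing the pivot rows, equals $-\mu/(\tilde E\tilde\lambda)_{r-2}$; solving $\tilde M y=\hat N_{r-1}$ and using $\ker W=S=\im(\tilde N)$ forces $y\in S$ with $\tilde N\diag(\tilde E\tilde\lambda)\tilde A^\top\diag(h)\,y=N_{r-1}$, after which the value of $N_{r-1}^\top\diag(h)\,y$ must be pinned down from the fact that column $n$ of the reduced Jacobian equals $\kappa_{r-2}N_{r-2}$. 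Carrying out this linear-algebra identification — reconciling the rank-one correction with the clean positive factor $\rho$ — is the delicate computational core; everything else is formal.
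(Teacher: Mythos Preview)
Your global strategy is correct and in fact cleaner than the paper's: the map $\Xi$ you define is indeed a homeomorphism $\tilde{\mathcal V}\times\mathbb R_{>0}\to\mathcal V$ (the inverse you wrote down works), and once the determinant identity $g\circ\Xi=\rho\,(\tilde g\circ\pr_1)$ with $\rho>0$ is established, your two transported conditions give exactly $\Xi^{-1}(\Theta)=\tilde\Theta\times\mathbb R_{>0}$ and hence the stronger conclusion $b_0(\Theta)=b_0(\tilde\Theta)$. The paper, by contrast, does not build a bijection: it introduces non-injective projections $F\colon\mathcal V\to\mathbb R^{n-1}_{>0}\times\mathbb R^{r-1}_{>0}$ and $\tilde F$ (forgetting $x_n$ and collapsing $\kappa_{r-1},\kappa_r,\kappa_{r-2}$ to the effective ratio), proves $\Theta=F^{-1}(\tilde F(\tilde\Theta))$, and then manufactures paths in $\Theta$ by hand from paths in $\tilde F(\tilde\Theta)$, obtaining only the inequality. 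Your product decomposition is the more transparent explanation of why this works.

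Where you should change course is the proof of the determinant identity. Your rank-one formula $N\diag(E\lambda)A^\top=\tilde N\diag(\tilde E\tilde\lambda)\tilde A^\top-\mu\,N_{r-1}N_{r-1}^\top$ is correct (it uses only $N_r=-N_{r-1}$ and $A_{r-1}-A_r=-N_{r-1}$), but pushing it through the matrix determinant lemma after the row replacements by $W$ is a real slog, and your description of the correction scalar is still heuristic. The paper bypasses all of this by working directly with the augmented system $h_{\kappa,c}$ (the $f_\kappa$ with pivot rows replaced by $Wx-c$): since the $n$-th equation is $\kappa_{r-1}p(x)-(\kappa_r+\kappa_{r-2})x_n$, one can eliminate $x_n$ by the explicit lower-triangular row operation
\[
B(\kappa)=\begin{pmatrix}\mathrm{Id}_{n-1}&\tfrac{z(\kappa)}{\kappa_r+\kappa_{r-2}}\\[2pt]0&-\tfrac{1}{\kappa_r+\kappa_{r-2}}\end{pmatrix},
\]
and a short computation (this is the paper's identity \eqref{Eq_Relh}) shows $B(\kappa)h_{\kappa,c}(x)=\tilde B(\tilde\kappa)\tilde h_{\tilde\kappa,c}(x)$ whenever $\eta(\kappa)=\tilde\eta(\tilde\kappa)$, which is precisely your $\Xi$-relation. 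Since the Jacobian of $h_{\kappa,c}$ is $M_\kappa(x)$, taking Jacobians and determinants gives
\[
\det B(\kappa)\,\det M_\kappa(x)=\det\tilde B(\tilde\kappa)\,\det\tilde M_{\tilde\kappa}(x),
\]
i.e.\ $g(x,\kappa)=\dfrac{\kappa_r+\kappa_{r-2}}{\tilde\kappa_{r-2}}\,\tilde g(x,\tilde\kappa)$, which is exactly your $\rho$. This two-line argument replaces your entire matrix-determinant-lemma program; I recommend you adopt it and keep the rest of your homeomorphism proof as is.
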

\begin{proof}
For every object corresponding to a network, we write $\tilde{}$ to indicate that it corresponds to the reduced network, e.g. the reaction rate constants in the reduced network are denoted by $\tilde{\kappa}_1, \dots ,\tilde{\kappa}_{r-1}$. 

As in the proof of Lemma~\ref{Lemma_reductionFlux}, 
the stoichiometric matrices $N$ and $\tilde{N}$ satisfy $N_{r-1} = - N_{r}$, $\tilde{N}_j = N_j$ for $j = 1,\dots , r-1$, and ${\rm rk}(N)={\rm rk}(\tilde{N})$. Recall that   $W \in \mathbb{R}^{d \times n}$ is a row reduced full rank matrix such that $W N = 0$. It follows that $W\tilde{N}=0$, so $W$ is a matrix of conservation relations for  $(\tilde{\mathcal{S}},\tilde{\mathcal{R}})$ and we use this matrix in the definition of $\tilde{\pi}$, analogously to \eqref{eq:pi}. Using \eqref{Eq_cons}, we also have that $(\tilde{\mathcal{S}},\tilde{\mathcal{R}})$ is conservative if and only if $(\mathcal{S},\mathcal{R})$ is conservative.

Following the proof of \cite[Prop. 1]{IntermRed} we introduce the maps:

\begin{align*}
\eta\colon & \mathbb{R}^{r}_{>0} \to   \mathbb{R}^{r-1}_{>0},&  \kappa &\mapsto  (\kappa_1, \dots , \kappa_{r-2}, \tfrac{\kappa_{r-1}}{\kappa_{r}+\kappa_{r-2}}),\\
\tilde{\eta}\colon & \mathbb{R}^{r-1}_{>0}  \to \mathbb{R}^{r-1}_{>0}, & \tilde{\kappa} & \mapsto  (\kappa_1, \dots ,\kappa_{r-2}, \tfrac{\tilde{\kappa}_{r-1}}{\tilde{\kappa}_{r-2}}).
\end{align*}
For $x\in \R^n$, let $x^{a} =  x_1^{a_1}\cdots x_{n-1}^{a_{n-1}}$ where $a = (a_1, \dots ,a_{n-1},0)$.

We start by relating the steady states of the two networks under the hypothesis  that $\eta(\kappa) = \tilde{\eta}(\tilde{\kappa})$. 
Recall that $(\kappa,c) \in \Omega$ if and only if the equation system
\begin{align} 
\label{Eq_fkW}
f_{\kappa}(x)  = N v_\kappa(x) = 0, \qquad 
Wx = c
\end{align}
has at least two positive solutions. 
Redundant  linear relations among the equations $f_\kappa(x)=0$ arise from linear dependencies of the rows of $N$. To remove these 
redundancies, we consider $I = \{i_{1}, \dots, i_{d} \}$ to be the set of indices of the first non-zero coordinates of each row of $W$, $i_{1} < \dots < i_{d}$. For all $j = 1, \dots , d$, we replace the $i_{j}$th row of $f_{\kappa}(x)$ by the $j$th row of $Wx-c$. If we denote the resulting function by $h_{\kappa,c}(x)$, then $x^* \in \mathbb{R}^{n}_{>0}$ is a solution of \eqref{Eq_fkW} if and only if $h_{\kappa,c}(x^*) = 0$ holds. Thus, the parameter pair $(\kappa,c)$ enables multistationarity  if and only if $h_{\kappa,c}(x) = 0$ has at least two positive solutions.

Since   $x_n$ appears linearly in $h_{\kappa,c}(x)$, there exist vectors $z(\kappa), v(x,\kappa)$ such that
\begin{align}
\label{Eq_hkc}
h_{\kappa,c}(x) =  \begin{pmatrix}
z(\kappa) \\ 
-(\kappa_r + \kappa_{r-2}) 
\end{pmatrix} x_n +   \begin{pmatrix}
v(x,\kappa) \\ 
\kappa_{r-1}x^a
\end{pmatrix}.
\end{align}
Specifically:
\begin{align*}
& \textrm{if } i=i_j  \in I: \quad z_i(\kappa) =
W_{j,n},    \qquad  v_{i}(x,\kappa) = -c_j +  W_{j,1} x_1 + \dots + W_{j,n-1} x_{n-1}, \\
& \textrm{if } i  \in \{1, \dots , n-1\} \setminus  I: \quad 
  z_i(\kappa) = \kappa_{r}a_i+\kappa_{r-2}b_i,  \qquad v_{i}(x,\kappa) =
-\kappa_{r-1}a_i x^a + u_i(x,\kappa),  \end{align*}
where $u(x,\kappa)$ is chosen such that \eqref{Eq_hkc} holds.  

We define  $\tilde{h}_{\tilde{\kappa},c}(x)$ analogously for the reduced network and write 
\begin{align*}
\tilde{h}_{\tilde{\kappa},c}(x) = \begin{pmatrix}
\tilde{z}(\tilde{\kappa})  \\ 
-\tilde{\kappa}_{r-2}
\end{pmatrix} x_n +   \begin{pmatrix}
\tilde{v}(x,\tilde{\kappa})  \\ 
-\tilde{\kappa}_{r-1}x^{a}
\end{pmatrix},
\end{align*}
which under the assumption 
that $\eta(\kappa) = \tilde{\eta}(\tilde{\kappa})$, we have 
\begin{align*}
 \textrm{if } i=i_j  \in I:  &&  \tilde{z}_i(\tilde{\kappa}) &= z_i(\kappa),    &   \tilde{v}_{i}(x,\tilde{\kappa}) & = v_{i}(x,\kappa), \\
 \textrm{if } i  \in \{1, \dots , n-1\} \setminus  I: && 
  \tilde{z}_i(\tilde{\kappa}) &= \tilde{\kappa}_{r-2}b_i ,  & 
  \tilde{v}_{i}(x,\tilde{\kappa}) &=
-\tilde{\kappa}_{r-1}a_i x^a  + u_i(x,\tilde{\kappa}).  \end{align*}
It then holds that 
\begin{equation}\label{eq:kktilde}
\tfrac{\kappa_{r-1}}{\kappa_r + \kappa_{r-2}} z(\kappa) x^a + v(x,\kappa) = 
 \tfrac{\tilde{\kappa}_{r-1}}{\tilde{\kappa}_{r-2}}\tilde{z}(\tilde{\kappa})x^a + \tilde{v}(x,\tilde{\kappa}).
\end{equation}
For $i\in I$, 
it is straightforward to see that this equality holds. If $i\notin I$, then the equation reduces to the equality
 \begin{align*}
 \tfrac{\kappa_{r-1}}{\kappa_{r}+\kappa_{r-2}} (\kappa_{r} a_{i} + \kappa_{r-2}b_i )  - \kappa_{r-1}a_i  &=
\tfrac{\k_{r-1} }{\k_r+ \k_{r-2}} \k_{r-2} (b_i -  a_i ) =
 \tfrac{\tilde{\k}_{r-1} }{\tilde{\k}_{r-2}} \tilde{\k}_{r-2} (b_i - a_i ) \\ &
 =  \tilde{\k}_{r-1} b_i - \tilde{\k}_{r-1}  a_i.
\end{align*}
With this in place, consider the matrices 
\begin{align*}
B(\kappa) = \begin{pmatrix}
{\rm Id}_{n-1}  &  \tfrac{z(\k)}{\kappa_r + \kappa_{r-2}}  \\
0 & - \tfrac{1}{\kappa_r + \kappa_{r-2}}
\end{pmatrix}, \quad  \text{and} \quad \tilde{B}(\tilde{\kappa}) = \begin{pmatrix}
{\rm Id}_{n-1}  &  \tfrac{\tilde{z}(\tilde{\k})}{\tilde{\kappa}_{r-2}}  \\
0 & - \tfrac{1}{\tilde{\kappa}_{r-2}}
\end{pmatrix},
\end{align*}
where ${\rm Id}_{n-1}$ is the identity matrix of size $n-1$. Using \eqref{eq:kktilde}, we have

\begin{multline}\label{Eq_Relh}
B(\kappa) h_{\kappa,c}(x) = \begin{pmatrix}
{\rm Id}_{n-1}  &  \tfrac{z(\k)}{\kappa_r + \kappa_{r-2}}  \\
0 & - \tfrac{1}{\kappa_r + \kappa_{r-2}}
\end{pmatrix} \begin{pmatrix}
z(\kappa) x_n + v(x,\kappa)\\ 
-(\kappa_r + \kappa_{r-2}) x_n + \kappa_{r-1}x^a
\end{pmatrix}=\\
 \begin{pmatrix}
\tfrac{\kappa_{r-1}}{\kappa_r + \kappa_{r-2}} z(\kappa) x^a + v(x,\kappa)\\ 
x_{n} -\tfrac{\kappa_{r-1}}{\kappa_r + \kappa_{r-2}}x^a
\end{pmatrix} = 
 \begin{pmatrix}
 \tfrac{\tilde{\kappa}_{r-1}}{\tilde{\kappa}_{r-2}}\tilde{z}(\tilde{\kappa})x^a + \tilde{v}(x,\tilde{\kappa})\\ 
x_{n} - \tfrac{\tilde{\kappa}_{r-1}}{\tilde{\kappa}_{r-2}}x^a
\end{pmatrix} = \tilde{B}(\tilde{\kappa}) \tilde{h}_{\tilde{\kappa},c}(x). 
 \end{multline}
Since the matrices $B(\kappa)$ and $\tilde{B}(\tilde{\kappa})$ are invertible, it follows from \eqref{Eq_Relh}   that every positive solution of $h_{\kappa,c}(x) = 0$ is a solution of $h_{\tilde{\kappa},c}(x) = 0$. In particular, the reduced network does not have relevant boundary steady states.

Additionally, since $s=\tilde{s}$, and the Jacobian of $h_{\kappa,c}(x)$ is precisely the matrix $M_{\kappa}(x)$ (and analogous for the reduced network), taking the Jacobian and determinant of both sides of \eqref{Eq_Relh} yields:
\begin{align}
\label{Eq_gs}
- \tfrac{1}{\kappa_r + \kappa_{r-2}} g(x,\kappa) = - \tfrac{1}{\tilde{\kappa}_{r-2}} \tilde{g}(x,\tilde{\kappa}).
\end{align}
Since $\kappa_r ,\kappa_{r-2},\tilde{\kappa}_{r-2} >0$, it follows that  $g(x,\kappa)$ and  $ \tilde{g}(x,\tilde{\kappa})$  have the same sign if $\eta(\kappa) =  \tilde{\eta}(\tilde{\kappa})$.

Finally, we can easily see  that for $(x^*,\kappa) \in \mathcal{V}$ and $(x^*,\tilde{\kappa}) \in \tilde{\mathcal{V}}$ with $\eta(\kappa) =  \tilde{\eta}(\tilde{\kappa})$ 
the following holds:
\begin{align}
\label{Eq_Omegas}
\pi(x^*,\kappa) \in \Omega \quad \text{if and only if} \quad  \tilde{\pi}(x^*,\tilde{\kappa}) \in \tilde{\Omega}. 
\end{align}

For the forward implication, 
if $\pi(x^*,\kappa) \in \Omega$, then by definition,  the parameter pair $(\kappa,c)$ with $c = Wx^*$ enables multistationarity,  that is $h_{\kappa,c}(x) = 0$ has at least two positive solutions.   
Hence so does $h_{\tilde{\kappa},c}(x) = 0$, and $(\tilde{\kappa},c)=\tilde{\pi}(x^*,\tilde{\kappa}) \in \tilde{\Omega}$. 
The reverse implication of \eqref{Eq_Omegas} follows analogously.

\medskip
With these preliminaries in place, 
consider the maps

\begin{align*}
F\colon & \mathcal{V} \to   \mathbb{R}^{n-1}_{>0} \times \mathbb{R}^{r-1}_{>0},& (x,\kappa) &\mapsto ((x_1,\dots,x_{n-1}), \eta(\kappa)),\\
\tilde{F}\colon & \tilde{\mathcal{V}} \to   \mathbb{R}^{n-1}_{>0} \times \mathbb{R}^{r-1}_{>0}, & (\tilde{x},\tilde{\kappa}) &\mapsto ((\tilde{x}_1,\dots,\tilde{x}_{n-1}), \tilde{\eta}(\tilde{\kappa})).
\end{align*}
By considering the steady state equation of $X_n$, for $(x,\kappa) \in \mathcal{V}$ and $(\tilde{x},\tilde{\kappa}) \in \tilde{\mathcal{V}}$ the $n$th coordinate is uniquely determined by the rest of the coordinates, that is:
\begin{align}
\label{Eq_recovprop}
x_{n} =  \tfrac{\kappa_{r-1}}{\kappa_{r}+\kappa_{r-2}}\,  x^{a} , \qquad \tilde{x}_{n} =\tfrac{\tilde{\kappa}_{r-1}}{\tilde{\kappa}_{r-2}}\,  \tilde{x}^a. 
\end{align}
This implies that
\begin{align}
\label{Eq_FFtilde}
F(x,\kappa) = \tilde{F}(\tilde{x},\tilde{\kappa}) \Leftrightarrow
 \begin{cases} 
x = \tilde{x}\\
\eta(\kappa) = \tilde{\eta}(\tilde{\kappa}).
\end{cases}
\end{align}
Additionally, the restriction of $F$ to a fixed $\k$ or of $\tilde{F}$ to a fixed $\tilde{\k}$ are injective functions.

\smallskip
Now, the first step towards the proof of the theorem is to show that $\Theta$  satisfies
\begin{align}
\label{Eq_Thetas}
\Theta = F^{-1}\big( \tilde{F}(\tilde{\Theta}) \big).
\end{align}
To show the inclusion $\subseteq$, let $(x,\kappa) \in \Theta \subseteq \mathcal{V}$. 
By the definition of $\Theta$ in \eqref{Eq_Theta}, it holds $\pi(x,\kappa)\in \Omega$ and $g(x,\kappa)\leq 0$. 
Since $\tilde{\eta}$ is surjective, there exists $\tilde{\kappa} \in  \mathbb{R}^{r-1}_{>0} $ such that $\eta(\kappa) = \tilde{\eta}(\tilde{\kappa})$. As $(x,\kappa) \in \mathcal{V}$, we have $(x,\tilde{\kappa}) \in \tilde{\mathcal{V}}$ by \eqref{Eq_Relh}, and hence 
$\tilde{\pi}(x,\tilde{\kappa})  \in \tilde{\Omega}$ by \eqref{Eq_Omegas}.  Now \eqref{Eq_gs} implies also that $ \tilde{g}(x,\tilde{\kappa}) \leq 0$, thus  $(x,\tilde{\kappa}) \in \tilde{\Theta}$ by definition. 
As $F(x,\kappa)=\tilde{F}(x,\tilde{\kappa})$, the inclusion $\subseteq$ follows. 

For the reverse inclusion $\supseteq$, let $(x,\kappa) \in F^{-1}\big( \tilde{F}(\tilde{\Theta}) \big)$. Then there exists $(\tilde{x},\tilde{\kappa}) \in \tilde{\Theta}$ such that $F(x,\kappa) = \tilde{F}(\tilde{x},\tilde{\kappa})$. 
 By \eqref{Eq_FFtilde} it follows that $x=\tilde{x}$ and $\eta(\k)=\tilde{\eta}(\tilde{\kappa})$. We now use \eqref{Eq_Omegas} and \eqref{Eq_gs} again, to show that $(x,\kappa) \in \Theta$.

\bigskip
 
From \eqref{Eq_Thetas} follows that $F(x,\kappa) \in \tilde{F}(\tilde{\Theta})$ for all $(x,\kappa) \in \Theta$. As a next step of the proof, we show that there exists a continuous path between any two points $(x,\kappa), (x’,\kappa’) \in \Theta$, if $F(x,\kappa)$ and $F(x’,\kappa’)$ lie in the same path-connected component of $\tilde{F}(\tilde{\Theta})$. So let
\[ \gamma \colon [0,1] \to \mathbb{R}^{n-1}_{>0} \times \mathbb{R}^{r-1}_{>0}, \qquad  t \mapsto (y(t),\alpha(t))\]
be a continuous path such that $\gamma(0) = F(x,\kappa)$, $\gamma(1) = F(x’,\kappa’)$ and $\im \gamma \subset \tilde{F}(\tilde{\Theta})$.
We now extend $\gamma$ to a path in $\mathbb{R}^{n}_{>0} \times \mathbb{R}^{r}_{>0}$ by defining
\begin{align*}
\Gamma \colon [0,1] &\to \mathbb{R}^{n}_{>0} \times \mathbb{R}^{r}_{>0},\\
 t &\mapsto \Big( \big( y(t),\alpha_{r-1}(t)y(t)^a\big),\big( \alpha_1(t) , \dots ,  \alpha_{r-2}(t), \alpha_{r-1}(t)(\kappa_{r} + \alpha_{r-2}(t)),\kappa_{r} \big)\Big).
\end{align*}
Here $\k_r$ is fixed, and is the last component of the original parameter vector $\k$. 
Using $\gamma(0) = F(x,\kappa)$, that is, $y(0)=(x_1,\dots,x_{n-1})$ and $\alpha(0)=\eta(\k)$, 
and  the recovery property of the $n$th coordinate \eqref{Eq_recovprop}, we have
\begin{align*}
\Gamma(0) =  \Big( \big( x_1,\dots,x_{n-1},\tfrac{\kappa_{r-1}}{\kappa_r + \kappa_{r-2}}x^a\big),\big( \kappa_1 , \dots ,  \kappa_{r-2}, \tfrac{\kappa_{r-1}}{\kappa_r + \kappa_{r-2}}(\kappa_{r} + \kappa_{r-2}),\kappa_{r} \big)\Big) = (x,\kappa).
\end{align*}
 Furthermore, $\im \Gamma \subseteq F^{-1}(\tilde{F}(\tilde{\Theta})) = \Theta$, since for all $t \in [0,1]$:
\begin{align}\label{eq:FGamma}
 F(\Gamma(t)) = \Big( y(t),          \big( \alpha_1(t) , \dots ,  \alpha_{r-2}(t),  \tfrac{\alpha_{r-1}(t)(\kappa_{r} + \alpha_{r-2}(t))}{\kappa_{r} + \alpha_{r-2}(t)} \big)\Big) = \gamma(t) \ \in \tilde{F}(\tilde{\Theta}).
\end{align}

\medskip
We have now connected $\gamma(0)$ to the point $\Gamma(1)$ via a continuous path  in $\Theta$. 
We now construct a continuous path between $\Gamma(1)$ and $(x’,\kappa’)$ in $\Theta$. First note that by \eqref{eq:FGamma},
\begin{align*}
F(\Gamma(1)) =  \gamma(1) = F(x',\kappa').
\end{align*}
Thus, both $\Gamma(1)$ and $(x’,\kappa’)$ are contained in the set $F^{-1}(\gamma(1))$, which is
\begin{align*}
\Big\{\Big(x',(\kappa'_1,\dots, \kappa'_{r-2}, \alpha_{r-1}(1) (\beta + \kappa'_{r-2}),\beta)\Big) \mid \beta \in \mathbb{R}_{>0} \Big\}.
\end{align*} 
Since this set is path connected, there exists a continuous path
\begin{align*}
\Lambda \colon [0,1] \to F^{-1}(\gamma(1))
\end{align*}
such that $\Lambda(0) = \Gamma(1)$ and $\Lambda(1) = (x’,\kappa’)$. This path is contained in $\Theta = F^{-1}(\tilde{F}(\tilde{\Theta}))$, since $F(\Lambda(t)) = \gamma(1) \in \tilde{F}(\tilde{\Theta})$ for all $t \in [0,1]$. 

\medskip

To finish the proof of the proposition, let $U_1, \dots , U_k$ be the path-connected components of $\Theta$ and let $\tilde{U}_1, \dots , \tilde{U}_{\tilde{k}}$ be the path-connected components of $\tilde{\Theta}$. Since a continuous image of a path connected set is path connected,  $\tilde{F}(\tilde{U}_1), \dots , \tilde{F}(\tilde{U}_{\tilde{k}})$ are path connected. Moreover, from $\tilde{\Theta} = \cup_j \tilde{U}_j$ follows that $\tilde{F}(\tilde{\Theta}) = \cup_j \tilde{F}(\tilde{U}_j)$.

If $k > \tilde{k}$, then there must exist $j \in \{1, \dots , \tilde{k}\}$ and $i_{1} \neq i_{2} \in \{1,\dots ,k\}$ and $(x,\kappa) \in U_{i_{1}}, (x',\kappa')  \in U_{i_{2}}$ such that $F(x,\kappa),F(x',\kappa') \in \tilde{F}(\tilde{U}_j)$. By the above argument, there exist a continuous path between $(x,\kappa)$ and $(x',\kappa')$. This contradicts that $i_1 \neq i_2$. Therefore, $b_0(\Theta) = k \leq \tilde{k} = b_0(\tilde{\Theta})$ as desired.
\end{proof}

\begin{proof}[Proof of Theorem \ref{Thm_reduction}]
We conclude the proof of  Theorem~\ref{Thm_reduction} using Proposition~\ref{Lemma_reduction}.
For $i= 1,\dots k$, let $(\mathcal{S}_i, \mathcal{C}_i)$ denote the reaction network obtained by removing the reverse reactions corresponding to $j = 1, \dots , i$. Furthermore, let $\tilde{\Theta}_i$ be the set corresponding to the network $(\mathcal{S}_i, \mathcal{C}_i)$ as defined in \eqref{Eq_Theta}.

Since the closure of $(\tilde{g} \circ \tilde{\Phi})^{-1}(\mathbb{R}_{<0})$ equals  $(\tilde{g} \circ \tilde{\Phi})^{-1}(\mathbb{R}_{\leq 0})$,
\[ b_0(\tilde{\Theta}_k) \leq b_0( (\tilde{g} \circ \tilde{\Phi})^{-1}(\mathbb{R}_{<0}))\]
by Lemma~\ref{Lemma_KeyLemma}. Applying Proposition~\ref{Lemma_reduction} inductively, one has that
\[b_0(\Theta) \leq b_0(\tilde{\Theta}_1) \leq \dots \leq b_0(\tilde{\Theta}_k).\]
Now, we use Lemma \ref{Lemma_KeyLemma1} to conclude that the multistationarity region $\Omega$ of the network $(\mathcal{S},\mathcal{R})$ satisfies:
\[b_0(\Omega) \leq b_0( (\tilde{g} \circ \tilde{\Phi})^{-1}(\mathbb{R}_{<0})).\]
\end{proof}

\subsection{Number of path-connected components of the multistationarity region for Fig.~\ref{fig:net}(c)}
\label{S4_Appendix}

We aim at showing that the multistationarity region  $\Omega$ of the network in Fig.~\ref{fig:net}(c) has exactly two path-connected components.
We do this in two steps. First, we show that $b_0(\Omega)\geq 2$, and then that $b_0(\Omega)\leq 2$, giving the equality. 

\medskip
\noindent

\paragraph{\bf Showing that $b_0(\Omega)\geq 2$.}
We choose the order of species 
$ \mathrm{S},\mathrm{S}_{\rm p},\mathrm{K}$, $\mathrm{P}$, $\mathrm{KL}$, $\mathrm{PL}, \mathrm{SKL}, \mathrm{S}_{\rm p}\mathrm{P},   \mathrm{L}$, giving the following stoichiometric matrix, and choice of matrix of conservation relations:  
{\small 
\begin{align*}
N & = \left[\begin{array}{rrrrrrrrrr}
-1 & 1 & 0 & 0 & 0 & 1 & 0 & 0 & 0 & 0 
\\
0 & 0 & 1 & -1 & 1 & 0 & 0 & 0 & 0 & 0 
\\
0 & 0 & 0 & 0 & 0 & 0 & -1 & 1 & 0 & 0 
\\
0 & 0 & 0 & -1 & 1 & 1 & 0 & 0 & -1 & 1 
\\
-1 & 1 & 1 & 0 & 0 & 0 & 1 & -1 & 0 & 0 
\\
0 & 0 & 0 & 0 & 0 & 0 & 0 & 0 & 1 & -1 
\\
1 & -1 & -1 & 0 & 0 & 0 & 0 & 0 & 0 & 0 
\\
0 & 0 & 0 & 1 & -1 & -1 & 0 & 0 & 0 & 0 
\\
0 & 0 & 0 & 0 & 0 & 0 & -1 & 1 & -1 & 1 
\end{array}\right], \\
W & = \begin{bmatrix}
1 & 1 & 0 & 0 & 0 & 0 & 1 & 1 & 0  \\
  0 & 0 & 1 & 0 & 1 & 0 & 1 & 0 & 0 \\
0 & 0 & 0 & 1 & 0 & 1 & 0 & 1 & 0 \\
0 & 0 & 0 & 0 & 1 & 1 & 1 & 0 & 1 
\end{bmatrix}.
\end{align*} }

The network is conservative, consistent, and has no relevant boundary steady states (determined using the siphon criterion). 
Solving the steady states equations for $x_2,x_6,x_7,x_8,x_9$ gives that any positive steady state satisfies
{\small \begin{align*}
x_{2}  & = 
\frac{(\kappa_{5}+\kappa_{6}) \kappa_{1}\kappa_{3} x_{1} x_{5}  }{\kappa_{6} (\kappa_{2}+\kappa_{3})  \kappa_{4}x_{4} },  & 
x_{6} & = \frac{\kappa_{8}   \kappa_{9}x_{4}x_{5}}{\kappa_{7} \kappa_{10}x_{3} }, &
x_{7} & = \frac{\kappa_{1} x_{1} x_{5}}{\kappa_{2}+\kappa_{3}}, &  
x_{8} &= \frac{ \kappa_{1}\kappa_{3} x_{1}x_{5} }{\kappa_{6} (\kappa_{2}+\kappa_{3})}, & 
 x_{9} &  = \frac{ \kappa_{8}x_{5}}{ \kappa_{7}x_{3}}.
\end{align*}}%
It is convenient to introduce the  Michaelis-Menten constants of the two enzymatic processes of the network:
\[ K_1 = \frac{\k_2+\k_3}{\kappa_{1}}, \qquad K_2 = \frac{\k_5+\k_6}{\kappa_{4}}.\]
With this notation, and by letting $\xi$ be the vector of free variables, that is $\xi_1=x_1,\xi_2=x_3,\xi_3=x_4,\xi_4=x_5$, we obtain the  parametrization  $\Phi\colon \mathbb{R}^{4}_{>0} \times \mathbb{R}^{10}_{>0} \to \mathcal{V}$ given by
{\small \begin{align*}
\Phi(\xi,\kappa)  & = \left(\xi_1,\frac{ \kappa_{3} K_2 \xi_1 \xi_4  }{  \kappa_{6} K_1\xi_3 },\xi_2,\xi_3,\xi_4, \frac{\kappa_{8}   \kappa_{9}\xi_3\xi_4}{\kappa_{7} \kappa_{10}\xi_2 } , \frac{ \xi_1 \xi_4}{K_{1}}, \frac{\kappa_{3}  \xi_1\xi_4 }{\kappa_{6}K_1}, \frac{ \kappa_{8}\xi_4}{ \kappa_{7}\xi_2}\right).
\end{align*}}

The critical function $g \circ \Phi$  is a quotient of polynomials with denominator $K_1\kappa_6 \kappa_7 \xi_2^{2}\xi_3$, and numerator a multiple of $\k_1\k_4$. Since these expressions are positive for all positive  $\xi$ and $\kappa$, the sign of $(g \circ \Phi)(\xi,\kappa)$ depends only on the sign of the numerator divided by $\k_1\k_4$, which we denote by $p_\kappa(\xi)$. If we view $p_\kappa(\xi)$ as a polynomial in $\xi_1, \xi_2,\xi_3 , \xi_{4}$, then there are two monomials $\xi_1\xi_2^2\xi_3^2\xi_4$ and $\xi_1\xi_2^2\xi_3\xi_4^2$ with coefficients
\begin{align*}
\kappa_6 \kappa_7 \kappa_8 \kappa_9 K_1 (\kappa_6 - \kappa_3), \qquad \text{and} \qquad \kappa_3  \kappa_7\kappa_8\kappa_9K_2(\kappa_3 - \kappa_6).
\end{align*}
The coefficient of the other monomials of $p_\kappa(\xi)$ are sums of products of the parameters, and  are positive. Thus, the value of the critical function $(g \circ \Phi )(\xi,\kappa)$ is positive for all $\xi \in \mathbb{R}^{4}_{>0}$ if $\kappa_3 = \kappa_6$. Now, one can apply part A of Theorem \ref{Thm_Plos} to conclude that $(\kappa , c)$ does not enable multistationarity if $\kappa_3 = \kappa_6$.

\smallskip
As indicated in the main text, it is enough to show now that 
  in the cases $\kappa_3 < \kappa_6$ and $\kappa_3 > \kappa_6$, the network can be multistationary.  We show it is possible to choose $K_1,K_2,\kappa_1,\kappa_4,\kappa_7,\kappa_8,\kappa_9,\kappa_{10}$ such that the polynomial $p_\kappa(\xi)$ takes negative values for some $\xi \in \mathbb{R}^{4}_{>0}$. 
To this end, we need to employ some standard techniques relating the signs a polynomial attains and its Newton polytope, and refer the reader  for example to \cite[Section 2.2]{MultDualPhos}. 
Since the negative monomials of $p_\kappa(\xi)$ are contained in a face of the Newton polytope of $p_\kappa(\xi)$, it is enough to show that $p_\kappa(\xi)$ restricted to that face takes negative values. The restricted polynomial is given by $\k_1\k_7 \xi_1 \xi_2$ times
\begin{align*}
q_\k(\xi):=& \k_6 K_{1} \xi_{3}^{2} \big( \k_{6}  \k_{7} \k_{10} \xi_{2}^{2}
+  \k_{8} \k_{9} (\k_{6}-\k_{3}) \xi_{2}  \xi_{4}+  \k_{6} \k_{8} \k_{9} \xi_{3} \xi_{4}\big) \\  &
\qquad+\k_3 K_{2} \xi_4^2 \big( \k_{3}  \k_{7} \k_{10} \xi_{2}^{2}  + \k_{8} \k_{9} (\k_{3}-\k_{6}) \xi_{2} \xi_{3} +\k_{3}\k_{8} \k_{9} \xi_{3} \xi_{4} \big).
\end{align*}
This is a polynomial with exactly one negative coefficient, for any choice of $\k_3\neq \k_6$. 

If $\k_6-\k_3<0$, by choosing $K_2$ very small (or $K_1$ large), the polynomial multiplying $K_1$ determines the sign of $q_\k(\xi)$. 
By letting $\xi_2=1,\xi_3=\tfrac{1}{\xi_4}$, the polynomial multiplying $\k_6 K_{1} \xi_{3}^{2} $ becomes
\[  \k_{6}\k_{7} \k_{10}  
+  \k_{8} \k_{9} (\k_{6}-\k_{3})    \xi_{4}+  \k_{6}  \k_{8} \k_{9}. \]
Hence, for any $\xi_4>0$ large enough, this polynomial is negative, and so is $q_\k(\xi)$. Therefore, $p_\k(\xi)$ also attains negative values. 
 
 If $\k_3-\k_6<0$, all we need to do is to let $K_1$ be small enough, or $K_2$ large enough and repeat the argument.
 We conclude that $b_0(\Omega)\geq 2$.

\medskip

\medskip
\noindent

\paragraph{\bf Showing that $b_0(\Omega)\leq 2$.}
We now apply Theorem \ref{Thm_reduction} to the reduced network:
\begin{align*}
\mathrm{S} + \mathrm{KL}  & \xrightarrow{\kappa_{1}} \mathrm{SKL} \xrightarrow{\kappa_{3}} \mathrm{S}_{\rm p} + \mathrm{KL}, & \mathrm{K} + \mathrm{L}  & \xrightleftharpoons[\kappa_8]{\kappa_7} \mathrm{KL} \\
\mathrm{S}_{\rm p} + \mathrm{P} & \xrightarrow{\kappa_{2}} \mathrm{S}_{\rm p}\mathrm{P} \xrightarrow{\kappa_{6}} \mathrm{S} + \mathrm{P}, & \mathrm{P} + \mathrm{L}&   \xrightleftharpoons[\kappa_{10}]{\kappa_9} \mathrm{PL}.
\end{align*}
A matrix of extreme vectors is 
 {\small

\begin{align*}
E = \begin{bmatrix}
0 & 0 & 1 \\
0 & 0 & 1 \\
0 & 0 & 1 \\
0 & 0 & 1 \\
0 & 1 & 0 \\
0 & 1 & 0 \\
1 & 0 & 0 \\
1 & 0 & 0
\end{bmatrix}. 
\end{align*}}%
 The critical polynomial $\tilde{g} \circ \tilde{\Phi}$ in the variables $(\lambda_1,\lambda_2,\lambda_3,h_1,\dots,h_9)$ has $2$ negative monomials corresponding to the exponents
\[\beta_1 := (1, 1, 3, 0, 1, 0, 1, 1, 0, 1, 0, 1), \quad \beta_2 := (1, 1, 3, 1, 0, 0, 1, 1, 0, 0, 1, 1),\]
and it has $42$ monomials with positive coefficients.

In the following, we write
\[(\tilde{g} \circ \tilde{\Phi})(y) =  - c_{\beta_1} y^{\beta_1} - c_{\beta_2} y^{\beta_2}+  \sum_{\alpha \in \sigma_{+}(\tilde{g} \circ \tilde{\Phi}) } c_{\alpha}y^{\alpha},\]
where $\sigma_{+}(\tilde{g} \circ \tilde{\Phi})$ denotes the set of positive monomials and $c_{\alpha}$ are all positive. For the vector 
$v = (-6,0,0,0,0,0,2,2,0,0,0,2)$ it holds that

\begin{align}
v \cdot \beta_1 &= 0,  && v\cdot \beta_2 = 0, \nonumber\\  
v \cdot \alpha & \leq 0, && \text{for all}  \quad \alpha \in \sigma_{+}(\tilde{g} \circ \tilde{\Phi}),\label{Eq_LeadingTerm}
\end{align}
where for exactly two monomials $\alpha_1,\alpha_2 \in \sigma_{+}(\tilde{g} \circ \tilde{\Phi})$ there is equality in \eqref{Eq_LeadingTerm}. Define the polynomial
\[ h(y) := c_{\alpha_1} y^{\alpha_1}  + c_{\alpha_2} y^{\alpha_2}  - c_{\beta_1} y^{\beta_1} - c_{\beta_2} y^{\beta_2}.\]
Using \cite[Corollary 3.13]{DescartesHypPlane}, it follows that $b_0(h^{-1}(\mathbb{R}_{<0})) \leq 2$.

\smallskip
The next step is to show that $ b_0((\tilde{g} \circ \tilde{\Phi})^{-1}(\mathbb{R}_{<0})) \leq b_0(h^{-1}(\mathbb{R}_{<0}))$. First, we observe that 
$(\tilde{g} \circ \tilde{\Phi})(y) \geq h(y)$ for all $y \in \mathbb{R}^{12}_{>0}$ and hence 
\begin{equation}\label{eq:gh}
(\tilde{g} \circ \tilde{\Phi})^{-1}(\mathbb{R}_{<0}) \subseteq h^{-1}(\mathbb{R}_{<0}).
\end{equation}

 If $y,y' \in (\tilde{g} \circ \tilde{\Phi})^{-1}(\mathbb{R}_{<0}) $ belong to the same path-connected component of $h^{-1}(\mathbb{R}_{>0})$, then we build a path between $y$ and $y'$ contained in $(\tilde{g} \circ \tilde{\Phi})^{-1}(\mathbb{R}_{<0}) $ as follows. Since $y$ and $y'$ lie in the same path-connected component of $h^{-1}(\mathbb{R}_{<0})$, by \eqref{eq:gh} there exists a continuous path
\[ \gamma\colon [0,1] \to h^{-1}(\mathbb{R}_{<0})\]
such that $\gamma(0) = y$, $\gamma(1) = y'$. Note that the image of $\gamma$ is not necessarily contained in $(\tilde{g} \circ \tilde{\Phi})^{-1}(\mathbb{R}_{<0}) $.

For each $s \in [0,1]$, we define the function $f_s\colon \mathbb{R}_{>0} \to h^{-1}(\mathbb{R}_{<0})$ as

\begin{align*} 
f_s(t) & =(\tilde{g} \circ \tilde{\Phi})( \gamma(s) \circ t^v )  = - c_{\beta_1} \gamma(s)^{\beta_1}t^{v \cdot \beta_1} - c_{\beta_2} \gamma(s)^{\beta_2}t^{v \cdot \beta_2}+  \sum_{\alpha \in \sigma_{+}(\tilde{g} \circ \tilde{\Phi}) } c_{\alpha} \gamma(s)^{\alpha}t^{v \cdot \alpha}, \\
&= h(\gamma(s)) + p_s(t), 
\end{align*}

where $\gamma(s) \circ t^v = (\gamma_1(s)t^{v_1}, \dots , \gamma_{12}(s)t^{v_{12}})$, and from \eqref{Eq_LeadingTerm} $p_s(t)$ is a sum of generalized monomials in $t$ with all exponents negative and all coefficients positive. Hence the leading coefficient of $f_s(t)$ is  $h(\gamma(s)) < 0$,  all the other coefficients of $f_s(t)$ are positive, and $f_s(t)$ has a unique positive real root $T_s >0$. Since the roots of a polynomial depend continuously on the coefficients, $T_s$ depends continuously on $s$. Therefore, $T := \max_{s\in [0,1]}T_s$ exists.

For each $s \in [0,1]$ and $t_0 > \max\{1,T\}$ it holds that
$(\tilde{g} \circ \tilde{\Phi})( \gamma(s) \circ t_0^v ) = f_s(t_0) < 0.$
Using this observation, we  define the path:
\[\Gamma \colon [0,1] \to (\tilde{g} \circ \tilde{\Phi})^{-1}(\mathbb{R}_{<0}), \qquad s \mapsto \gamma(s) \circ t_0^v.\]
Now, we connect the points $y,\Gamma(0)$, and $y',\Gamma(1)$ using respectively the paths:

\begin{align*}
\gamma_1\colon [1,t_0] & \to (\tilde{g} \circ \tilde{\Phi})^{-1}(\mathbb{R}_{<0}) &  \gamma_2\colon [1,t_0] & \to (\tilde{g} \circ \tilde{\Phi})^{-1}(\mathbb{R}_{<0}) \\
 t & \mapsto y \circ t^v & t &\mapsto y' \circ t^v.
 \end{align*}
 
 Indeed, $\gamma_1(1)=y\circ 1^v=y$, $\gamma_1(t_0)= y \circ t_0^v = \gamma(0)\circ t_0^v = \Gamma(0)$, and 
 the image of $\gamma_1$ is contained in $(\tilde{g} \circ \tilde{\Phi})^{-1}(\mathbb{R}_{<0})$, as
 $(\tilde{g} \circ \tilde{\Phi})(y \circ t^v)$ is negative at $t=1$, has negative leading coefficient, and has at most 
one positive real root. 
 The analogous argument shows that $\gamma_2$ connects $y'$ and $\Gamma(1)$ in $(\tilde{g} \circ \tilde{\Phi})^{-1}(\mathbb{R}_{<0})$.
 
 The concatenated path $\gamma_2^{-1}\, \Gamma\, \gamma_1$ gives a continuous path from $y,y'$ contained in  $(\tilde{g} \circ \tilde{\Phi})^{-1}(\mathbb{R}_{<0})$. We conclude that the number of path-connected components of $(\tilde{g} \circ \tilde{\Phi})^{-1}(\mathbb{R}_{<0})$ is less or equal than the number of path-connected components of $h^{-1}(\mathbb{R}_{<0})$. Hence the inequality  $ b_0((\tilde{g} \circ \tilde{\Phi})^{-1}(\mathbb{R}_{<0})) \leq b_0(h^{-1}(\mathbb{R}_{>0}))$ holds. 

\medskip
This worked for the reduced network. In order to lift the result to the original network, we need to verify the extra condition of Theorem~\ref{Thm_reduction}, namely that  the closure of $(\tilde{g} \circ \tilde{\Phi})^{-1}(\mathbb{R}_{<0})$ is   $(\tilde{g} \circ \tilde{\Phi})^{-1}(\mathbb{R}_{\leq 0})$. To see this, it is enough to show that the gradient $\nabla (\tilde{g} \circ \tilde{\Phi})(y)$ is different from zero for all $y \in\mathbb{R}^{12}_{>0}$. Since the sixth entry of both $\beta_1$ and $\beta_2$ are zero and the exponent $(1, 1, 3, 1, 1, 1, 1, 0, 0, 1, 0, 0)$ corresponds to a positive monomial of $\tilde{g} \circ \tilde{\Phi}$, the last entry of $\nabla (\tilde{g} \circ \tilde{\Phi})(y)$ cannot be zero. Using Theorem \ref{Thm_reduction} we  conclude that:
\[ b_0(\Omega) \leq b_0((\tilde{g} \circ \tilde{\Phi})^{-1}(\mathbb{R}_{<0})) \leq b_0(h^{-1}(\mathbb{R}_{<0}) )\leq 2\]
as desired.

\section{Supporting information}
An accompanying Jupyter notebook contains the code of the algorithm, written in \texttt{SageMath} 9.2 \cite{sagemath}. 

\section*{Acknowledgments}
We thank Sebastian Manecke and Oskar Henriksson for useful discussions about extreme vectors, and Carsten Wiuf for comments on the manuscript. The authors acknowledge funding from the Independent Research Fund of Denmark.


\begin{thebibliography}{10}

\bibitem{SwitchLike}
Ozbudak EM, Thattai M, Lim HN, Shraiman BI, {van Oudenaarden} A.
\newblock Multistability in the lactose utilization network of Escherichia
  coli.
\newblock Nature. 2004;427:737–740.

\bibitem{CellSignaling}
Laurent M, Kellershohn N.
\newblock Multistability: a major means of differentiation and evolution in
  biological systems.
\newblock Trends Biochem Sci. 1999;24(11):418--22.

\bibitem{Tyson}
Tyson JJ, Chen KC, Novak B.
\newblock Sniffers, buzzers, toggles and blinkers: dynamics of regulatory and
  signaling pathways in the cell.
\newblock Curr Opin Cell Biol. 2003;15:221--231.

\bibitem{ParamGeo}
Nam KM, Gyori BM, Amethyst SV, Bates DJ, Gunawardena J.
\newblock Robustness and parameter geography in post-translational modification
  systems.
\newblock Plos Comput Biol. 2020;16(5):1--50.
\newblock doi:{10.1371/journal.pcbi.1007573}.

\bibitem{feinberg-balance}
Feinberg M.
\newblock {C}omplex Balancing in General Kinetic Systems.
\newblock Arch Rational Mech Anal. 1972;49:187--194.

\bibitem{hornjackson}
Horn FJM, Jackson R.
\newblock General mass action kinetics.
\newblock Arch Rational Mech Anal. 1972;47:81--116.

\bibitem{shiu:survey}
Joshi B, Shiu A.
\newblock A survey of methods for deciding whether a reaction network is
  multistationary.
\newblock Math Model Nat Phenom. 2015;10(5):47--67.
\newblock doi:{10.1051/mmnp/201510504}.

\bibitem{Sadeghimanesh:KacRice}
Feliu E, Sadeghimanesh A.
\newblock Kac-Rice formulas and the number of solutions of parametrized systems
  of polynomial equations.
\newblock Mathematics of Computation. 2022;91:2739--2769.

\bibitem{PLOS_IdParaRegions}
Conradi C, Feliu E, Mincheva M, Wiuf C.
\newblock Identifying parameter regions for multistationarity.
\newblock Plos Comput Biol. 2017;13(10):1--25.

\bibitem{bihan:regions}
Bihan F, Dickenstein A, Giaroli M.
\newblock {Sign conditions for the existence of at least one positive solution
  of a sparse polynomial system}.
\newblock Adv Math. 2020;375:107412.
\newblock doi:{10.1016/j.aim.2020.107412}.

\bibitem{conradi:cones}
Conradi C, Iosif A, Kahle T.
\newblock Multistationarity in the space of total concentrations for systems
  that admit a monomial parametrization.
\newblock Bull Math Biol. 2019;81(10):4174--4209.
\newblock doi:{10.1007/s11538-019-00639-4}.

\bibitem{OteroMuras:2012fn}
Otero-Muras I, Banga JR, Alonso AA.
\newblock {Characterizing Multistationarity Regimes in Biochemical Reaction
  Networks}.
\newblock PLoS ONE. 2012;7(7):e39194.
\newblock doi:{10.1371/journal.pone.0039194.t003}.

\bibitem{BRADFORD202084}
Bradford R, Davenport JH, England M, Errami H, Gerdt V, Grigoriev D, et~al.
\newblock Identifying the parametric occurrence of multiple steady states for
  some biological networks.
\newblock J Symb Comput. 2020;98:84--119.
\newblock doi:{https://doi.org/10.1016/j.jsc.2019.07.008}.

\bibitem{conradi-mincheva}
Conradi C, Mincheva M.
\newblock Catalytic constants enable the emergence of bistability in dual
  phosphorylation.
\newblock J R S Interface. 2014;11(95).
\newblock doi:{10.1098/rsif.2014.0158}.

\bibitem{Basu_book}
Basu S, Pollack R, Roy MF.
\newblock Algorithms in Real Algebraic Geometry (Algorithms and Computation in
  Mathematics).
\newblock Springer-Verlag; 2006.

\bibitem{MR1090179}
Benedetti R, Loeser F, Risler JJ.
\newblock Bounding the number of connected components of a real algebraic set.
\newblock Discrete Comput Geom. 1991;6(3):191--209.
\newblock doi:{10.1007/BF02574685}.

\bibitem{STRAUBE-Conradi}
Straube R, Conradi C.
\newblock Reciprocal enzyme regulation as a source of bistability in covalent
  modification cycles.
\newblock J Theor Biol. 2013;330:56--74.

\bibitem{DescartesHypPlane}
Feliu E, Telek ML.
\newblock On generalizing Descartes' rule of signs to hypersurfaces.
\newblock Adv Math. 2022;408(A).

\bibitem{SmallNetworks}
Tang X, Xu H.
\newblock Multistability of Small Reaction Networks.
\newblock SIAM J Appl Dyn Syst. 2021;20(2):608--635.
\newblock doi:{10.1137/20M1358761}.

\bibitem{ForInv_Volpert}
Vol'pert AI.
\newblock Differential equations on graphs.
\newblock Math USSR Sb. 1972;17(4):571–582.

\bibitem{BENISRAEL1964303}
Ben-Israel A.
\newblock Notes on linear inequalities, {I}: The intersection of the
  nonnegative orthant with complementary orthogonal subspaces.
\newblock J Math Anal Appl. 1964;9(2):303--314.

\bibitem{MultStrucRN}
Dickenstein A, Millán MP, Shiu A, Tang X.
\newblock Multistationarity in Structured Reaction Networks.
\newblock Bull Math Biol. 2019;81:1527–1581.

\bibitem{SiphonCrit}
Angeli D, Sontag E.
\newblock A Petri Net approach to the study of persistence in chemical reaction
  networks.
\newblock Math Biosci. 2008;210:598--618.
\newblock doi:{10.1016/j.mbs.2007.07.003}.


\bibitem{banaji-nauty}
Banaji M.
\newblock Counting chemical reaction networks with {NAUTY}.
\newblock arXiv 2017;1705.10820.
\newblock doi:{10.48550/arXiv.1705.10820}.



\bibitem{CatalystInterm}
{Marcondes de Freitas} M, Feliu E, Wiuf C.
\newblock Intermediates, catalysts, persistence, and boundary steady states.
\newblock J Math Biol. 2017;74:887–932.

\bibitem{Shiu-siphons}
Shiu A, Sturmfels B.
\newblock Siphons in chemical reaction networks.
\newblock Bull Math Biol. 2010;72(6):1448--1463.

\bibitem{SNA-Book}
Clarke BL.
\newblock In: Stability of Complex Reaction Networks. John Wiley and Sons, Ltd;
  1980. p. 1--215.

\bibitem{Conradi_HopfBifur}
Conradi C, Feliu E, Mincheva M.
\newblock On the existence of Hopf bifurcations in the sequential and
  distributive double phosphorylation cycle.
\newblock Math Biosci Eng. 2020;17(1):494–513.

\bibitem{OtherHopfBif}
Errami H, Eiswirth M, Grigoriev D, Seiler W, Sturm T, Weber A.
\newblock Detection of Hopf Bifurcations in Chemical Reaction Networks Using
  Convex Coordinates.
\newblock J Comput Phys. 2015;291:279--302.

\bibitem{domijan_bistability}
Domijan M, Kirkilionis M.
\newblock Bistability and oscillations in chemical reaction networks.
\newblock J Math Biol. 2009;59:467–501.
\newblock doi:{10.1007/s00285-008-0234-7}.

\bibitem{weber_hopf}
Errami H, Eiswirth M, Grigoriev D, Seiler WM, Sturm T, Weber A.
\newblock Detection of Hopf bifurcations in chemical reaction networks using
  convex coordinates.
\newblock J Comput Phys. 2015;291:279 -- 302.
\newblock doi:{10.1016/j.jcp.2015.02.050}.

\bibitem{Rockafellar}
Rockafellar RT.
\newblock Convex analysis.
\newblock Princeton University Press; 1972.

\bibitem{sagemath}
{The Sage Developers}. \texttt{{S}ageMath}, the {S}age {M}athematics {S}oftware
  {S}ystem ({V}ersion 9.2); 2021.

\bibitem{maple}
{Maplesoft, a division of Waterloo Maple Inc}. \texttt{Maple}; 2021.
\newblock https://www.maplesoft.com.

\bibitem{ToricSteadyStates}
Pérez~Millán M, Dickenstein A, Shiu A, Conradi C.
\newblock Chemical Reaction Systems with Toric Steady States.
\newblock Bull Math Biol. 2012;74:1027–1065.

\bibitem{MPTsystems}
Thomson M, Gunawardena J.
\newblock The rational parameterisation theorem for multisite
  post-translational modification systems.
\newblock J Theor Biol. 2009;261(4):626--636.

\bibitem{Basu_Survey}
Basu S.
\newblock Algorithms in Real Algebraic Geometry: A Survey.
\newblock Panor Synthèses. 2017;51:107--153.

\bibitem{Examples}
Feliu E, Wiuf C.
\newblock Enzyme-sharing as a cause of multi-stationarity in signalling
  systems.
\newblock J Roy Soc Interface. 2012;9(71):1224--1232.

\bibitem{Slepchenko}
Slepchenko BM, Terasaki M.
\newblock {{C}yclin aggregation and robustness of bio-switching}.
\newblock Mol Biol Cell. 2003;14(11):4695--4706.

\bibitem{tyson_cellcycle}
Tyson JJ, Chen K, Novak B.
\newblock {{N}etwork dynamics and cell physiology}.
\newblock Nat Rev Mol Cell Biol. 2001;2(12):908--916.

\bibitem{HHK_paper}
Kothamachu VB, Feliu E, Cardelli L, Soyer OS.
\newblock Unlimited multistability and Boolean logic in microbial signalling.
\newblock J Roy Soc Interface. 2015;12(108):215--234.

\bibitem{SH09}
Salazar C, Hofer T.
\newblock {{M}ultisite protein phosphorylation--from molecular mechanisms to
  kinetic models}.
\newblock FEBS J. 2009;276(12):3177--3198.

\bibitem{MultiSite_Phosph_Gunaw}
Thomson M, Gunawardena J.
\newblock Unlimited multistability in multisite phosphorylation systems.
\newblock Nature. 2009;406:274--277.

\bibitem{MultiSite_Phosph_Sontag}
Wang L, Sontag ED.
\newblock On the number of steady states in a multiple futile cycle.
\newblock J Math Biol. 2008;57:29--52.

\bibitem{MultiSite_Phosph_Dicken}
Giaroli M, Rischter R, Pérez~Millán M, Dickenstein A.
\newblock Parameter regions that give rise to 2[n/2] +1 positive steady states
  in the n-site phosphorylation system.
\newblock Math Biosci Eng. 2019;16(6):7589--7615.

\bibitem{fein-024}
Conradi C, Flockerzi D, Raisch J.
\newblock Multistationarity in the activation of a {M}{A}{P}{K}: parametrizing
  the relevant region in parameter space.
\newblock Math Biosc. 2008;211(1):105--131.

\bibitem{rendall_feliu}
Feliu E, Rendall AD, Wiuf C.
\newblock A proof of unlimited multistability for phosphorylation cycles.
\newblock Nonlinearity. 2020;33(11):5629--5658.
\newblock doi:{10.1088/1361-6544/ab9a1e}.

\bibitem{FHC14}
Flockerzi D, Holstein K, Conradi C.
\newblock {N-site Phosphorylation Systems with 2N-1 Steady States}.
\newblock Bull Math Biol. 2014;76(8):1892--1916.
\newblock doi:{10.1007/s11538-014-9984-0}.

\bibitem{MultDualPhos}
Feliu E, Kaihnsa N, {de Wolff} T, Y\"ur\"uck O.
\newblock The Kinetic Space of Multistationarity in Dual Phosphorylation.
\newblock J Dyn Differ Equ. 2022;34:825--852.

\bibitem{G-distributivity}
Gunawardena J.
\newblock {{D}istributivity and processivity in multisite phosphorylation can
  be distinguished through steady-state invariants}.
\newblock Biophys J. 2007;93:3828--3834.

\bibitem{G-PNAS}
Gunawardena J.
\newblock {{M}ultisite protein phosphorylation makes a good threshold but can
  be a poor switch}.
\newblock Proc Natl Acad Sci USA. 2005;102:14617--14622.

\bibitem{bihan-signs}
Bihan F, Dickenstein A, Giaroli M.
\newblock Sign conditions for the existence of at least one positive solution
  of a sparse polynomial system.
\newblock Adv Math. 2020;375:107412.
\newblock doi:{10.1016/j.aim.2020.107412}.

\bibitem{Multnsite}
Feliu E, Kaihnsa N, de~Wolff T, Yürük O.
\newblock Parameter region for multistationarity in $n-$site phosphorylation
  networks.
\newblock SIAM J Appl Dyn Syst. 2023;To appear.

\bibitem{pmid11116185}
Cohen P.
\newblock {{T}he regulation of protein function by multisite phosphorylation--a
  25 year update}.
\newblock Trends Biochem Sci. 2000;25(12):596--601.

\bibitem{pmid20186153}
Xiao L, Gong LL, Yuan D, Deng M, Zeng XM, Chen LL, et~al.
\newblock {{P}rotein phosphatase-1 regulates {A}kt1 signal transduction pathway
  to control gene expression, cell survival and differentiation}.
\newblock Cell Death Differ. 2010;17(9):1448--1462.

\bibitem{Markevich-mapk}
Markevich NI, Hoek JB, Kholodenko BN.
\newblock {{S}ignaling switches and bistability arising from multisite
  phosphorylation in protein kinase cascades}.
\newblock J Cell Biol. 2004;164:353--359.

\bibitem{pmid11432864}
Zhao Y, Zhang ZY.
\newblock {{T}he mechanism of dephosphorylation of extracellular
  signal-regulated kinase 2 by mitogen-activated protein kinase phosphatase 3}.
\newblock J Biol Chem. 2001;276(34):32382--32391.

\bibitem{SHAUL}
Shaul YD, Seger R.
\newblock The MEK/ERK cascade: From signaling specificity to diverse functions.
\newblock BBA-Mol Cell Res. 2007;1773(8):1213--1226.
\newblock doi:{10.1016/j.bbamcr.2006.10.005}.

\bibitem{FUTRAN}
Futran AS, Link AJ, Seger R, Shvartsman SY.
\newblock ERK as a model for systems biology of enzyme kinetics in cells.
\newblock Curr Biol. 2013;23(21):R972--R979.

\bibitem{RUBINSTEIN}
Rubinstein BY, Mattingly HH, Berezhkovskii AM, Shvartsman SY.
\newblock Long-term dynamics of multisite phosphorylation.
\newblock Mol Biol Cell. 2016;27(14):2331--2340.

\bibitem{OBATAKE}
Obatake N, Shiu A, Tang X, A T.
\newblock Oscillations and bistability in a model of ERK regulation.
\newblock J Math Biol. 2019;79:1515–1549.

\bibitem{CONRADI}
Conradi C, Obatake N, Shiu A, Tang X.
\newblock Dynamics of ERK regulation in the processive limit.
\newblock J Math Biol. 2021;82(32).
\newblock doi:{10.1007/s00285-021-01574-6}.

\bibitem{Huang-Ferrell}
Huang CY, Ferrell JE.
\newblock {{U}ltrasensitivity in the mitogen-activated protein kinase cascade}.
\newblock Proc Natl Acad Sci USA. 1996;93:10078--10083.

\bibitem{SymbolicCRN.jl}
Brustenga L. Package ``SymbolicCRN.jl'' for reaction networks in
  \texttt{Julia}; 2021.
\newblock Available from: \url{https://github.com/LauraBMo/SymbolicCRN.jl}.

\bibitem{pmid27927198}
Leon M, Woods ML, Fedorec AJ, Barnes CP.
\newblock {{A} computational method for the investigation of multistable
  systems and its application to genetic switches}.
\newblock BMC Syst Biol. 2016;10(1):130.

\bibitem{otero:global}
Otero-Muras I, Banga JR.
\newblock {Multicriteria global optimization for biocircuit design}.
\newblock BMC Syst Biol. 2014;8(1):113.
\newblock doi:{10.1186/s12918-014-0113-3}.

\bibitem{polymake:2000}
Gawrilow E, Joswig M.
\newblock {\tt Polymake}: a framework for analyzing convex polytopes.
\newblock In: Polytopes---combinatorics and computation ({O}berwolfach, 1997).
  vol.~29 of DMV Sem. Birkh\"auser, Basel; 2000. p. 43--73.

\bibitem{polymake}
Polymake;
\newblock Available from: \url{https://polymake.org}.

\bibitem{oscar}
The {O}scar Computer Algebra System for \texttt{{J}ulia};
\newblock Available from: \url{https://oscar.computeralgebra.de/}.

\bibitem{kappler_metabolic}
Klapper I, Szyld DB, Zhao K.
\newblock Metabolic Networks, Elementary Flux Modes, and Polyhedral Cones.
\newblock Philadelphia, PA: Society for Industrial and Applied Mathematics;
  2021.
\newblock Available from:
  \url{https://epubs.siam.org/doi/abs/10.1137/1.9781611976533}.

\bibitem{InteriorOfCone}
Gerstenhaber M.
\newblock Theory of convex polyhedral cones.
\newblock In: Koopmans TC, editor. Activity analysis of production and
  allocation. Wiley; 1951. p. 298--317.

\bibitem{MuellerReg_ExtremeVectors}
Müller S, Regensburger G.
\newblock Elementary Vectors and Conformal Sums in Polyhedral Geometry and
  their Relevance for Metabolic Pathway Analysis.
\newblock Front Genet. 2016;7:90.

\bibitem{ElementaryModes}
Gagneur J, Klamt S.
\newblock Computation of elementary modes: A unifying framework and the new
  binary approach.
\newblock BMC bioinform. 2004;5:175.
\newblock doi:{10.1186/1471-2105-5-175}.

\bibitem{DoubleDescrMethod}
Fukuda K, Prodon A.
\newblock Double description method revisited.
\newblock In: Deza M, Euler R, Manoussakis I, editors. Combinatorics and
  Computer Science. Berlin, Heidelberg: Springer Berlin Heidelberg; 1996. p.
  91--111.

\bibitem{singh2019introduction}
Singh TB.
\newblock Introduction to Topology.
\newblock Springer Singapore; 2019.

\bibitem{Lojasiewicz1995}
Łojasiewicz S.
\newblock On semi-analytic and subanalytic geometry.
\newblock Banach Cent Publ. 1995;34(1):89--104.

\bibitem{IntermRed}
Feliu E, Wiuf C.
\newblock Simplifying biochemical models with intermediate species.
\newblock J R Soc Interface. 2013;10(87):20130484.
\newblock doi:{10.1098/rsif.2013.0484}.

\end{thebibliography}
\end{document}